\newcommand{\maxwind}{maximum weight common independent set}
\newcommand{\cind}{common independent set}
\newcommand{\cI}{\ensuremath{\mathcal I}}
\newcommand{\kms}{\mathcal{N}}
\newcommand{\nearoptone}{{\sc binary-pop-near-opt}}
\newcommand{\nearopt}{{\sc pop-near-opt}}
\newcommand{\exm}{{\sc exact-matching}}
\newcommand{\Mk}{{\mathcal{M}_{k'}}}
\newcommand{\Nk}{{\mathcal{N}_{k}}}
\newcommand{\Ncal}{{\mathcal{N}}}
\newcommand{\Eblue}{E'_{\sf blue}}
\newcommand{\Ered}{E'_{\sf red}}
\newcommand{\exmw}{{\sc opt-exact-matching}}
\newcommand{\kpm}{\textsc{unary-pop-near-opt}}
\newcommand{\expm}{\textsc{exact-size-pop}}
\newcommand{\kpop}{$k$-popular matching}
\theoremstyle{plain}
\newtheorem{thm}{Theorem}
\newtheorem{theorem}[thm]{Theorem}
\newtheorem{corollary}[thm]{Corollary}
\newtheorem{lemma}[thm]{Lemma}
\newtheorem{claim}[thm]{Claim}
\theoremstyle{definition}
\newtheorem*{definition}{Definition}
\newtheorem{remark}[thm]{Remark}
\newtheorem*{problem}{Problem}
\newcommand{\vote}{{\sf vote}}
\newcommand{\score}{{\sf score}}
\newcommand{\opt}{{\sf opt}}
\newcommand{\Z}{\mathbb{Z}}
\newcommand{\Zp}{{\mathbb{Z}}_+}
\newcommand{\R}{\mathbb{R}}
\newcommand{\lev}{{\sf lev}}
\newcommand{\I}{\mathcal{I}}
\newcommand{\J}{\mathcal{J}}
\newcommand{\C}{\mathcal{C}}
\newcommand{\B}{\mathcal{B}}
\title{Popular Maximum-Utility Matchings with Matroid Constraints} 
\author{
Gergely Csáji\thanks{MTA-ELTE Momentum Matroid Optimization Research Group, Department of Operations Research, Eötvös Loránd University, Budapest, Hungary. Email: \texttt{csajigergely@student.elte.hu} and HUN-REN Centre for Economic and Regional Studies, Hungary.  Email: \texttt{csaji.gergely@krtk.hun-ren.hu}}\and
Tamás Király\thanks{HUN-REN--ELTE Egerv\'ary Research Group, Department of Operations Research, E\"otv\"os Loránd University, Budapest, Hungary. Email: \texttt{tamas.kiraly@ttk.elte.hu}}
\and
Kenjiro Takazawa%
\thanks{Department of Industrial and Systems Engineering, Faculty of Science and Engineering, Hosei University, Tokyo 184-8584, Japan. Email: \texttt{takazawa@hosei.ac.jp}}\and 
Yu Yokoi%
\thanks{Department of Mathematical and Computing Science, School of Computing, Tokyo Institute of Technology, Tokyo 152-8552, Japan. Email: \texttt{yokoi@c.titech.ac.jp}}
}
\begin{document}
\maketitle

\begin{abstract}
We investigate weighted settings of popular matching problems with matroid constraints. 
The concept of {\em popularity} was originally defined for matchings in bipartite graphs, where vertices have preferences over the incident edges. There are two standard models depending on whether vertices on one or both sides have preferences. A matching $M$ is popular if it does not lose a head-to-head election against any other matching. 
In our generalized models, one or both sides have matroid constraints, and a weight function is defined on the ground set. 
Our objective is to find a popular optimal matching, i.e., a maximum-weight matching that is popular among all maximum-weight matchings satisfying the matroid constraints.
For both one- and two-sided preferences models, we provide efficient algorithms to find such solutions, combining algorithms for unweighted models with fundamental techniques from combinatorial optimization. 
The algorithm  for the one-sided preferences model is further extended to a model where the weight function is generalized to an M$^\natural$-concave utility function.
Finally, we complement these tractability results by providing hardness results for the problems of finding a popular near-optimal matching. These hardness results hold even without matroid constraints and with very restricted weight functions.
\end{abstract}

\section{Introduction}
The study of \emph{popular matchings} is a relatively new topic at the intersection of algorithmic game theory, operations research, and economics. It examines the concept of weak Condorcet winner \cite{nicolas1785essai, condorcet} in the context of matching under preferences. A matching is called {\em popular} if it does not lose a head-to-head election against any other matching. One notable feature of popular matchings is their close relationship to \emph{stable matchings}: in one of the most basic models, a stable matching is a popular matching of minimum size. In this sense, popular matchings can be regarded as a relaxation of stable matchings that may match more agents, while preserving a global stability with respect to the preferences. 

%From the viewpoint of theoretical computer science and combinatorial optimization, questions of interest include the characterization of existence and polynomial-time computability. For stable matchings, many results on these questions have been successfully extended to matroidal generalizations \cite{fleiner2001matroid, fleiner2003fixed,  IW20, MY15, KTY18, FK16, Yokoi17}.  These extensions offer larger classes of combinatorial optimization problems, with broader range of applications, that can be solved efficiently. 

The theory of popular matchings is currently developing from the perspectives of theoretical computer science and combinatorial optimization. This paper aims to contribute to that development by investigating the limits of tractable generalizations of popular matchings and enhancing their applicability to practical problems. Specifically, we address \emph{popular matching problems with matroid constraints}, which generalize popular matchings in the same way as common independent sets of two matroids (i.e., matroid intersection) generalize bipartite matchings.  
This generalization unifies various previous models and extends the range of possible applications, such as matching problems with diversity constraints \cite{ehlers2014school} and distributional (regional) constraints \cite{kamada2015efficient}. See \cite{zhang2023reallocation} and \cite{KTY18} (and its online appendix) for lists of matroid constraints that arise naturally in real allocation and matching problems. Recent progress on the study of popularity includes polynomial-time algorithms for finding popular solutions subject to matroid constraints \cite{KMSY24, Kam17, kamiyama2020popular, csaji2022solving}. 

In some practical applications, certain aspects of the solutions can take priority over the preferences of the agents. For example, in dormitory reallocation problems, senior students who already have their rooms must be assigned new rooms which are at least as good as their previous ones (i.e., ``individual rationality'' in \cite{abdulkadirouglu1999house}). In company staff reshuffles, the primary objective is to maximize the total profit of the company, with the preferences of the workers and the departments considered secondary. 

In this paper, we address generalized models of popular matchings with matroid constraints which can represent these scenarios. We represent the above scenarios by appropriately defining the weights of the solutions. In our models, the candidate solutions are those of maximum-weight, and the objective is to find a popular one among them. 
Our main technical contribution is the development of polynomial-time algorithms in these models, which are designed by combining algorithms for unweighted models with fundamental techniques from combinatorial optimization. Note that our models are proper generalizations of unweighted models, which are special cases with all weights set to zero. Additionally, we provide some hardness results of more general problems, complementing our tractability results and clarifying the limits of tractable generalizations. 

\subsection{Previous Models}
\label{sec:previous}
In general, popular matchings are defined in bipartite graphs and have two models\footnote{Popularity is also considered in non-bipartite graphs \cite{gupta2019popular,faenza2019popular}, but that is outside the scope of this paper}. One model is the \emph{one-sided preferences model}, where only one side of the bipartite vertex set has preferences, and the other model is the \emph{two-sided preferences model}, where both sides of the vertex set have preferences. 

\paragraph*{Models on bipartite graphs.}
The one-sided preferences model of popular matchings is defined as follows. 
Let $G=(A,B;S)$ be a bipartite graph with vertex set $A\cup B$ and edge set $S$, 
where each edge in $S$ connects a vertex in $A$ and one in $B$. 
For a vertex $i \in A \cup B$, let $S (i)\subseteq S$ denote the set of the edges incident to $i$. An edge subset $M\subseteq S$ is a \emph{matching} 
if $|M \cap S(i)|\le 1$ for each $i\in A \cup B$. 
In the one-sided preferences model, each vertex $i\in A$ represents an agent who has preferences over $S(i)\cup\{\emptyset\}$, in which $\emptyset$ is the least preferred. 
This means that assigning an arbitrary element is more preferred than assigning no element. 
Let an order $\succ_i$ on $S(i) \cup \{\emptyset\}$ 
represent the preferences of an agent $i\in A$, where $\succ_i$ is either a total order, a weak order, or a partial order. (The specific type of orders will be clarified when we describe previous/our results.)
For a matching $M\subseteq S$ and a vertex $i\in A \cup B$, 
let $M(i)$ denote the unique edge $M \cap S(i)$, if it exists. 
For convenience, 
if $M \cap S(i)= \emptyset$, 
then let $M(i)$ represent $\emptyset$. 
For two matchings $M,N\subseteq S$,
define $\Delta(M,N)\in \Z$ by 
\begin{align*}
%\label{EQonesidedpm}
\Delta(M,N) = |\{\, i\in A  \colon M(i) \succ_i N(i) \,\}| - |\{\, i\in A  \colon N(i) \succ_i M(i) \,\}|.
\end{align*}
A matching $M$ is called  a \emph{popular matching} if 
$\Delta(M,N)\ge 0$ for each matching $N$ in $G$. 

The two-sided preferences model is defined in the same way. 
The difference from the one-sided preferences model is that 
each vertex $i$ in both $A$ and $B$ has preferences over $S(i)\cup \{\emptyset\}$, 
and 
the definition of $\Delta(M,N)$ above is replaced with 
\begin{align*}
%\label{EQtwosidedpm}
\Delta(M,N) = |\{\, i\in A \cup B  \colon M(i) \succ_i N(i) \,\}| - |\{\, i\in A \cup B  \colon N(i) \succ_i M(i) \,\}|.
\end{align*}

In the one-sided preferences model, not all instances admit popular matchings. Abraham et al.\ \cite{AIKM07} provided an efficient algorithm to determine the existence of a popular matching for preference lists with ties. This tractability is extended to partial order preferences in \cite{kavitha2021maximum}.

In the two-sided preferences model, every instance admits a popular matching if preferences are total orders. This is because any stable matching is inherently popular \cite{gardenfors1975match}. However, if ties are allowed, the existence of a popular matching is not guaranteed, and determining its existence is NP-hard \cite{biro2010popular}. The algorithmic research of popular matchings in the two-sided preferences model has become vibrant since Huang and Kavitha \cite{huang2011popular} proved the tractability of the maximum popular matching problem.

\paragraph*{Matroid generalizations.}
Recall that bipartite matching is a special case of matroid intersection. 
Both of the aforementioned two popular matching models have been generalized to the models with matroid constraints. 

Here we describe the generalization of the one-sided preferences model. 
For a positive integer $k$, we denote $[k]=\{1,2,\dots, k\}$. 
Let $\{S_1, S_2, \dots, S_n\}$ be a partition of a finite set $S$ and $M_1$ be a $1$-partition matroid defined by this partition. That is, $M_1=(S, \cI_1)$ is a matroid with ground set $S$ and independent set family $\cI_1\subseteq 2^S$ defined by $\cI_1=\{\, I\subseteq S: |I\cap S_i|\leq 1\ (i\in [n])\}$ (see Section \ref{sec:matroids} for the definition of matroids).

Each index $i\in [n]$ represents an agent, 
and has an order $\succ_i$ on $S_i\cup \{\emptyset\}$ satisfying $u\succ_i \emptyset$ for each element $u\in S_i$. 
Additionally, we have another matroid $M_2=(S,\cI_2)$, which can be an arbitrary matroid and has no associated orders. 
A set $I\in \cI_1 \cap \cI_2$ is referred to as a \emph{common independent set} of $M_1$ and $M_2$. 

The popularity of common independent sets is defined similarly to that of popular matchings. 
For a common independent set $I\in \cI_1\cap \cI_2$ and an agent $i\in[n]$, let $I(i)$ denote the unique element in $I\cap S_i$ if it exists, and $\emptyset$ otherwise. Given any pair of common independent sets $I, J\in \cI_1\cap \cI_2$, define 
$\Delta(I,J) \in \Z$ by 
%we say that an agent $i$ prefers $I$ to $J$ if $I(i)\succ_i J(i)$, where any element $u\in S_i$ is regarded to satisfy $u\succ_i \emptyset$. Let 
%%\[\Delta(I,J)=|\{\, i\in [k]: I(i)\succ_i J(i)\,\}|-|\{\, i\in [k]: J(i)\succ_i I(i)\,\}|.\] 
\begin{align*}
\Delta(I,J)=|\{\, i\in [n]: I(i)\succ_i J(i)\,\}|-|\{\, i\in [n]: J(i)\succ_i I(i)\,\}|.
\end{align*} 
A common independent set $I\in \cI_1\cap \cI_2$ is {\em popular} if $\Delta(I, J)\geq 0$ holds for every common independent set $J\in \cI_1\cap\cI_2$. 

It was shown by Kavitha et al.~\cite{KMSY24} that one can determine the existence of a popular common independent set even for partial order preferences. This result is a common generalization of various previously known tractability results on popular matchings \cite{AIKM07}, popular branchings \cite{KKMSS20}, and on popular matchings with matroid constraints \cite{Kam17}. 

In the two-sided preferences model with total orders, the structural and tractability results have been extended to a general model with matroid constraints. Kamiyama \cite{kamiyama2020popular} introduced the concept of popularity on matroid intersection and proved that a stable common independent set (i.e., \emph{matroid kernel} \cite{fleiner2001matroid,fleiner2003fixed}, defined in Section~\ref{sec:prelimi2}) is popular. Since the definition of popularity in general matroid intersection is not so trivial, we defer it to Section~\ref{sec:prelimi2}.  
Intuitively, it represents popularity in a many-to-many matching model 
where each voter has a matroid constraint and casts multiple votes while respecting that constraint. 
%To this setting, the tractability of the maximum popular matching problem extends \cite{kamiyama2020popular,csaji2022solving}. 
%These previous works on the two-sided preferences model consider total order preferences; if ties are allowed, computation of a popular matching becomes NP-hard even in the simple bipartite matching case~\cite{BB20}.

\subsection{Our Contributions}
\label{sec:contribution}

For both the one- and two-sided preferences models, 
we introduce \emph{weights} to common independent sets  
and 
address the problem of finding a maximum-weight common independent set 
that is popular among all common independent sets with maximum weight. 
Throughout the paper, we assume that independence oracles of matroids are available.

Our problem for the one-sided preferences model is described as follows. 
Again, 
we are given a 1-partition matroid $M_1=(S,\cI_1)$ defined by a partition $\{S_1,S_2,\ldots, S_n\}$ of $S$, 
another matroid $M_2=(S,\cI_2)$, 
and 
a partial order $\succ_i$ on $S_i \cup \{\emptyset\}$ for each $i\in [n]$, where $\emptyset$ is the least preferred.
In this model, we assume that 
the orders are \emph{partial orders}. 
A \emph{partial order} is defined as an irreflexive, asymmetric, and transitive binary relation. 
In addition, now a weight function $w\colon S \to \R$ is defined on the ground set $S$. 
For a common independent set $I\subseteq S$, 
its weight $w(I)$ is defined as 
$w(I)= \sum_{u\in I}w(u)$. 
Let $\opt(w)$ denote the maximum weight of a common independent set, i.e., $\opt(w)=\max\{\,w(I): I\in \cI_1\cap \cI_2\,\}$. 
%The objective of the problem is to find a maximum-weight common independent set 
% which is popular among all of the maximum-weight common independent sets. 
%The {\em popular maximum-weight common independent set problem in the one-sided preferences model} (\popoptone, for short) asks to find a popular solution among the set of maximum-weight common independent sets.

\begin{definition}%[The popular maximum-weight common independent set problem in the one-sided preferences model]
A common independent set $I \in \cI_1\cap \cI_2$ is called a {\em popular maximum-weight common independent set} if $w(I) = \opt(w)$ and $\Delta(I,J)\ge 0$ for any $J\in \cI_1\cap \cI_2$ with $w(J)=\opt(w)$.
%, or report the non-existence of such $I$.
\end{definition}

By appropriately setting the weight function, popular maximum-weight common independent sets can describe previously investigated solution concepts, such as popular common independent sets \cite{Kam17} ($w(u)=0$ for all $u\in S$) and popular common bases~\cite{KMSY24} ($w(u)=1$ for all $u\in S$). This fact implies that there are instances that admit no popular maximum-weight common independent sets. The {\em popular maximum-weight common independent set problem in the one-sided preferences model} asks to determine the existence of a solution and to find one if it exists.
Our first technical contribution is a polynomial-time algorithm to solve this problem with general weight functions. 

\begin{theorem}[Tractability in the weighted model with one-sided preferences]
\label{thm:weighted}
%\popoptone\ is tractable. That is, 
Given a $1$-partition matroid $M_1=(S, \cI_1)$ associated with partial orders $\{\succ_i\}_{i\in [n]}$, an arbitrary matroid $M_2=(S, \cI_2)$, and a weight function $w\colon S \to \R$, one can determine the existence of a popular maximum-weight common independent set and find one if it exists in polynomial time.
\end{theorem}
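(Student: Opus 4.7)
The plan is to reduce the weighted problem to the unweighted popular common independent set problem, which is solved in polynomial time by Kavitha et al.~\cite{KMSY24}, via a weight-splitting strategy borrowed from weighted matroid intersection. Concretely, I would first run the weighted matroid intersection algorithm on $(M_1,M_2,w)$ to compute $W^\ast=\opt(w)$ together with a weight splitting $w=w_1+w_2$. By the classical complementary-slackness characterization of weighted matroid intersection, a common independent set $I\in \cI_1\cap \cI_2$ satisfies $w(I)=W^\ast$ if and only if $I$ is simultaneously a maximum-$w_1$-weight independent set of $M_1$ and a maximum-$w_2$-weight independent set of $M_2$.

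The second step is to build a reduced instance $(\tilde M_1,\tilde M_2)$ together with a set $F\subseteq S$ of forced elements so that the common independent sets $J$ of $(\tilde M_1,\tilde M_2)$, identified with $F\cup J$, form exactly the maximum-weight common independent sets of $(M_1,M_2,w)$. For the $1$-partition matroid $M_1$, maximum-$w_1$-weight independent sets select one $w_1$-maximizer from each part $S_i$ with $\max_{u\in S_i}w_1(u)>0$ and optionally a zero-weight element from each $S_i$ with $\max_{u\in S_i}w_1(u)=0$; accordingly, $\tilde M_1$ is defined as the $1$-partition matroid whose parts are the $w_1$-argmax subsets $S_i^\ast\subseteq S_i$ (empty when the maximum is strictly negative). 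For $M_2$, I apply the contractions and restrictions dictated by the principal-sequence decomposition of $M_2$ with respect to $w_2$: at each weight level, elements forced into every max-$w_2$-weight independent set are contracted into $F$ and elements forbidden from every such set are deleted. The partial-order preferences $\succ_i$ restrict naturally to $S_i^\ast\cup\{\emptyset\}$, and because elements of $F$ contribute identically to both sides of every $\Delta(I,J)$, popularity comparisons are preserved under the reduction.

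Finally, I invoke the algorithm of \cite{KMSY24} on $(\tilde M_1,\tilde M_2)$ with the restricted preferences: it either returns a popular common independent set -- which, after rejoining $F$, lifts to a popular maximum-weight common independent set of the original instance -- or certifies nonexistence, settling both the decision and search variants in polynomial time (the independence oracles for $\tilde M_1,\tilde M_2$ are polynomial-time computable from those of $M_1,M_2$ and the weight splitting). The main obstacle lies in the second step, namely encoding the implicit side condition ``select exactly one element of $S_i^\ast$ whenever $\max_{u\in S_i}w_1(u)>0$'' through independence in $\tilde M_2$, since max-weight independent sets of an arbitrary matroid do not themselves form a matroid's independent set family when ties are present. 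The principal-sequence construction must therefore be carried out level by level; the hypothesis $u\succ_i\emptyset$ for every $u\in S_i$ is then invoked to argue that any popular maximum-weight set is inclusion-maximal -- adding an ``optional'' zero-weight element produces a max-weight set whose owner's vote flips from $\emptyset$ to that element, giving a strictly positive $\Delta$ against the original -- which prevents pathological non-maximal common independent sets of the reduced instance from being returned as popular.
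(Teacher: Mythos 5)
Your first step is sound: Frank-type weight splitting does yield $w_1,w_2$ with $w=w_1+w_2$ such that the maximum-weight common independent sets are exactly $\arg\max_{\cI_1}(w_1)\cap\arg\max_{\cI_2}(w_2)$, and this is essentially the route the paper takes for the more general Theorem~\ref{thm:M-conc} (via Lemma~\ref{LEMsplit}, with $w_1=p$ and $w_2=w-p$); for Theorem~\ref{thm:weighted} itself the paper instead uses LP duality and a dual optimal solution with chain support, which gives the same characterization. The gap is in your second step. The family $\arg\max_{\cI_2}(w_2)$ is \emph{not} ``forced elements plus the independent sets of a minor'': writing $L_j=\{u:w_2(u)\ge\lambda_j\}$ for the positive levels $\lambda_1>\dots>\lambda_t>0$ and $r_2$ for the rank function of $M_2$, a set $I\in\cI_2$ maximizes $w_2$ iff it avoids negative-weight elements \emph{and} satisfies $|I\cap L_j|=r_2(L_j)$ for every $j$. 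These spanning conditions are upward constraints that survive any contraction of coloops and deletion of forbidden elements, so the feasible family of your reduced instance $(\tilde M_1,\tilde M_2)$ strictly contains (the image of) the maximum-weight sets; the same applies on the $M_1$ side to the condition ``exactly one element of $S_i^{\ast}$ whenever $\max_{u\in S_i}w_1(u)>0$.'' This is precisely why the paper reduces to the popular common \emph{base} problem rather than the popular common independent set problem: the level conditions are encoded by taking the direct sum of the minors of $M_2$ along the chain (plus a truncation and dummy elements $d_i$ standing for $\emptyset$), so that the common bases of the auxiliary matroids are in weight- and vote-preserving bijection with the maximum-weight common independent sets.

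Your proposed repair --- that any popular set of the reduced instance must be inclusion-maximal because $u\succ_i\emptyset$ --- only controls one direction. At best it shows that whatever the unweighted algorithm returns is maximal, hence plausibly maximum-weight. It does not show that a popular maximum-weight common independent set $I$ of the original instance remains popular in the reduced instance, whose election now includes non-maximum-weight competitors $J$: such a $J$ can leave one agent unassigned while reallocating capacity so that two other agents improve, giving $\Delta(I,J)=-1$, and the algorithm of~\cite{KMSY24} run on your instance would then wrongly report that no popular solution exists. A correct reduction must produce a feasible family \emph{equal} to the set of maximum-weight common independent sets, which forces the common-base (or an equivalent exact) encoding.
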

We remark that 
%%it is impossible 
there is little hope 
to extend this result to the model where not only $M_2$ but also $M_1$ is an arbitrary matroid, because it is NP-hard to determine the existence of a popular $b$-matching (i.e.,\ intersection of two arbitrary partition matroids) \cite{paluch2014popular, csaji2024popularity}.

We then extend the weight functions to a broader class of utility functions. 
We address \emph{M$^\natural$-concave utility functions}, 
a primary class of discrete concave functions \cite{Mbook}. See Section~\ref{sec:matroids} for a precise definition.
Since it was shown in~\cite{FY03} that M$^\natural$-concavity is equivalent to the Kelso--Crawford {\em gross substitute condition}~\cite{kelso1982job}, 
M$^\natural$-concave functions have been studied in the context of two-sided markets \cite{fujishige2007two,murota2015lattice}. 
Two-sided markets in which one side has preferences and the other has an M$^\natural$-concave function are studied in \cite{KTY18}.

The \emph{popular maximum-utility common independent set problem} is formulated by replacing the weight function $w\colon S\to \R$ in the popular maximum-weight common independent set problem with an M$^\natural$-concave function $f \colon \I_2\to \R$. 
Let $\opt(f)$ denote the maximum utility of a common independent set, i.e., $\opt(f)=\max\{\,f(I): I\in \cI_1\cap \cI_2\,\}$.  

\begin{definition}%[The popular maximum-utility common independent set problem in the one-sided preferences model]
A common independent set $I \in \cI_1\cap \cI_2$ is called a {\em popular maximum-utility common independent set} if $f(I) = \opt(f)$ and 
$\Delta(I,J)\ge 0$ for any $J\in \cI_1\cap \cI_2$ with $f(J)=\opt(f)$.
\end{definition}
Note that this is a proper generalization of a maximum-weight common independent set, because a modular function on the independent set family of a matroid is M$^\natural$-concave. We devise a polynomial-time algorithm for this generalized problem utilizing structural results known for M$^\natural$-convex functions. (Though Theorem~\ref{thm:weighted} follows from this generalized result, we provide a separate proof for the weighted case as it relies on more basic tools and may be more accessible.)

\begin{theorem}[Tractability in the M$^\natural$-concave model with one-sided preferences]
\label{thm:M-conc}
%\popoptoneM\ is tractable.That is, 
Given a $1$-partition matroid $M_1=(S, \cI_1)$ associated with partial orders $\{\succ_i\}_{i\in [n]}$, an arbitrary matroid $M_2=(S, \cI_2)$, and an M$^\natural$-concave function $f\colon \cI_2 \to \R$, one can determine the existence of a popular maximum-utility common independent set and find one if it exists in polynomial time.
\end{theorem}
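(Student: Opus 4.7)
The plan is to extend the Lagrangian-style reduction used in the proof of Theorem~\ref{thm:weighted} from linear weights to M$^\natural$-concave utilities, exploiting the valuated matroid intersection framework developed by Murota and Tamura. Concretely, I would proceed in four steps, each of which reduces the problem one layer closer to the unweighted popular common independent set problem of Kavitha et al.~\cite{KMSY24}.

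First, I would invoke a polynomial-time algorithm for maximizing an M$^\natural$-concave function over a matroid intersection (so-called valuated matroid intersection) to compute $\opt(f)$ and, alongside it, a dual potential vector $p^\ast \colon S \to \R$ certifying optimality via the duality
\[
\opt(f) \;=\; \max_{I \in \cI_1} \bigl(-p^\ast(I)\bigr) \;+\; \max_{I \in \cI_2} \bigl(f(I) + p^\ast(I)\bigr).
\]
By the duality theorem for valuated matroid intersection, a common independent set $I \in \cI_1 \cap \cI_2$ satisfies $f(I) = \opt(f)$ if and only if $I$ simultaneously maximizes $-p^\ast$ over $\cI_1$ and $f + p^\ast$ over $\cI_2$. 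This decouples the optimality constraint into one condition per matroid.

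Second, I would analyze the two optimizer sets separately. The maximizers of the modular function $-p^\ast$ over the 1-partition matroid $M_1$ can be described coordinatewise: within each $S_j$, only the elements of minimum $p^\ast$-value are admissible, with inclusion becoming forced when that minimum is strictly negative. This yields a restricted 1-partition structure $M_1^\ast$ that preserves the partition $\{S_1,\dots,S_n\}$ and therefore the agent-wise preferences $\succ_i$. On the other side, the maximizers of the M$^\natural$-concave function $f + p^\ast$ over $\cI_2$ form an M$^\natural$-convex family, which by standard structural results for M$^\natural$-convex sets can be represented as the independent set family of a matroid $M_2^\ast$ on $S$ (introducing at most one auxiliary dummy element if required to convert a generalized matroid into a proper matroid).

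Third, I would combine the two characterizations so that $\cI_1^\ast \cap \cI_2^\ast$ is exactly the set of maximum-utility common independent sets in the original instance. Since $M_1^\ast$ retains the agent partition structure (with forced elements encoded by standard contraction and with the preferences $\succ_i$ transferring verbatim, possibly by pruning $\emptyset$ from the list of any agent whose slot is forced), the resulting triple $(M_1^\ast, M_2^\ast, \{\succ_i\})$ is a valid input to the unweighted one-sided preferences algorithm. Applying that algorithm either returns a popular common independent set of $(M_1^\ast, M_2^\ast)$, which is then a popular maximum-utility common independent set of the original instance, or certifies that none exists.

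The main obstacle will be Step two: rigorously verifying that the M$^\natural$-convex $(f+p^\ast)$-maximizer family admits a matroid representation on the original ground set without disturbing the partition $\{S_1,\dots,S_n\}$, and that the forced-inclusion features produced on the $M_1$ side can be encoded within the existing one-sided preferences framework. Translating these structural statements into polynomial-time constructions that are accessible through the standard matroid oracles, and confirming that the reductions preserve popularity exactly, constitutes the technical heart of the argument.
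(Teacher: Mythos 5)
Your overall strategy---split the optimality condition via a potential $p^\ast$ so that the maximum-utility sets become the intersection of the maximizers of $-p^\ast$ over $\cI_1$ with the maximizers of $f+p^\ast$ over $\cI_2$, encode each factor matroidally, and hand the result to the unweighted popular-solution algorithm---is exactly the route the paper takes: it applies Murota's weight-splitting theorem (Lemma~\ref{LEMsplit}) to the pair $(\delta_{\cI_1}, f)$, where $\delta_{\cI_1}$ is the zero function on $\cI_1$, which is your duality statement up to a sign. Your description of the $M_1$-side maximizers (argmin of $p^\ast$ within each $S_j$, forced when the minimum is negative) also matches the paper's construction of the parts $S'_i$. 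The genuine gap is in Step two, and it propagates into Step three.

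The maximizers of $f+p^\ast$ over $\cI_2$ form an M$^\natural$-convex family, i.e., a generalized matroid, and such a family is \emph{not} hereditary; it therefore cannot be ``the independent set family of a matroid,'' and a fortiori the set of maximum-utility common independent sets cannot be written as $\cI_1^\ast\cap\cI_2^\ast$ for two independent-set families, since any such intersection contains $\emptyset$. Moreover, ``at most one auxiliary dummy element'' does not suffice: turning a generalized matroid into a matroid \emph{base} family requires padding every member up to a common cardinality, so the number of dummies needed equals the spread of cardinalities in the family, which can be as large as $n$. The paper's resolution is to introduce one dummy $d_i$ per agent with $p^\star_i\le 0$, place it inside that agent's part $S'_i$ ranked below every real element, set $\B'_2=\{\,B\subseteq S': B\cap S^{\star}\in \J,\ |B|=n\,\}$, prove via Lemma~\ref{lem:projection} that this is a matroid base family, and then reduce to the popular common \emph{base} problem of \cite{KMSY24} (Theorem~\ref{thm:pop-base}) rather than to a popular common independent set problem; the bijection $I\mapsto I\cup\{\,d_i: I\cap S_i=\emptyset\,\}$ preserves $\Delta$ precisely because the dummies sit at the bottom of each agent's order. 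Once you replace ``independent set family'' by ``base family after per-agent dummy padding'' and target the popular common base algorithm, your argument closes.
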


In 
proving Theorems~\ref{thm:weighted} and \ref{thm:M-conc}, 
we reduce the problems 
%%to the popular common base problem, considered and solved in Kavitha et al.~\cite{KMSY24}. 
to the popular common base problem \cite{KMSY24} 
by defining new matroids whose common bases correspond to the maximum-weight common independent sets of the original matroids. 
The reduction for Theorem~\ref{thm:weighted} employs
LP duality and complementary slackness 
for the matroid intersection polytope, 
while that for Theorem~\ref{thm:M-conc} relies
on the weight splitting theorem 
for M$^\natural$-convex intersection problem. 
%While these technical ingredients look different, they have similar substances, and we have adopted a proof of Theorem~\ref{thm:weighted} not relying on weight splitting in order to make it more accessible.

\bigskip
For the two-sided preferences model, we address the problem of finding a popular maximum-weight matching in a many-to-many matching setting with two-sided preferences and matroid constraints. 
In this model, two matroids are given on the same ground set $S$, both as direct sums: $M_1= M^1_1 \oplus M^1_2 \oplus \dots \oplus M^1_{k_1}$ and $M_2= M^2_1 \oplus M^2_2 \oplus \dots \oplus M^2_{k_2}$. Each summand $M^i_j=(S^i_j, \cI^i_j)$ corresponds to an agent, and hence there are $k_1+k_2$ agents. A set $I\subseteq S$ is feasible if $I\cap S^i_j\in \cI^i_j$ for each $i\in \{1,2\}$ and $j\in [k_i]$. The simple bipartite matching model is a special case where each $M^i_j=(S^i_j, \cI^i_j)$ is a uniform matroid of rank $1$.

A detailed description of the problem is provided in Section \ref{sec:prelimi2}. It is worth mentioning that our {\em popular maximum-weight common independent set problem in the two-sided preferences model} encompasses  previously studied problems in the two-sided preferences model, such as the popular common independent set problem \cite{kamiyama2020popular, csaji2022solving}, the popular common base problem \cite{Kuffner23}, and the popular critical matching problem \cite{kavitha2021matchings}.

\begin{theorem}[Tractability in the weighted model with two-sided preferences]
\label{thm:twosided}
In the two-sided preferences model, if preferences are total orders, then a popular maximum-weight common independent set always exists and one can find it in polynomial time.
%there is a polynomial-time algorithm to find one.
\end{theorem}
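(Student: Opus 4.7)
The plan is to reduce the weighted two-sided problem to the \emph{unweighted} popular common base problem on a pair of derived matroids, and then invoke the known existence and algorithmic results for popular common bases under two-sided total orders.

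First, I would apply the \emph{weight splitting theorem} for weighted matroid intersection: it produces functions $w_1, w_2 \colon S \to \R$ with $w_1 + w_2 = w$ such that a set $I \in \cI_1 \cap \cI_2$ satisfies $w(I) = \opt(w)$ if and only if $I$ is simultaneously a maximum-$w_i$-weight independent set of $M_i$ for each $i = 1, 2$. Such a splitting is polynomial-time computable via standard weighted matroid intersection algorithms. Next, for each $i$, let $\tilde{M}_i$ be the matroid whose bases are precisely the maximum-$w_i$-weight independent sets of $M_i$; this is a standard matroid operation obtained by successive contraction along level sets of $w_i$. Because $M_i = M^i_1 \oplus \dots \oplus M^i_{k_i}$ is a direct sum, and a maximum-weight independent set in a direct sum decomposes per summand, the construction preserves the direct-sum decomposition: $\tilde{M}_i = \tilde{M}^i_1 \oplus \dots \oplus \tilde{M}^i_{k_i}$, and each agent's total-order preference restricts unchanged to the ground set of its derived summand. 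By construction, the common bases of $(\tilde M_1, \tilde M_2)$ coincide with the maximum-weight common independent sets of $(M_1, M_2)$.

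Finally, I would apply the existing results on popular common bases in the two-sided matroid preference model with total orders to the derived pair $(\tilde M_1, \tilde M_2)$: a matroid kernel exists, is polynomial-time computable by Fleiner-type algorithms, is a common base, and is popular among the common bases of $(\tilde M_1, \tilde M_2)$. Since each agent's vote between two sets depends only on the preferences and on the per-agent intersections, popularity in the derived instance transfers directly to popularity among the maximum-weight common independent sets of the original instance, yielding the desired solution in polynomial time.

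The main obstacle will be the careful transfer of popularity through the reduction: one must verify that (i) a matroid kernel of $(\tilde M_1, \tilde M_2)$ is indeed always a common base under two total orders, so that it is a maximum-weight common independent set of $(M_1, M_2)$, and (ii) the per-agent votes in the two-sided matroid popularity framework depend only on the preferences and on the sets $I \cap S^i_j$, not on the ambient matroid structure, so that the values of $\Delta$ coincide in the derived and original instances. Both should follow from the definitions in Section~\ref{sec:prelimi2} combined with the splitting theorem, but they constitute the technical heart of the proof.
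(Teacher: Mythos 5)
Your overall strategy---characterize the maximum-weight common independent sets as the common ``bases'' of two derived matroids and then compute a kernel there---is in the same spirit as the paper's proof, but the two points you defer as ``should follow from the definitions'' are precisely where the argument is hard, and one of your explicit claims is false. Your obstacle~(ii) asserts that the per-agent votes depend only on the preferences and the sets $I\cap S^i_j$, not on the ambient matroid. This is not true in the two-sided model: $\vote_i(I,J)$ is a minimum over \emph{feasible pairings}, and conditions (FP1)--(FP2) require $I-u+v\in\cI_i$ and $I+v\in\cI_i$, so which pairings are feasible (and hence the value of $\vote_i$) changes when $M_i$ is replaced by a derived matroid $\tilde M_i$ obtained by contraction/truncation. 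This is exactly the obstruction the paper singles out as the reason the one-sided reduction does not carry over, and resolving it is the content of Claim~\ref{claim:exchanges}: a feasible pairing with respect to $M_i$ satisfies (FP1)--(FP2) with respect to the transformed matroid only because \emph{both} $I$ and $J$ are critical, so each chain member is spanned by both sets and every pair is forced to stay within one layer of the chain. Your proposal has no substitute for this step.

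Your obstacle~(i) is also a genuine failure point rather than a routine verification. First, the $w_i$-maximizers over $\cI_i$ need not all have the same cardinality (zero-weight elements are optional), so ``the matroid whose bases are precisely the maximum-$w_i$-weight independent sets'' is not well defined; the natural repair of adding dummy elements reintroduces dummy-versus-dummy comparisons whose votes have no counterpart in the original instance, which the paper explicitly identifies as a second reason the reduction to common bases fails here. Second, even granting a derived pair of matroids, a matroid kernel is only a \emph{maximal} common independent set, not a common base (e.g., a rank-one uniform matroid paired with the free matroid on two elements has kernels but no common base), so ``kernel $\Rightarrow$ common base $\Rightarrow$ maximum weight'' does not go through. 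The paper avoids both problems by not reducing to common bases at all: it uses LP duality to reduce to the \emph{popular critical common independent set} problem (Section~\ref{sec:reduction}), and then forces the kernel to be critical via the level-duplication construction, in which each element of $C^i_{\max}$ receives $r_i(C^i_{\max})$ extra copies and the two sides order the levels oppositely; Lemma~\ref{LEMcritical}'s alternating-path argument is what guarantees criticality of the output. Without analogues of the duplication gadget and of Claim~\ref{claim:exchanges}, your reduction does not yield a correct algorithm.
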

This theorem assumes that preferences are total orders. They probably cannot be extended to more general orders because, when ties are allowed, finding a popular matching is NP-hard, even in the simple bipartite matching model \cite{biro2010popular}.

Similarly to the one-sided preferences model, our algorithm for this problem relies on a characterization of the maximum-weight common independent sets obtained from complementary slackness conditions. However, this case is more challenging, 
and the same reduction cannot be applied.
In the one-sided preferences model, matroids are used only to define the set of the feasible matchings, i.e., ``candidates,'' 
so popularity remains unchanged in the reduced instance of the popular common base problem. 
%%we can reduce our problem to the popular common base problem by defining new matroids whose common bases correspond to the maximum-weight common independent sets of the original matroids. 
In contrast, in the two-sided preferences model, where each voter casts multiple votes respecting her constraint, matroids are also used to define popularity, i.e., the ``election,''  and hence the popularity may change in the reduction. 
%%Therefore, we can not apply the same reduction as the one-sided case.

%%To solve our weighted problem in the two-sided preferences model, 
In order to resolve this issue, 
we introduce a new problem, which we call the {\em popular critical common independent set} problem. We show that the popular maximum-weight common independent set problem can be reduced to this new problem and provide an efficient algorithm to solve it. 

\medskip
We also investigate further generalizations of our problems and provide hardness results.
In the problems solved in Theorems~\ref{thm:weighted}, \ref{thm:M-conc}, and \ref{thm:twosided}, our objective is to find a common independent set that is popular within the set of ``optimal'' common independent sets.
Natural variants of them are problems to find a common independent set that is popular within the set of ``near-optimal'' common independent sets. Our hardness results hold even for the simple bipartite matching case. Suppose that we are given a bipartite graph $G=(A,B;E)$ and a weight function $w :E\to \R$. 
In the one-sided preferences model (resp. two-sided preferences model), we are given partial orders $\{\succ_i\}_{i\in A}$ (resp., total orders $\{\succ_i\}_{i\in A\cup B}$). In addition, we are given $k\in \R$, a threshold.

\begin{definition}%[The popular near-maximum-weight matching problem in the one-sided (resp., two-sided) preferences model]
A matching $M\subseteq E$ is called a {\em popular near-maximum-weight matching} if $w (M)\ge k$ and $\Delta (M,N)\ge 0$ for any matching $N$ with $w (N)\ge k$.
\end{definition}
As we will show in Section~\ref{sec:hardness}, the existence of a popular near-maximum-weight matching is not guaranteed even in the two-sided preferences model.
The {\em popular near-maximum-weight matching problem in the one-sided (resp., two-sided) preferences model} asks to determine the existence of a popular near-maximum-weight matching and to find one if it exists.

For the one-sided preferences model, we demonstrate that this problem is as hard as the {\em exact matching} problem \cite{papadimitriou1982complexity} (see Section~\ref{sec:hardness1} for the definition), for which the existence of a deterministic efficient algorithm remains a longstanding open question.

\begin{theorem}\label{thm:hardness-1-sided}
If there exists a deterministic polynomial-time algorithm for the popular near-maximum-weight matching problem in the one-sided preferences model, then there exists a deterministic polynomial-time algorithm for the exact matching problem. This holds even if preferences are weak orders (i.e., lists with ties) and weights are limited to values in $\{0,1\}$. 
\end{theorem}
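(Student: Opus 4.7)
The plan is to give a polynomial-time reduction from \exm{} on bipartite graphs to the popular near-maximum-weight matching problem in the one-sided preferences model, so that any deterministic polynomial-time algorithm for the latter yields one for the former. Starting from an \exm{} instance $(G=(A,B;E), k)$ with $|A|=|B|=n$ and a partition of $E$ into red and blue classes, the reduction keeps the graph $G$, sets $w(e)=1$ for every red edge and $w(e)=0$ for every blue edge, uses threshold $k$, and endows each $a \in A$ with the weak order that ties all blue edges at $a$ at the top, ties all red edges at $a$ just below them, and places $\emptyset$ at the bottom. It then invokes the assumed oracle once and answers ``yes'' iff the returned matching $M$ is perfect \emph{and} satisfies $w(M)=k$, both of which are trivially checkable.

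Correctness rests on two symmetric claims, each proved by a direct head-to-head computation of $\Delta$. First, I would show that if $G$ admits a perfect matching $M^*$ with exactly $k$ red edges, then $M^*$ itself is popular near-maximum-weight. Writing $Y_{xy}$ for the number of agents $a$ with $M^*(a)$ of type $x\in\{b,r\}$ and $N(a)$ of type $y\in\{b,r,\emptyset\}$, the fact that $M^*$ is perfect means $N$ can strictly dominate $M^*$ at an agent only in type $(r,b)$, whereas $M^*$ strictly dominates $N$ in types $(b,r)$, $(b,\emptyset)$, and $(r,\emptyset)$. Using $r(N)=Y_{br}+Y_{rr}$ and $k=r(M^*)=Y_{rb}+Y_{rr}+Y_{r\emptyset}$ to eliminate $Y_{br}-Y_{rb}$, this collapses to $\Delta(M^*,N) = (r(N)-k) + Y_{b\emptyset} + 2Y_{r\emptyset}$, which is non-negative whenever $w(N)=r(N)\ge k$.

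Second, under the same assumption on $M^*$, I would show that every popular near-maximum-weight matching $M$ must be a perfect matching with $r(M)=k$. The symmetric computation (with $Y'_{xy}$ based on the pair $(M,M^*)$) yields $\Delta(M,M^*) = (k-r(M)) - Y'_{\emptyset b} - 2Y'_{\emptyset r}$; requiring this to be non-negative and combining with the threshold inequality $r(M)\ge k$ forces $r(M)=k$ together with $Y'_{\emptyset b}=Y'_{\emptyset r}=0$, i.e., every agent of $A$ is matched in $M$. Combining the two claims, the algorithm's test returns ``yes'' exactly when \exm{} answers ``yes,'' and the whole procedure runs in polynomial time once the oracle is invoked.

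The main obstacle lies in discovering a preference structure under which the single popularity inequality $\Delta(M,M^*)\ge 0$ simultaneously enforces $r(M)\le k$ (combining with the threshold to pin $r(M)=k$) \emph{and} $|M|=n$. Putting blue strictly above red makes each surplus red edge in $M$ contribute to the deficit, while keeping $\emptyset$ strictly below every edge makes each unmatched agent contribute as well; both penalties have to be absorbed by the single non-positive quantity coming from $M^*$, which is why the inequalities collapse to equalities and deliver both conditions at once. Verifying the algebraic collapse via careful bookkeeping of the $Y$-counts is the only delicate point in the argument.
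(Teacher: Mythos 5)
Your reduction is correct, and it takes a genuinely different --- and considerably simpler --- route than the paper's. The paper replaces each vertex $a_i$ of degree $d_i$ by a gadget with $2d_i$ agents and $2d_i-1$ auxiliary objects (including ``garbage-collector'' agents $c_i^\ell$ whose ties absorb unused slots), shifts the threshold to $k'=2|E|+k-n$, and then bounds the numbers of red- and blue-type edges in any candidate matching to force every popular solution of the reduced instance to be a perfect matching of weight exactly $k'$. You keep $G$ untouched and let the colours play a double role: weight $1$ on red edges pins $r(M)\ge k$ via the threshold, while the preference ``blue $\succ$ red $\succ\emptyset$'' turns each surplus red edge and each unmatched agent into a liability in the head-to-head count against a perfect exact matching $M^*$. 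I checked your two computations, $\Delta(M^*,N)=(r(N)-k)+Y_{b\emptyset}+2Y_{r\emptyset}$ and $\Delta(M,M^*)=(k-r(M))-Y'_{\emptyset b}-2Y'_{\emptyset r}$, and both are right (the key identities $r(N)=Y_{br}+Y_{rr}$ and $k=Y_{rb}+Y_{rr}+Y_{r\emptyset}$ hold because $M^*$ is perfect, so no agent has $M^*$-type $\emptyset$); they deliver exactly the two directions that the paper establishes via its gadget, and both proofs then wrap the oracle identically (one call, then check whether the output encodes an exact matching, which is sound even on no-instances). What the paper's heavier construction buys is the follow-up remark: its ties occur only among interchangeable dummy objects $x_i^p$, so they can be removed by giving a single object capacity $d_i-1$, yielding the strict-preference capacitated variant; your ties sit on the original objects and do not disappear this way. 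For the theorem as stated (weak orders, weights in $\{0,1\}$), your argument is complete.
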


%For the two-sided preferences model, we provide the following NP-hardness result.
Note that if weights are all $1$ (i.e., $w(e)=1$ for every edge $e\in E$), then the popular near-maximum-weight matching problem asks to find a matching that is popular among matchings of size at least $k$. For the one-sided preferences model, this special case can be solved via a reduction to the popular assignment problem as shown in \cite[Section 2.3]{KKMSS22}. In contrast, for the two-sided preferences model, even the cardinality constrained version is NP-hard. 
%\begin{theorem}\label{thm:hardness-2-sided}
%The popular near-maximum-weight matching problem in the two-sided preferences model is NP-hard. This holds even if weights are limited to values in $\{0,1\}$. 
%\end{theorem}
\begin{theorem}\label{thm:hardness-2-sided-allone}
The popular near-maximum-weight matching problem in the two-sided preferences model is NP-hard even if weights are all $1$. That is, it is NP-hard to determine the existence of a matching $M$ such that $|M|\ge k$ and $\Delta (M,N)\ge 0$ for any matching $N$ with $|N|\ge k$. 
\end{theorem}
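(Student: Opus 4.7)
The plan is to prove NP-hardness by reducing from a well-known NP-hard problem in the popular matching literature, most naturally the problem of deciding whether a popular matching exists in a bipartite graph with two-sided \emph{weak} preferences (i.e., preferences with ties), which was shown to be NP-hard by Biró, Irving, and Manlove~\cite{biro2010popular}. The cardinality threshold $|M|\ge k$ in our problem provides the additional leverage needed to simulate ties within the framework of strict preferences.

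Given an instance $G'=(A',B';E')$ of the popular matching problem with two-sided ties, I would construct a strict-preference instance $(G=(A,B;E),k)$ as follows. For each vertex $v$ and each maximal tied block $\{b_1,\dots,b_t\}$ in $v$'s preference list, introduce a gadget containing auxiliary vertices and set the threshold $k$ to a value that forces these auxiliary vertices to be matched in any matching of size at least $k$. Within each gadget, the strict preferences on the auxiliary vertices and the edges are designed to be \emph{vote-symmetric}: swapping between any two edges associated with the original tied block contributes zero to $\Delta(M,N)$, thereby simulating the effect of a tie in the head-to-head vote. Preferences outside gadgets inherit the original strict linear extension of $G'$.

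Correctness would then be established by showing that $G'$ admits a popular matching if and only if $G$ admits a matching $M$ with $|M|\ge k$ that is popular among matchings of size $\ge k$. In the forward direction, I would embed any popular matching $M'$ of $G'$ into $G$ by filling in each gadget canonically, reaching a matching of size exactly $k$; vote-symmetry of the gadgets ensures that, for any rival $N$ with $|N|\ge k$, the restrictions to $G'$ determine the sign of $\Delta(M,N)$, and popularity of $M'$ in $G'$ transfers. In the backward direction, I would argue that the cardinality constraint forces every gadget to be saturated, and that the restriction of any solution to the original edges yields a popular matching in $G'$.

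The principal obstacle is the gadget design: the auxiliary structure must be local enough that swaps within a simulated tie cancel out in $\Delta(M,N)$, yet rigid enough that partially filled or ``incorrectly'' filled gadgets are excluded. A particularly delicate point in the backward direction is ruling out matchings that underfill one gadget but compensate by over-saturating edges elsewhere to reach size $k$; this will require exhibiting specific witness matchings $N$ of size $\ge k$ against which such ``unbalanced'' candidates $M$ provably lose, so that only gadget-respecting matchings can satisfy $\Delta(M,N)\ge 0$ for every large $N$.
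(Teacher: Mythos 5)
Your proposal takes a genuinely different route from the paper --- the paper reduces from \expm\ (deciding whether a \emph{strict}-preference instance that has a complete popular matching and a popular matching of size $\le |U|-2$ also has one of size exactly $|U|-1$, NP-complete by Faenza--Kavitha--Powers--Zhang), so it never has to simulate ties; the threshold $k$ is used only to force an exact-size condition, via rigid ``special edge'' path gadgets and a large buffer cycle that absorbs size differences at zero net vote. Your route, reducing from popularity with two-sided ties, hinges on a tie-simulating gadget that you do not construct, and there is a concrete reason to doubt it can be built the way you describe. If the auxiliary structure is merely \emph{attached} to the vertices of a tied block $\{b_1,\dots,b_t\}$ of $v$, then a rival $N$ can realize the switch from $M(v)=b_1$ to $N(v)=b_2$ without touching any gadget at all: when $b_2$ is unmatched in $M$, the matching $N=M-(v,b_1)+(v,b_2)$ has the same size (so it clears any threshold $k$ that $M$ clears), the votes of $b_1$ and $b_2$ cancel, and $\Delta(M,N)$ reduces to $v$'s own strict vote $\pm 1$ --- which no gadget elsewhere can alter, since no auxiliary vertex changes partner. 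So a matching that is popular in the tied instance ($\Delta=0$ for this $N$) can lose in your strict instance, breaking the forward direction. The only way to force the gadget to react to such a switch is to subdivide the tied edges themselves into paths, but that changes sizes and parities of every matching and requires a completely different calibration of $k$ and of the internal preferences; none of that analysis is present.

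A second, independent gap is the transfer of rivals between the two instances. Popularity with ties quantifies over \emph{all} matchings $N$ of $G'$, including very small ones, while your target problem only quantifies over $N$ with $|N|\ge k$. For the backward direction you must show that every rival of $G'$ lifts to a rival of size $\ge k$ in $G$ at zero additional vote cost (otherwise a non-popular $M'$ could look popular in $G$ because its witness cannot be padded up to size $k$), and for the forward direction you must rule out rivals that underfill one gadget and compensate elsewhere. You name this obstacle but do not resolve it; in the paper this is exactly the role of the disjoint cycle $C$ of $100n$ special edges together with Lemma~\ref{lem:cyclelemma}, and the padding argument (cases (i)--(iii) in Claim~\ref{clm:kpop_to_exact}) is where most of the work lies. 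As it stands, the proposal is a plan whose two central components --- the vote-symmetric tie gadget and the size-padding mechanism --- are both missing, and the first appears to fail for the reason above.
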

We remark that it was shown by Kavitha~\cite[Theorem 3]{kavitha2014size} that, for any integer $\ell\geq 2$, one can find a matching $M_{\ell}$ such that $|M_{\ell}|\geq \frac{\ell}{\ell+1}|M_{\rm max}|$ (where $M_{\rm max}$ is a maximum matching) and $\Delta (M_{\ell},N)\ge 0$ for any matching $N$ with $|N|\ge |M_\ell|$. Such a matching $M_\ell$ can be seen as a popular near-maximum-weight matching with all $1$ weights and $k=|M_\ell|$. Our result does not contradict this fact because in Theorem~\ref{thm:hardness-2-sided-allone} the threshold $k$ can be chosen arbitrarily.

As these two theorems show, the popular near-optimal matching problems are difficult, contrasting with the fact that popular optimal solutions can be efficiently computed even in general matroidal settings with general weights (Theorems~\ref{thm:weighted}, \ref{thm:M-conc}, and \ref{thm:twosided}).

%One explanation for this gap could be the difference in structural properties between optimal and near-optimal solutions of a matroid. While the set of maximizers of an M$^\natural$-concave function (which includes weighted matroids) forms a matroid-like structure (see Lemma~\ref{LEMargmax} in Section~\ref{sec:matroids}), it is known that the upper level set of an M$^\natural$-concave function does not have that property~\cite{shioura2000level}.

%\textcolor{blue}{These results present a clear contrast to the fact that there is a polynomial-time algorithm to find a matching $M$ such that $|M|\ge k$ and $\Delta(M,N)\ge 0$ for any matching $N$ with $|N|\geq k$ in each of the one-sided  and two-sided preferences models \cite{KMSY24, kavitha2014size}. Thus, there is a gap between the problems of finding a popular solution under cardinality lower bound constraint and those under weight lower bound constraint, even for $\{0,1\}$-weights.}
\subsection{Related Works}
\label{SECrelated}
The concept of popularity in the context of matchings with two-sided preferences was introduced by G\"{a}rdenfors \cite{gardenfors1975match}, who showed that every stable matching is popular. Popularity in the one-sided model was considered only much later by Abraham et al.\ \cite{AIKM07}, 
where the authors gave an efficient algorithm to find a popular matching.
%where the authors gave a characterization of popular matchings and an $O(\sqrt{n}m)$ time algorithm for the maximum cardinality popular matching problem, where $n$ and $m$ are the numbers of vertices and edges, respectively. 
%
After these results, research on the topic gained momentum and led to several generalizations and new approaches in both the one- and two-sided cases. Here we list the ones most relevant to our present work.

\paragraph{One-sided Preferences.} A natural generalization of bipartite matching is the bipartite $b$-matching problem. Manlove and Sng \cite{manlove2006popular} showed that the problem is tractable if only the side without preferences has capacities. 
However, when the capacities are on the side with preferences, then determining the existence of a popular matching becomes NP-complete, as shown by Paluch \cite{paluch2014popular} for preference lists with ties and by Csáji \cite{csaji2024popularity} for strictly ordered lists.

Continuing the line of research of \cite{manlove2006popular}, Kamiyama \cite{Kam17} considered the generalization where the feasible choices for each vertex on the side without preferences are determined by matroid constraints.
%, and the preference lists on the other side have ties.
A related, but distinct problem is the popular branching problem \cite{KKMSS20}, where the feasible solutions are branchings of a directed graph, and vertices have preferences on the incoming arcs. 
Another direction of research relevant to our paper is the restriction of feasible solutions based on cardinality. 
In the popular assignment problem introduced in \cite{KKMSS22}, only perfect matchings are considered to be feasible, so the objective is to find one that is popular among the perfect matchings.
A common generalization of the above problems was solved in \cite{KMSY24}, where the popular common base problem was shown to be tractable for arbitrary partial order preferences; we will state this result as Theorem~\ref{thm:pop-base} in Section~\ref{sec:onesided}.

\paragraph{Two-sided Preferences.} In contrast to the one-sided case, the problem with two-sided preferences becomes NP-hard if ties are allowed in the preference lists (even for ties on one side), as shown in \cite{biro2010popular,cseh2017popular}. 
%Therefore, we only consider preferences without ties. 
Maximum-size popular matchings, and their subclass called dominant matchings, have been analyzed in several papers \cite{huang2011popular,kavitha2014size,cseh2018popular}. Kavitha \cite{kavitha2014size} showed how to find a maximum matching that is popular among the maximum matchings; furthermore, she considered the class of critical matchings \cite{kavitha2021maximum}, which contain as many vertices from a given set as possible, and solved the problem of finding a maximum-size matching among those that are popular among the critical matchings. Very recently, the popular maximum-weight matching problem on bipartite graphs was studied by Kavitha\cite{Kavitha2024optimal}.

Concerning popularity in the many-to-many setting, the definition of voting is less obvious; a model and efficient algorithms have been developed by Brandl and Kavitha\cite{brandl2018popular,brandl2019two}. The model was extended by using matroid constraints by Kamiyama \cite{kamiyama2020popular}. The maximum-size popular matching problem in this model was solved in \cite{csaji2022solving}.
Matroidal generalizations of stable matchings date back to the work of Fleiner \cite{fleiner2001matroid, fleiner2003fixed}, who defined the matroid kernel problem as a natural generalization of bipartite stable matchings, and showed that an elegant generalization of the Gale-Shapley algorithm efficiently finds a matroid kernel. Algorithms for popular matchings that involve matroid constraints usually rely on some version of Fleiner's algorithm.

\subsection*{Paper Organization}
The rest of the paper is organized as follows.
Section~\ref{sec:matroids} describes some basics on matroids and M$^\natural$-concave functions. Section~\ref{sec:onesided} is devoted to the proofs of Theorems~\ref{thm:weighted} and \ref{thm:M-conc}, the tractability results on the one-sided preferences model. In Section~\ref{sec:twosided}, we precisely define our two-sided preferences model and show Theorem~\ref{thm:twosided}.
Theorems~\ref{thm:hardness-1-sided} and \ref{thm:hardness-2-sided-allone}, the hardness results on popular near-optimal matchings are shown in Section~\ref{sec:hardness}.

\section{Matroids and M$^\natural$-concave Functions}\label{sec:matroids}
For a set $X$ and an element $x$, 
we use the notations $X-x=X\setminus \{x\}$ and $X+x=X\cup \{x\}$. 

A pair $(S,\cI)$ 
of a finite set $S$
and a nonempty family $\cI \subseteq 2^S$ is called a \emph{matroid} if it satisfies the following axioms: 

\begin{enumerate}
    \item[\rm (I1)] If $I \in \cI$  and $I' \subseteq I$, then $I'\in \cI$, 
    \item[\rm (I2)] If $I,I'\in \cI$ and $|I'| < |I|$, then $I'+x \in \cI$ for some $x\in I \setminus I'$. 
\end{enumerate}

The set $S$ is referred to as the \emph{ground set} 
and $\cI$ as the \emph{independent set family}. 
A set in $\cI$ is referred to as an \emph{independent set}. 

The \emph{rank function} $r \colon 2^S \to \Z$ of a matroid $(S,\cI)$ is defined as 
\begin{align*}
r(X) = \max \{\,|Z| \colon Z \subseteq X, Z \in \cI\,\} \quad (X \subseteq S). 
\end{align*}

A \emph{base} $B \subseteq S$ of a matroid $(S,\cI)$ is an inclusionwise maximum independent set. 
The \emph{base family} of a matroid $(S,\cI)$, 
i.e.,\ 
the set of all bases in $(S,\cI)$ is often denoted by $\B$. 
The base family $\B\subseteq 2^S$ uniquely determines the original matroid $(S,\cI)$, 
and thus a matroid is often denoted by a pair $(S,\B)$ of its ground set and base family. 
Observe that 
$|B_1| = |B_2|$ holds for any bases $B_1,B_2\in \B$ by the axiom (I2). 

For a finite set $S$ and a nonempty family $\B \subseteq 2^S$, 
a pair $(S, \B)$ is a matroid with ground set $S$ and base family $\B$ if and only if the following axiom is satisfied:
\begin{enumerate}
    \item[\rm (B1)] $B,B' \in \B$  and $x \in B \setminus B'$ implies that 
    $B - x + y \in \B$ for some $y\in B' \setminus B$.
\end{enumerate}

Here we describe some basic operations on matroids, which will be used in our proofs.
Let $M=(S, \cI)$ be a matroid, 
and let $T\subseteq S$ be a subset of $S$. 
\begin{itemize}
\item Define a family $\cI' \subseteq 2^T$ by 
$\cI'=\{\, X\subseteq T \colon X\in \cI\,\}$. 
Then, $(T, \cI')$ is a matroid called the {\em restriction} of $M$ to $T$. 
\item Let $B_T$ be any maximal subset of $T$ in $\cI$ (i.e., a base of $T$), 
and define a family $\cI'' \subseteq 2^{S\setminus T}$ by $\cI''=\{\, X\subseteq S\setminus T \colon B_T\cup X\in \cI\,\}$. Then, $(S\setminus T, \cI'')$ is a matroid, and this operation is called {\em contracting} $T$. 
Note that the family $\cI''$ is not affected by the choice of $B_T$. 
\item Let $k$ be a positive integer and define $\cI^k\subseteq 2^S$ by $\cI^k=\{\,X: X\in \cI, |X|\leq k\,\}$. Then, $(S, \cI^k)$ is a matroid, which we call the {\em $k$-truncation} of $M$. 
\item For matroids $M_1=(S_1, \cI_1),M_2=(S_2, \cI_2),\dots, M_k=(S_k, \cI_k)$ such that $S_i~(i\in[k])$ are mutually disjoint, let $S^{\star}\coloneqq S_1\cup S_2\cup \cdots \cup S_k$ and $\cI^{\star}=\{\, X\subseteq S^{\star}:X\cap S_i\in \cI_i\ (i\in[k])\,\}$. Then, $(S^{\star}, \cI^{\star})$ is a matroid called the {\em direct sum} of $M_i~(i\in [k])$ and denoted by $M_1\oplus M_2\oplus \cdots \oplus M_k$.
\end{itemize}

In the literature, the concept of matroids is generalized to that of \emph{generalized matroids} \cite{Tardos85}, which are known to be equivalent to \emph{M$^\natural$-concave families}.
Hereafter, our discussion involving M$^\natural$-concave utility functions are described in terms of \emph{M$^\natural$-concave families}. 

% For a set $X$ and an element $x$, 
% we use the notations $X-x=X\setminus \{x\}$ and $X+x=X\cup \{x\}$. 
Let $\J$ be a nonempty family of subsets of a finite set $S$.
We say that $\J$ is an {\em M$^\natural$-convex family} if, for any $X, Y\in \J$ and $x\in X\setminus Y$, at least one of the following holds: 
\begin{enumerate}
    \item[\rm (i)] $X-x\in \J$, $Y+x\in \J$. 
    \item[\rm (ii)] There exists some element $y\in Y\setminus X$ such that $X-x+y\in \J$, $Y+x-y\in \J$. 
\end{enumerate}

%It holds that a family $\J\subseteq 2^S$ is M$^\natural$-convex if and only if $(S, \J)$ is a generalized matroid. 
Unlike the independent set family of a matroid, an M$^\natural$-convex family is not required to have the hereditary property (I1) while
it is known to satisfy the augmentation property (I2) (see \cite[Lemma 2.4]{Tardos85}, \cite[Theorem 1.1]{murota2018simpler}). From this, it follows that the independent set family of a matroid is characterized as an M$^\natural$-convex family containing $\emptyset$.

We can also observe that an M$^\natural$-convex family gives rise to the base family of a matroid as follows. Although this is a known fact  \cite[Theorem 2.9]{Tardos85}, we provide a proof for completeness.
\begin{lemma}\label{lem:projection}
Let $\J\subseteq 2^S$ be an M$^\natural$-convex family, $D$ a finite set disjoint from $S$, and $t$ any positive integer. If $\B=\{\,B\subseteq S\cup D: B\cap S\in \J,\, |B|=t\,\}$ is nonempty, then it forms the base family of a matroid on the ground set $S\cup D$.
\end{lemma}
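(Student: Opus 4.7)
The plan is to verify the base-exchange axiom (B1) for $\B$: given $B,B'\in \B$ and $x\in B\setminus B'$, produce $y\in B'\setminus B$ with $B-x+y\in \B$. Since $|B|=|B'|=t$, any $y\notin B$ automatically preserves the cardinality, so the real condition to check is $(B-x+y)\cap S\in \J$. I would split on whether $x\in S$ or $x\in D$, using two facts about the M$^\natural$-convex family $\J$: the defining exchange property, and the augmentation property (I2), which such families are known to satisfy (the excerpt cites \cite[Lemma 2.4]{Tardos85} for this).

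First I would handle the case $x\in S$, so that $x\in (B\cap S)\setminus (B'\cap S)$. Applying the M$^\natural$-exchange axiom to $B\cap S,\, B'\cap S\in \J$ and $x$ yields two alternatives. If alternative~(ii) holds, the element $y\in (B'\cap S)\setminus (B\cap S)$ it produces satisfies $(B\cap S)-x+y\in \J$ and lies in $B'\setminus B$, and I am done. Otherwise alternative~(i) must hold, giving $(B\cap S)-x\in \J$. If some $y\in (B'\setminus B)\cap D$ exists, then $(B-x+y)\cap S=(B\cap S)-x\in \J$ and $y$ works. If instead $B'\setminus B\subseteq S$, then $B'\cap D\subseteq B\cap D$ forces $|B\cap S|\le |B'\cap S|$, so $|(B\cap S)-x|<|B'\cap S|$; now (I2) applied to $(B\cap S)-x$ and $B'\cap S$ yields $y\in (B'\cap S)\setminus ((B\cap S)-x)$ with $(B\cap S)-x+y\in \J$, and using $x\notin B'\cap S$ we actually have $y\in (B'\cap S)\setminus (B\cap S)\subseteq B'\setminus B$.

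Next I would dispatch $x\in D$. Then $(B-x)\cap S=B\cap S$, so any $y\in (B'\setminus B)\cap D$ works immediately. If no such $y$ exists, $B'\setminus B\subseteq S$, and since $x\in (B\cap D)\setminus (B'\cap D)$ we obtain $|B\cap S|<|B'\cap S|$. Applying (I2) to these two members of $\J$ gives $y\in (B'\cap S)\setminus (B\cap S)\subseteq B'\setminus B$ with $(B\cap S)+y\in \J$, and $(B-x+y)\cap S=(B\cap S)+y\in \J$, as required.

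The main obstacle I anticipate is the subcase of $x\in S$ in which only alternative~(i) of the M$^\natural$-exchange axiom is available and moreover $B'\setminus B\subseteq S$: the exchange axiom alone does not supply the required $y$, and one must combine $(B\cap S)-x\in \J$ with a careful size comparison (leveraging $|B|=|B'|=t$) before invoking the augmentation property~(I2). Everything else is essentially bookkeeping about whether the exchange partner can be taken in $D$ or must be pulled out of $S$ via (I2).
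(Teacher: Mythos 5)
Your proof is correct and follows essentially the same route as the paper's: verify (B1) by splitting on $x\in S$ versus $x\in D$, using the M$^\natural$-exchange axiom on $B\cap S,\ B'\cap S$ when possible and falling back on the augmentation property (I2) together with the cardinality comparison forced by $|B|=|B'|=t$ when the exchange partner must be pulled from $B'\cap S$. The subcase you flag as the main obstacle (alternative (i) with $B'\setminus B\subseteq S$) is handled in the paper by exactly the same size argument.
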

\begin{proof}
Assume that $\B\neq \emptyset$. 
We prove that 
$\B$ satisfies (B1) by 
using the fact that an M$^\natural$-convex family satisfies the augmentation axiom (I2). 
%, we show that for any $B_1, B_2\in \B$ and $x\in B_1\setminus B_2$, there exists  $y\in B_2\setminus B_1$ such that $B_1-x+y\in \B$. 
Let $B,B'\in \B$ and $x\in B\setminus B'$. 
Define $X=B\cap S$ and $Y=B'\cap S$. 
Clearly, $X, Y\in \J$.

We first consider the case $x\in D$. 
If $(B'\setminus B)\cap D \neq \emptyset$, then any $y\in (B'\setminus B)\cap D$ satisfies $B-x+y\in \B$. 
If $(B'\setminus B)\cap D = \emptyset$, it follows from $x\in B\cap D$ and $|B|=|B'|$ that $|X|<|Y|$, and hence (I2) implies that there is an element $y\in Y\setminus X$ such that $X+y\in \J$. 
This element $y$ satisfies that $y\in B'\setminus B$ and $B-x+y\in \B$.

We next consider the case $x\in B\setminus D$, i.e., $x\in X$.
By the definition of an M$^\natural$-convex family, we have that $X-x\in \J$ or $X-x+y\in \J$ for some $y\in Y\setminus X\subseteq B'\setminus B$. In the latter case, we clearly have $B-x+y\in \B$. We then assume $X-x\in \J$. If $(B'\setminus B)\cap D\neq \emptyset$, any element $y\in (B'\setminus B)\cap D$ satisfies $B-x+y\in \B$. If $(B'\setminus B)\cap D= \emptyset$, then we have $|X|\leq |Y|$, and hence $|X-x|<|Y|$. 
%%By (I2), 
Since $\J$ satisfies (I2) and $X-x,Y \in \J$,
there exists an element $y\in Y\setminus (X-x)$ such that $X-x+y\in \J$. 
This element $y$ satisfies that $y\in B'\setminus B$ and $B-x+y\in \B$.
\end{proof}

M$^\natural$-concave functions are defined as a quantitative generalization of M$^\natural$-convex families. 
While there are various equivalent definitions of M$^\natural$-concavity, 
we adopt the following definition due to Murota~\cite{murota2016discrete}. 
Let $S$ be a finite set and $\J\subseteq 2^S$ be a family of subset of $S$. 
We say that a function $f:\J\to \R$ is an {\em M$^\natural$-concave function} if at least one of the following holds for any subsets $X, Y\subseteq \J$ and any element $x\in X\setminus Y$:
\begin{description}
\item[\rm (i)] $X-x\in \J$, $Y+x\in \J$ and $f(X)+f(Y)\leq f(X-x)+f(Y+x)$.
\item[\rm (ii)] There exists some element $y\in Y\setminus X$ such that $X-x+y\in \J$, $Y+x-y\in \J$, and $f(X)+f(Y)\leq f(X-x+y)+f(Y+x-y)$.
\end{description}

It follows from this definition that the domain $\J$ of an M$^\natural$-concave function $f$ must be an M$^\natural$-convex family.
The following properties of M$^\natural$-concave functions will be useful in our proof.
For a function $f:\J\to \R$, where $\J \subseteq 2^S$, and a vector $q\in \R^S$, define a function $f[q]:\J\to \R$ by $f[q](X)=f(X)+\sum_{u\in X}q(u)$ ($X\in \J$). 
\begin{lemma}[e.g., Murota \cite{Mbook}]
\label{LEMargmax}
Let $f:\J\to \R$ be an M$^\natural$-concave function, 
where $\J \subseteq 2^S$. 
\begin{itemize}
\item 
For any vector $q\in \R^S$, a function $f[q]:\J\to \R$ 
is M$^\natural$-concave. 
\item 
The set of maximizers of $f$, i.e., $\arg\max f \subseteq 2^S$, forms an M$^\natural$-convex family.
\end{itemize}
\end{lemma}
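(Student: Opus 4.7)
The plan is to verify both statements directly from the defining axiom of M$^\natural$-concavity, exploiting the fact that the exchange operations in conditions (i) and (ii) preserve the multiset union of $X$ and $Y$ counted with multiplicities.

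For the first bullet, I would fix $q\in \R^S$ and take any $X,Y\in \J$ with $x\in X\setminus Y$. Since $f$ and $f[q]$ share the domain $\J$, the membership clauses ``$X-x\in\J$'' etc.\ transfer verbatim from the M$^\natural$-concavity of $f$; only the inequality needs attention. In case (i) I would note the identity $\sum_{u\in X-x}q(u)+\sum_{u\in Y+x}q(u)=\sum_{u\in X}q(u)+\sum_{u\in Y}q(u)$, which holds because $x$ is merely moved from $X$ to $Y$; so the $q$-dependent terms cancel in the comparison and the inequality for $f[q]$ reduces to the one already known for $f$. Case (ii) is treated identically: the swap $x\leftrightarrow y$ preserves the total $q$-weight over the two sets, and the $q$-terms again cancel.

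For the second bullet, I would verify that $\arg\max f$ satisfies the defining axiom of an M$^\natural$-convex family (it is nonempty because $\J\subseteq 2^S$ is finite). Taking $X,Y\in \arg\max f$ and $x\in X\setminus Y$, I would invoke the M$^\natural$-concavity of $f$. In case (i), the inequality $f(X)+f(Y)\le f(X-x)+f(Y+x)$ combined with the maximality bounds $f(X-x)\le f(X)$ and $f(Y+x)\le f(Y)$ forces both of the latter to be equalities, so $X-x,Y+x\in \arg\max f$. Case (ii) is handled the same way, yielding some $y\in Y\setminus X$ with $X-x+y,Y+x-y\in \arg\max f$. Either outcome realises the exchange required by the M$^\natural$-convex family axiom.

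I do not anticipate a real obstacle: the content of the lemma is the cancellation of $q$-terms in the first bullet and the squeeze-by-maximality argument in the second. The only minor point worth flagging is the nonemptiness of $\arg\max f$, which is immediate from the finiteness of $\J$, and the fact that in each case we must use the \emph{same} case of M$^\natural$-concavity for the translated function or for the maximizer family as was produced by $f$ itself; no re-alignment of cases is needed.
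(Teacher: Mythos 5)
Your argument is correct: the $q$-terms cancel because both exchange operations preserve the multiset $X\uplus Y$, and the squeeze-by-maximality step in the second bullet is exactly right (including the observation that the case of the M$^\natural$-concavity axiom that fires for $f$ is the one inherited by $f[q]$ and by $\arg\max f$). Note, however, that the paper offers no proof of this lemma at all — it is quoted as a known result of Murota \cite{Mbook} — so there is no in-paper argument to compare against; yours is the standard one and is complete, the only housekeeping point being the nonemptiness of $\arg\max f$, which you correctly dispatch via finiteness of $\J$.
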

It is known that an M$^\natural$-concave function can be maximized efficiently (assuming that a value oracle is available). 
While the sum of two M$^\natural$-concave functions is not necessarily M$^\natural$-concave, it is also known to be 
maximized efficiently, which generalizes the fact that weighted matroid intersection is tractable. 
The set of maximizers of the sum of two M$^\natural$-concave functions is characterized by the following structure theorem\footnote{The original theorem by Murota \cite[Theorem 4.1]{Murota96_1} shows a stronger result for valuated matroid intersection. A version described in terms of M$^\natural$-concave functions can be found in Murota~\cite{murota2016discrete}. The statement in Lemma~\ref{LEMsplit} is obtained by applying Theorem 11.2(2) in \cite{murota2016discrete} with $w$ being a constantly zero function.}.
\begin{lemma}[Murota~\cite{Murota96_1,Murota96_2}]
\label{LEMsplit}
For two M$^\natural$-concave functions $f_1:\J_1\to \R$ and $f_2: \J_2\to \R$, where $\J_1, \J_2\subseteq 2^S$, there exists a vector $p\in \R^S$ such that 
\[\arg \max(f_1+f_2)=\arg\max(f_1[+p])\cap \arg\max(f_2[-p]).\]
Furthermore, such a vector $p$ can be computed efficiently assuming that membership oracles of $\J_1, \J_2$ and value oracles of $f_1, f_2$ are available and some members of $\J_1$ and $\J_2$ are known.
\end{lemma}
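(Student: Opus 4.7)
The plan is to prove the claimed set equality in two parts, with the easy direction valid for every $p$ and the hard direction achieved by an algorithmic construction. For the easy inclusion $\arg\max(f_1[+p]) \cap \arg\max(f_2[-p]) \subseteq \arg\max(f_1+f_2)$, observe that $f_1[+p](Z)+f_2[-p](Z) = (f_1+f_2)(Z)$ holds identically in $Z$ and $p$. Hence for any $Y$ in the left-hand intersection and any $Z\in \J_1\cap\J_2$,
\[
(f_1+f_2)(Y) \;=\; \max f_1[+p] + \max f_2[-p] \;\ge\; f_1[+p](Z)+f_2[-p](Z) \;=\; (f_1+f_2)(Z),
\]
so $Y\in \arg\max(f_1+f_2)$. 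For the reverse inclusion, it suffices to exhibit a single pair $(X^\ast,p)$ with $X^\ast\in \arg\max(f_1[+p])\cap \arg\max(f_2[-p])$: then for an arbitrary $Y^\ast\in \arg\max(f_1+f_2)$, the chain $f_1[+p](Y^\ast)+f_2[-p](Y^\ast)=(f_1+f_2)(Y^\ast)=(f_1+f_2)(X^\ast)=\max f_1[+p]+\max f_2[-p]$ forces each term on the left to equal its own maximum.

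Next, I would construct such a witness by an augmenting-path algorithm on an auxiliary exchange digraph, mirroring the weighted matroid intersection algorithm. Given a current $X\in \J_1\cap \J_2$, define a digraph on $S\cup\{s,t\}$ whose arcs record feasible single-element exchanges $X\to X-x+y$, $X\to X+y$, and $X\to X-x$ in $\J_1$ (and, with reversed orientation, in $\J_2$); weight each arc by the marginal loss in $f_1$ or $f_2$ it incurs. Maintain a potential $p:S\to \R$ under which all arc lengths are nonnegative, find a shortest augmenting walk by Dijkstra, and update both $X$ (by symmetric difference with the walk) and $p$ (by adding the shortest-path labels). Each augmentation strictly increases $f_1(X)+f_2(X)$, so the procedure terminates in polynomially many rounds under the stated oracle access and with some initial feasible set available. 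Let $(X^\ast,p)$ denote the pair at termination.

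The main obstacle is verifying that this terminal pair is indeed the required witness. This reduces to the \emph{local-to-global optimality} characterization for M$^\natural$-concave functions: for any M$^\natural$-concave $g:\J\to \R$ on an M$^\natural$-convex family $\J\subseteq 2^S$, a set $X\in \J$ is a global maximizer of $g$ if and only if no single-element exchange $X\mapsto X-x$, $X+y$, or $X-x+y$ strictly increases $g$. I would prove this by induction on $|X\triangle Y|$: if $g(Y)>g(X)$, pick $x\in X\setminus Y$ (or symmetrically $y\in Y\setminus X$) and apply the M$^\natural$-concave exchange axiom to produce a pair $(X',Y')$ with $g(X')+g(Y')\ge g(X)+g(Y)$ and $|X'\triangle Y'|<|X\triangle Y|$; iteration forces $g(X)\ge g(X)+\varepsilon$, a contradiction. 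At algorithm termination the absence of a negative-reduced-length augmenting walk is precisely local optimality of $X^\ast$ for $f_1[+p]$ on $\J_1$ and for $f_2[-p]$ on $\J_2$; both are M$^\natural$-concave by Lemma~\ref{LEMargmax}, so the local-to-global lemma yields $X^\ast\in \arg\max(f_1[+p])\cap \arg\max(f_2[-p])$. Combining this with the easy direction completes the proof, and tracing through the algorithm shows $p$ can be produced in polynomial time.
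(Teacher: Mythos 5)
The paper does not actually prove this lemma: it is imported as a known theorem of Murota on valuated matroid intersection (the footnote points to Theorem 11.2(2) of Murota's survey with $w\equiv 0$), so there is no in-paper argument to compare against. Judged on its own, your outline reproduces the standard architecture of that theorem correctly at the top level. The two reductions are sound: the inclusion $\arg\max(f_1[+p])\cap\arg\max(f_2[-p])\subseteq\arg\max(f_1+f_2)$ holds for every $p$ because $f_1[+p]+f_2[-p]=f_1+f_2$ on $\J_1\cap\J_2$, and the reverse inclusion does follow from exhibiting a single witness $X^\ast$ via the complementary-slackness chain you write. The local-to-global optimality criterion for a single M$^\natural$-concave function, proved by induction on $|X\triangle Y|$ using the exchange axiom, is also correct and is the right tool for converting the terminal state of the algorithm into global optimality of each summand.

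The gap is in the algorithmic core, which is exactly where all of the content of Murota's theorem sits. You assert that augmenting $X$ by the symmetric difference with a shortest reduced-length walk keeps $X$ in $\J_1\cap\J_2$, improves the objective, and lets the potential be updated so that all reduced arc lengths stay nonnegative. None of these is immediate, and the first is false for an arbitrary shortest walk: even in unweighted matroid intersection one must take a shortest augmenting path with the \emph{minimum number of arcs} (a ``no-shortcut'' path) and invoke the unique-perfect-matching lemma to conclude that the symmetric difference is again a common feasible set; in the valuated setting one additionally needs Murota's upper-bound lemma (the gain of a multi-exchange is dominated by the sum of the corresponding arc lengths) to certify that the shortest-path distances yield a valid new potential. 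Without these lemmas the claim that termination produces a locally, hence globally, optimal pair is unsupported. Two smaller points: the polynomial bound on the number of augmentations should come from $|X|$ increasing by one per round, not from ``$f_1(X)+f_2(X)$ strictly increases'' (a real-valued quantity can increase indefinitely); and you need an initial member of $\J_1\cap\J_2$, not merely one member of each family --- trivial in the paper's application where $\emptyset\in\cI_1\cap\cI_2$, but requiring an argument in the generality of the lemma as stated.
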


\section{One-sided Preferences Models}
\label{sec:onesided}
%In this section, we prove our results on the one-sided preferences model described in Section~\ref{sec:previous}.
We show Theorems~\ref{thm:weighted} and \ref{thm:M-conc} in Sections~\ref{sec:additive} and \ref{sec:Mconcave}, respectively.
We remark that 
our proof of Theorem~\ref{thm:M-conc}
is not a direct extension of the proof of Theorem~\ref{thm:weighted},
although the weight $w(I) = \sum_{u\in I}w(u)$ is a special case of an M$^\natural$-concave function. 
In both of the proofs, we reduce our problems to the popular common base problem, described below.

As in Section~\ref{sec:previous}, we let $M_1=(S, \cI_1)$ be a 1-partition matroid 
defined by a partition $\{S_1,S_2,\ldots, S_n\}$ of $S$, 
associated with a partial order $\succ_i$ on $S_i \cup \emptyset$ for each $i\in [n]$, 
and let $M_2=(S,\cI_2)$ be an arbitrary matroid. 
Denote the base family of $M_1$ by $\B_1$, and 
that of $M_2$ by $\B_2$. 
Assume that $\B_1\cap \B_2 \neq \emptyset$. 
A member $I\in \B_1\cap \B_2$ is called a {\em popular common base} if $\Delta(I,J)\geq 0$ for every common base $J\in \B_1\cap \B_2$, where $\Delta(I,J)$ is defined by
$$\Delta(I,J)=|\{\, i\in [n]: I(i)\succ_i J(i)\,\}|-|\{\, i\in [n]: J(i)\succ_i I(i)\,\}|$$ as in Section~\ref{sec:previous}. The {\em popular common base problem} asks to determine the existence of a popular common base and to find one if it exists. A polynomial-time algorithm to solve this problem was recently proposed. %The following theorem has been shown recently.

\begin{theorem}[Kavitha--Makino--Schlotter--Yokoi\cite{KMSY24}]\label{thm:pop-base}
Given a $1$-partition matroid $M_1=(S, \cI_1)$ associated with partial orders $\{\succ_i\}_{i\in [n]}$ and an arbitrary matroid $M_2=(S, \cI_2)$, one can determine the existence of a popular common base and find one if it exists in polynomial time. 
%The algorithm works for partial order preferences.
\end{theorem}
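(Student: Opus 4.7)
The plan is to characterize popular common bases via a tractable dual certificate, and then to search for such a certificate using a primal--dual scheme reminiscent of matroid intersection algorithms. The starting point is that popularity is an LP-type condition: writing $\vote_i(u,v) = \mathbf{1}[u\succ_i v] - \mathbf{1}[v\succ_i u]$ for each agent $i\in[n]$ and each pair $u, v \in S_i \cup \{\emptyset\}$, one has $\Delta(I,J) = \sum_{i \in [n]} \vote_i(I(i), J(i))$, so $I$ is popular iff $\max_{J \in \B_1 \cap \B_2} \sum_i \vote_i(J(i), I(i)) \le 0$. Because the common base polytope of the matroid intersection $M_1 \cap M_2$ is integral (with $M_1$ a partition matroid and $M_2$ arbitrary), this maximization has a compact LP dual, which yields a polynomial-size witness whenever $I$ is popular.

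The first step is to formalize the witness as a labeling of agents: for each $i\in[n]$, a label drawn from the structure of $\succ_i$ that records which elements of $S_i$ are the popularity-acceptable outcomes for $i$. With strict preferences, a single label per agent suffices (e.g.\ a ``first-choice'' or ``second-choice'' tag) in the spirit of Abraham et al.~\cite{AIKM07}; with partial orders one needs labels ranging over antichains of $\succ_i$, extending the framework of Kavitha~\cite{kavitha2021maximum}. The second step is to build an extended ground set $\widetilde S$, obtained by tagging each element $u \in S$ with its compatible labels, together with two matroids $\widetilde M_1, \widetilde M_2$ on $\widetilde S$ whose common bases project bijectively onto the popular common bases of $M_1 \cap M_2$ that realize the chosen labeling; $\widetilde M_1$ encodes ``at most one label per agent'' and $\widetilde M_2$ inherits the structure of $M_2$ combined with label-consistency constraints. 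The third step is to invoke any polynomial-time matroid intersection algorithm on $(\widetilde M_1, \widetilde M_2)$.

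The principal obstacle is establishing the two matroid constructions in the partial-order regime and proving that only polynomially many candidate labelings need to be considered. With strict orders the antichains are singletons and the reduction is natural (via partition-type matroids and truncations), but with partial orders the ``acceptability'' condition for each agent depends intricately on what the other agents' labels are, because the popularity witness and the chosen common base must be mutually consistent. Untangling this circularity requires a careful LP-duality argument that shows, for some polynomial family of witnesses, feasibility of $(\widetilde M_1, \widetilde M_2)$ is equivalent to popularity; packaging it as a single matroid intersection instance is where the rank-based analysis of the partial orders $\{\succ_i\}_{i\in[n]}$ is crucial, since naive enumeration over labelings would be exponential.
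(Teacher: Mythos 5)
First, a point of orientation: Theorem~\ref{thm:pop-base} is not proved in this paper at all; it is quoted verbatim from Kavitha--Makino--Schlotter--Yokoi \cite{KMSY24} and used as a black box in the reductions for Theorems~\ref{thm:weighted} and \ref{thm:M-conc}. So there is no in-paper proof to compare against, and your proposal must stand on its own as a reconstruction of the argument in \cite{KMSY24}.

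Judged on that basis, your proposal has a genuine gap. The ingredients you name are the right ones -- popularity of a \emph{fixed} common base $I$ is indeed an LP condition, since $J\mapsto\sum_i \vote_i(J(i),I(i))$ is linear in the characteristic vector of $J$ and the common base polytope of two matroids is integral, so \emph{verifying} popularity is a weighted matroid intersection problem and a compact dual certificate exists. The difficulty, which you correctly identify but then explicitly leave unresolved (``untangling this circularity requires a careful LP-duality argument\dots''), is the search: one must find, among exponentially many candidate dual certificates, one that is simultaneously realizable by some common base. Your sketch offers no mechanism for this beyond asserting that ``some polynomial family of witnesses'' suffices, and your suggestion that everything can be ``packaged as a single matroid intersection instance'' is not how the known argument goes. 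In \cite{KMSY24} (following the popular assignment algorithm of \cite{KKMSS22}), the dual certificate is a level function on the elements inducing a chain of subsets; the algorithm is \emph{iterative}, starting from the all-zero levels, testing feasibility of the induced matroid intersection instance, and, upon failure, using the structure of the failure (a deficient set) to raise levels monotonically. Polynomiality comes from a bound on how far levels can rise, not from enumerating a polynomial family of labelings up front. Without this monotone-adjustment mechanism (or an equivalent one), your proposal does not yield a polynomial-time algorithm; as written it reduces the theorem to exactly the statement that needed proving.

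A secondary caution: your step two asserts a bijection between common bases of $(\widetilde M_1,\widetilde M_2)$ and popular common bases realizing a fixed labeling, and that $\widetilde M_2$ ``inherits the structure of $M_2$ combined with label-consistency constraints.'' For an arbitrary matroid $M_2$ and partial-order antichain labels, it is not automatic that the label-consistent sets form a matroid; in the known construction this is arranged by taking $M_2$ on a duplicated ground set (one copy per level) together with a partition constraint, and verifying the matroid axioms there is a nontrivial step you would need to carry out rather than assume.
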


\begin{comment}
\begin{remark}
Note that what we want to find in Theorem~\ref{thm:weighted} is a popular solution in the family of maximum-weight common independent sets (i.e., $\{\, I\in \cI_1\cap \cI_2 : w(I)\geq w(J)~(J\in \cI_1\cap \cI_2)\,\}$), not the intersection of the families $\{\, I\in \cI_j : w(I)\geq w(J)~(J\in \cI_j)\,\}\,(j=1,2)$ of maximum weight independent sets. %(each of which forms an M$^\natural$-convex family and hence can be transformed to the base family of a matroid by Lemma~\ref{lem:projection}). 
Therefore, though it is known that the family of maximum weight independent sets forms M$^\natural$-convex family, this fact (with respect to the original weights) cannot be used to solve our problem. A reduction that deals with the two matroids completely separately will not work for our problem.
\todo{Is this remark necessary? This is added to respond to reviewer's comment.}
\end{remark}
\end{comment}

%%\subsection{Additive Utility Functions}
\subsection{Finding a Popular Maximum-Weight Common Independent Set}\label{sec:additive}
% In this section, we consider the case where we have a weight function $w:S\to \R$ on the elements and the utility of a subset $I\subseteq S$ is given by $w(I)=\sum_{u\in I}w(u)$ for any $I\subseteq S$. 
% Let $\opt(w)$ be the maximum weight of a common independent set, i.e., $\opt(w)=\max\{\,w(I): I\in \cI_1\cap \cI_2\,\}$. We call a common independent set $I\in \cI_1\cap \cI_2$ a {\em popular maximum-weight-common-independent-set} if $w(I)=\opt(w)$ and $\Delta(I,J)\geq 0$ for every $J\in \cI_1\cap \cI_2$ with $w(J)=\opt(w)$. 

% Note that the problem of finding a popular maximum-weight-common-independent-set generalizes both the popular common independent set problem (where $w(u)=0$ for every $u\in S$) and the popular common base problem (where $w(u)=1$ for every $u\in S$). We show that the problem is tractable for general weight functions. 

In reducing the popular maximum-weight common independent set problem to the popular common base problem, 
we will make use of a dual optimal solution of weighted matroid intersection such that 
its support is a \emph{chain}. 
A family $\C \subseteq 2^S$ of subsets of $S$ is referred to as 
a \emph{chain} if, for any distinct $C, C'\in \C$, 
it holds that $C\subsetneq C'$ or $C'\subsetneq C$. 

\begin{proof}[Proof of Theorem \ref{thm:weighted}]
Recall that our input consists of a $1$-partition matroid $M_1=(S, \cI_1)$ associated with partial orders $\{\succ_i\}_{i\in [n]}$, another matroid $M_2=(S, \cI_2)$, and a weight function $w\colon S \to \R$.

Consider the linear programming problem \ref{LP1} 
with variables $\vec{x} \in \R^S$ 
described below, 
in which $r:2^S\to \Zp$ is the rank function of the matroid $M_2=(S, \I_2)$. 
\ref{LP1} represents weight maximization over the matroid intersection polytope of $\cI_1\cap \cI_2$, 
and 
\ref{LP2} is the dual of \ref{LP1}, 
with variables $\vec{y}\in \R^{2^S}$ and $\vec{\alpha} \in \R^{[n]}$.
\begin{table}[h]
\hspace{-5mm}
\begin{minipage}[t]{0.42\linewidth}\centering
\begin{align}
\displaystyle \mbox{Max.}& \quad \sum_{u \in S} w(u)\cdot x_u & \tag{LP1} \label{LP1}\\ 
\notag \text{s.t.}& \quad \sum_{u\in S_i}x_u \leq 1 &  (i \in [n]),\\ \notag
&\quad \sum_{u\in X} x_u  \leq r(X) & (X \subseteq S),\\ 
\notag
&\quad \qquad x_u \geq 0 & (u \in S).
\end{align}
\end{minipage}
%\hspace{0.4cm}
\hspace{0.05\textwidth}
\begin{minipage}[t]{0.54\linewidth}\centering
\begin{align}
\omit\rlap{$ \displaystyle \mbox{Min.} ~~\sum_{X \subseteq S }r(X)\cdot y_X + \sum_{i\in [n]}\alpha_i$} & & \tag{LP2} \label{LP2} \\ 
\notag \text{s.t.} \phantom{m} \sum_{X: u\in X} y_X+\alpha_i & \ge w(u)  
    & (u \in S_i,
       i\in [n]),\\
       \notag \alpha_i & \ge 0  & (i \subseteq [n]),\\
       \notag y_X & \ge 0  & (X \subseteq S).
\end{align}
\end{minipage}
\end{table}

It follows from the submodularity of the rank function $r$ 
that there exists a dual optimal solution $(\vec{y}, \vec{\alpha})$ 
such that the support $\C=\{\,X \subseteq S : y_X>0\,\}$ of $\vec{y}$ is a chain
 (see, e.g., \cite[Theorem~41.12]{Schrijver},\cite[Theorems~13.2.10 and 5.5.7]{Frankbook}). 
Let $(\vec{y}, \vec{\alpha})$ be such an optimal solution for \ref{LP2}. 
By the integrality of the matroid intersection polytope, \ref{LP1} admits integral optimal solutions, and hence the optimal value of \ref{LP1} is $\opt(w)$.
It then follows that 
a common independent set $I\in \cI_1\cap \cI_2$ is of maximum weight 
if and only if its characteristic vector satisfies the complementary slackness conditions with $(\vec{y}, \vec{\alpha})$, which is equivalent to the following claim.

\begin{claim}
\label{CLcs_one}
Let $(\vec{y}, \vec{\alpha})$ be an 
optimal solution for \ref{LP2}
such that the support $\C=\{X \subseteq S : y_X>0\}$ of $\vec{y}$ is a chain. 
A common independent set $I\in \cI_1\cap \cI_2$ satisfies $w(I)=\opt(w)$
if and only if the following three conditions are satisfied:
\begin{description}
\item[(1.1)] For any $i\in [n]$ and any $u\in I \cap S_i$, it holds that $\sum_{X: u\in X} y_X+\alpha_i = w(u)$.
\item[(1.2)] For any $i\in [n]$ with $\alpha_i>0$, it holds that $|I\cap S_i| = 1$.
%%\item[(1.3)] For any $C\in \C$, it holds that $|I\cap C|= \rank(C)$.
\item[(1.3)] For any $C\in \C$, it holds that $|I\cap C|= r(C)$.
\end{description}
\end{claim}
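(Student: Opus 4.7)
The plan is to establish Claim~\ref{CLcs_one} by a direct application of LP duality and complementary slackness to the pair \ref{LP1}--\ref{LP2}. The crucial background fact I would invoke is Edmonds' theorem that the polytope described by the constraints of \ref{LP1} is the matroid intersection polytope of $\cI_1 \cap \cI_2$ and is integral, so its vertices are exactly the characteristic vectors $\chi_I \in \{0,1\}^S$ of common independent sets $I \in \cI_1 \cap \cI_2$. Consequently the optimum of \ref{LP1} equals $\opt(w)$, and a common independent set $I$ satisfies $w(I) = \opt(w)$ if and only if $\chi_I$ is an optimal primal solution. Since the pair $(\vec{y}, \vec{\alpha})$ is given as dual-optimal, this is in turn equivalent to $\chi_I$ and $(\vec{y}, \vec{\alpha})$ being complementary.

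I would then translate each complementary slackness condition into one of (1.1)--(1.3). The primal CS condition requires that for each $u \in S$ with $(\chi_I)_u = 1$, i.e.\ each $u \in I$, the corresponding dual inequality $\sum_{X:\, u \in X} y_X + \alpha_i \ge w(u)$ (where $i$ is the unique index with $u \in S_i$) be tight; this is exactly (1.1). The dual CS condition yields two parts: for each $i \in [n]$ with $\alpha_i > 0$, the partition constraint $\sum_{u \in S_i} x_u \le 1$ must be tight at $\chi_I$, so $|I \cap S_i| = 1$, giving (1.2); and for each $X \subseteq S$ with $y_X > 0$, equivalently each $X \in \C$ by the choice of $(\vec{y}, \vec{\alpha})$, the rank constraint $\sum_{u \in X} x_u \le r(X)$ must be tight at $\chi_I$, so $|I \cap X| = r(X)$, giving (1.3).

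For the converse, I would observe that any $I \in \cI_1 \cap \cI_2$ has a feasible characteristic vector for \ref{LP1}, and whenever $I$ satisfies (1.1)--(1.3) all complementary slackness conditions with $(\vec{y}, \vec{\alpha})$ are met, so $\chi_I$ is primal optimal and $w(I) = \opt(w)$. I do not anticipate any real obstacle here, as the argument is standard complementary slackness. The only subtlety worth flagging is that the chain hypothesis on $\C$ is not actually used in proving the equivalence; its role is external, namely to keep the family of tight rank constraints in (1.3) polynomial in size so that the subsequent reduction to the popular common base problem (via Theorem~\ref{thm:pop-base}) is algorithmically efficient. The existence of a dual optimum whose $\vec{y}$-support is a chain is itself guaranteed by the submodularity of the rank function $r$, as already noted before the claim.
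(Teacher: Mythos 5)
Your proof is correct and follows essentially the same route as the paper: the paper's own (one-sentence) proof simply identifies (1.1)--(1.3) with the complementary slackness conditions for \ref{LP1}--\ref{LP2}, relying on the integrality of the matroid intersection polytope exactly as you do. Your observation that the chain hypothesis on $\C$ plays no role in the equivalence itself, only in the subsequent reduction, is also accurate.
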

\begin{proof}
Conditions (1.2) and (1.3) are the complementary slackness conditions with respect to the feasibility constraints in \ref{LP1}, and (1.1) is the one with respect to the feasibility constraints in \ref{LP2}.
\end{proof}

This claim says that 
a popular maximum-weight common independent set is exactly a common independent set that is popular within those satisfying (1.1)--(1.3).
To reduce the popular maximum-weight common independent set problem 
to the popular common base problem, below we construct two matroids $M'_1$ and $M'_2$ 
such that 
the common bases of $M'_1$ and $M'_2$ correspond to the common independent sets in $\I_1\cap \I_2$ satisfying (1.1)--(1.3). 

One matroid $M'_1$ is defined in the following way. 
Let $T\subseteq S$ be the set of tight elements with respect to the constraints in \ref{LP2}, equivalently, elements satisfying the equation in (1.1):
\[T=\textstyle\bigcup_{i\in [n]}\left\{\, u\in S_i: \textstyle\sum_{X: u\in X} y_X+\alpha_i = w(u)\,\right\}.\]
For each $i\in [n]$, let $S'_i=S_i\cap T$ if $\alpha_i>0$ and $S'_i=(S_i\cap T)\cup \{d_i\}$ if $\alpha_i=0$, where $d_i$ is a dummy element not in $S$. 
%%and satisfies $v\succ_i d_i$ for every $v\in S_i\cap T$. 
Let $D$ be the set of the dummy elements, i.e., $D=\{\, d_i: i\in [n],\, \alpha_i=0\,\}$, 
and let $S'=T\cup D$. 
Note that $\{S'_1, S'_2,\dots, S'_n\}$ is a partition of $S'$. 
Now our matroid $M'_1=(S', \cI'_1)$ is a $1$-partition matroid on $S'$, 
defined from $\{S'_1, S'_2,\dots, S'_n\}$. 
Observe that, for any base $B$ of $M'_1$, 
a set $I\coloneqq B\cap S$ satisfies conditions (1.1) and (1.2), as well as $I \in \cI_1$.

\medskip
The other matroid $M'_2=(S', \cI'_2)$ is defined on $S'$ in the following manner. 
Let 
$|\C|=k$ 
and let 
$C_j$ be the $j$th inclusionwise minimal member of $\C$ for each $j\in [k]$, i.e., $\C=\{C_1, C_2,\dots, C_k\}$ and $C_1\subsetneq C_2\subseteq \cdots \subsetneq C_k$. 
%We can assume $C_1\neq \emptyset$, because changing the value of $y_{\emptyset}$ affects neither the objective value nor the feasibility in \ref{LP2}. 
Set $C_0=\emptyset$ and $C_{k+1}=S$. 
For each $j=1,2,..., k+1$, let $M_2^j$ be a matroid on $(C_j\setminus C_{j-1})\cap T$ obtained from $M_2$ by contracting $C_{j-1}$ and restricting to $(C_j\setminus C_{j-1})\cap T$. Also, let $N$ be the $(n-r(C_k))$-truncation of the direct sum of $M_2^{k+1}$ and the free matroid on $D$.
Now let $M'_2$ be the direct sum $M_2^1\oplus M_2^2\oplus \cdots\oplus M_2^k\oplus N$. 

Note that a maximum-weight common independent set $I_{\rm opt}$ satisfies (1.1) and (1.3), i.e., $I_{\rm opt}\subseteq T$ and $|I_{\rm opt}\cap C_j|=r(C_j)$ for each $j=1,2,\dots,k$. These imply that $r(C_j\cap T)=r(C_j)$ for each $C_j$, and hence the rank of each $M_2^j$ is $r(C_j)-r(C_{j-1})$. Then, for each $j=1,2,\dots, k$,  $M_2^1\oplus M_2^2\oplus \cdots\oplus M_2^j$ is a matroid on $C_j\cap T$ with rank $\sum_{k=1}^j(r(C_k)-r(C_{k-1}))=r(C_j)$.
Then, we can observe that, for any base $B$ of this new matroid $M'_2=M_2^1\oplus M_2^2\oplus \cdots\oplus M_2^k\oplus N$, a set $I\coloneqq B\cap S$ satisfies (1.1) and (1.3), as well as $I\in \cI_2$.

\smallskip
The correspondence between the common bases of $M'_1$ and $M'_2$ and 
the maximum-weight common independent set of $M_1$ and $M_2$ is observed as follows. 
For any common base $B$ of $M'_1$ and $M'_2$, the set $I\coloneqq B\cap S$ satisfies $I\in \cI_1\cap \cI_2$ and (1.1)--(1.3), i.e., $I$ is a maximum-weight common independent set. Conversely, for any maximum-weight common independent set $I$, a set $B_I=I\cup \{\, d_i: i\in [n],\, I\cap S_i=\emptyset\,\}$ is a common base of $M'_1$ and $M'_2$. 

The popularity in $\cI_1\cap \cI_2$ is also transferred to the common bases of $M_1'$ and $M_2'$. 
Recall that 
$u\succ_i \emptyset$ for each $i\in [n]$ and each $u\in S_i$. 
% in the partial order $\succ_i$ on $S_i \cup \emptyset$, 
% $\emptyset$ (assigned nothing) is worse than any element. 
We now construct a partial order $\succ_i'$ on $S_i'$ 
such that 
$u\succ_i' v$ if and only if $u\succ_i v$ for $u,v \in S_i\cap T$ 
and such that $v \succ_i' d_i$ for each $v \in S_i\cap T$. 
%%and that the dummy element $u_i$ is worse than any other element. 
Let $I, J\in \cI_1\cap \cI_2$ be maximum-weight common independent sets and $B_I, B_J$ be the corresponding common bases in $M'_1$ and $M'_2$.
It is straightforward to see that $\Delta(I, J)=\Delta(B_I, B_J)$. 
We thus conclude that $I \subseteq S$ is a popular maximum-weight common independent set in $\cI_1\cap \cI_2$  if and only if its corresponding common base $B_I$ is a popular common base in $M'_1$ and $M'_2$.

Therefore,  in order to find a popular maximum-weight common independent set in $M_1$ and $M_2$, 
it suffices to solve the popular common base problem for $M'_1$ and $M'_2$.
Since the above-mentioned dual optimal solution $(\vec{y}, \vec{\alpha})$ can be computed efficiently (see \cite[Theorems~13.2.10 and 5.5.7]{Frankbook}), we can construct $M'_1$ and $M'_2$ efficiently. 
It then follows from Theorem~\ref{thm:pop-base} that a popular common base for $M'_1$ and $M'_2$ can be computed in polynomial time. 
\end{proof}

%%\subsection{M$^\natural$-concave Utility Functions}
\subsection{Finding a Popular Maximum-Utility Common Independent Set}
\label{sec:Mconcave}
We prove Theorem \ref{thm:M-conc} by designing a polynomial reduction of the popular maximum-utility common independent set problem to the popular common base problem,  on the basis of Lemmas \ref{LEMargmax} and \ref{LEMsplit}. 

\begin{proof}[Proof of Theorem~\ref{thm:M-conc}]
Recall that our input consists of a $1$-partition matroid $M_1=(S, \cI_1)$ associated with partial orders $\{\succ_i\}_{i\in [n]}$, another matroid $M_2=(S, \cI_2)$, and an M$^\natural$-concave function $f\colon \cI_2 \to \R$.

Let $\opt(f)=\max\{\,f(I): I\in \cI_1\cap \cI_2\,\}$ and $\delta_{\cI_1}:\cI_1\to \R$ be a function 
on the independent set family $\cI_1$
that is constantly zero. 
Clearly $\delta_{\cI_1}$ is M$^\natural$-concave and 
$$\arg \max(\delta_{\cI_1}+f)=\{\, I\in \cI_1\cap \cI_2:  f(I)=\opt(f)\,\}.$$ 
We then apply Lemma~\ref{LEMsplit} with $\delta_{\cI_1}$ and $f$ in places of $f_1$ and $f_2$, respectively, to obtain that there exists a vector $p\in \R^S$ satisfying 
\[\arg \max(\delta_{\cI_1}+f)=\arg\max(\delta_{\cI_1}[+p])\cap \arg\max(f[-p]),\]
and that such $p$ can be computed efficiently. Note that the domains of $(\delta_{\cI_1}+f)$, $\delta_{\cI_1}[+p]$, and $f[-p]$ are $\cI_1\cap \cI_2$, $\cI_1$, and $\cI_2$, respectively. 
% By definition, 
% %%$\arg \max(\delta_{\cI_1}+f)=\{\, I\in \cI_1\cap \cI_2:  f(I)=\opt(I)\,\}$, 
% $\{\, I\in \cI_1\cap \cI_2:  f(I)=\opt(I)\,\}= \arg \max(\delta_{\cI_1}+f) 
% $, 

Now the popular maximum-utility common independent set problem is translated to 
the problem of finding a popular solution in the family $\arg\max(\delta_{\cI_1}[+p])\cap \arg\max(f[-p])$.
%\arg\max(\delta_{\cI_1}[+p])=\arg\max\{\, p(I): I\in \cI_1\,\}$, and $\arg\max(w[-p])=\arg\max\{\, w(I)-p(I): I\in \cI_2\,\}$
Below
we construct two matroids $M'_1$ and $M_2'$ so that 
$M_1'$ represents $\arg\max(\delta_{\cI_1}[+p])$ and 
$M_2'$ represents $\arg\max(f[-p])$.

%%Recall that $M_1=(S, \cI_1)$ is a partition matroid defined from $\{S_1, S_2,\dots, S_n\}$. 
The matroid $M_1'$ is constructed in the following way. 
For each index $i\in [n]$, let $p^{\star}_i=\max\{\,p(u): u\in S_i\,\}$ and 
$S^{\star}_i= \{\, u\in S_i: p(u)=p^{\star}_i\,\}$, 
%and $S_i^{\star}=\{u\in S_i: p(u)=p^{\star}_i\}$. 
and 
%%let $S'_i=S_i\cap T$ if $p^{\star}_i>0$ and $S'_i=(S_i\cap T)\cup \{d_i\}$ if $p^{\star}_i \le 0$, where $d_i$ is a dummy element not in $S$. Let $D$ be the set of the dummy elements, i.e., $D=\{\, d_i: i\in [n],\, p^{\star}_i\leq 0\,\}$, 
%%and let $S'=T\cup D$. 
define a set $S_i'$ by 
\begin{align*}
S_i' = 
\begin{cases}
    S_i^{\star} & (p_i^{\star} > 0), \\
    S_i^{\star} \cup \{d_i\} & (p_i^{\star} = 0), \\
    \{d_i\} & (p_i^{\star} < 0), 
\end{cases}
\end{align*}
where $d_i$ is a dummy element not in $S$. Let $D$ be the set of the dummy elements, i.e., $D=\{\, d_i: i\in [n],\, p^{\star}_i\leq 0\,\}$. 
Define a subset $S^{\star}\subseteq S$ by 
%%\[S^{\star}=\bigcup_{i\in[n]:p^{\star}_i\geq 0}\{\, u\in S_i: p(u)=p^{\star}_i\,\}.\] 
\[S^{\star}=\bigcup_{i\in[n]:p^{\star}_i\geq 0}S_i^{\star}\] 
and let $S'=S^{\star}\cup D$. 
Note that $\{S_1',S_2',\ldots, S_n'\}$ is a partition of $S'$. 
Now our matroid 
$M'_1=(S', \cI'_1)$ is the $1$-partition matroid on $S'$ defined by the partition $\{S'_1, S'_2,\dots, S'_n \}$. 
Observe that
%%the set $I\coloneqq B\cap S$ satisfies that $I\in \arg\max(\delta_{\cI_1}[+p])$.
\begin{align}
\label{EQm1}
\mbox{$B\cap S \in \arg\max(\delta_{\cI_1}[+p])$ for each base $B$ of $M'_1$}. 
\end{align}
Conversely, 
for any set  $I \in \arg\max(\delta_{\cI_1}[+p])$, 
it holds that 
$I\cup \{\, d_i: i\in [n], I\cap S_i=\emptyset\}$ is a base of $M_1'$.

%Let $M'_1=(S', \cI'_1)$ be the partition matroid defined from a partition $\{S'_1, S'_2,\dots, S'_r\}$, where each $S'_i$ is defined as follows. 
%We set $S'_i=S_i\cap T$ if $\alpha_i>0$ and otherwise set $S'_i=(S_i\cap T)\cup \{u_i\}$, where $u_i$ is a dummy element not in $S$ and satisfies $v\succ_i u_i$ for every $v\in S_i\cap T$. Let $D$ be the set of added dummy elements, i.e., $D=\{\, u_i: i\in [n],\, \alpha_i\leq 0\,\}$. Then, we see that $S'=T\cup D$. 
%We observe that, for any base $B$ of $M'_1$, the set $I\coloneqq B\cap S$ satisfies $I\in \arg\max(\delta_{\cI_1}[+p])$.

The matroid $M'_2=(S', \cI'_2)$ is defined in the following manner. 
First, define a family $\J \subseteq 2^{S^{\star}}$ by 
$$\J=\arg\max(f[-p])\cap 2^{S^{\star}},$$ i.e., $\J$ is a family of the sets of $\arg\max(f[-p])$ included in $S^{\star}$. 
It is derived from Lemma \ref{LEMargmax} that $\arg\max(f[-p]) \subseteq 2^S$ is an 
%%M$^\natural$-convex family on $S$ 
M$^\natural$-convex family, and hence its restriction $\J$ is an M$^\natural$-convex family on $S^{\star}$, which directly follows from the definition of M$^\natural$-convex families.
We then define a family $\B'_2\subseteq 2^{S'}$ by $$\B'_2=\{\,B\subseteq S': B\cap S^{\star}\in \J, |B|=n\,\}.$$ 
Namely, 
a set $B\in \B_2'$ is obtained from a set in $\J$ by adding some dummy elements in $D$ so that the resulting set has size $n$. 
It then follows from Lemma~\ref{lem:projection} that $\B'_2$ forms the base family of a matroid 
with ground set $S'$, 
%Take any $X, Y\in \B'_2$ and $x\in X\setminus Y$. Set $X_T\coloneqq X\cap T$ and $Y_T\coloneqq Y\cap T$. By definition, $X_T, Y_T\in \J$. If $x\not\in X_T$ and $|X_T|\geq |Y_T|$, which implies $x\in D$ and $Y\cap D\neq \emptyset$, we have $X-x+y\in \B'_2$ for any $y\in (Y\setminus X)\cap D$. If $x\not\in X_T$ and $|X_T|<|Y_T|$
and we define this matroid as $M_2'$. 
Namely, 
$M'_2=(S', \cI'_2)$, 
where 
$$\cI'_2=\{\, I\subseteq S': I\subseteq B\text{ for some } B\in \B'_2\,\}.$$ %%We observe 
It is straightforward to see that 
%%for any base $B$ of $M'_2$, the set $I\coloneqq B\cap S$ satisfies $I\in \arg\max(f[-p])\cap 2^T$.
% $B\cap S\in \arg\max(f[-p])$ 
% for any base $B$ of $M'_2$. 
\begin{align}
\label{EQm2}
\mbox{$B\cap S \in \arg\max(f[-p])$ for each base $B$ of $M'_2$}. 
\end{align}
It now follows from \eqref{EQm1} and \eqref{EQm2} that 
\begin{align*}
    \mbox{$B\cap S \in \arg\max(\delta_{\cI_1}[+p])\cap \arg\max(w[-p])$ for each common base $B$ of $M'_1$ and $M'_2$}. 
\end{align*}
%%Conversely, for any maximum utility common independent set $I$, a set $I\cup \{\, d_i: i\in [n], I\cap S_i=\emptyset\}$ is a common base of $M'_1$ and $M'_2$. 
As in the proof of Theorem \ref{thm:weighted}, 
for each agent $i\in [n]$, 
construct a partial order $\succ_i'$ on $S_i'$ 
such that $u\succ_i' v$ if and only if $u\succ_i v$ for $u,v \in S_i$ and such that $v \succ_i' d_i$ for each $v \in S_i$. 
%Let $I, J\in \cI_1\cap \cI_2$ be maximum weight common independent sets and $B_I, B_J$ be the corresponding common bases in $M'_1$ and $M'_2$. Recall that $\emptyset$ (assigned nothing) is worse than any element and that the dummy element $u_i$ is worse than any other element. Then, $\Delta(I, J)=\Delta(B_I, B_J)$. 
On the basis of the same argument as in the proof of Theorem~\ref{thm:weighted}, 
we conclude that 
$B \cap S$ is a popular maximum-utility common independent set 
for a popular common base $B$ in $M_1'$ and $M_2'$ with respect to the partial orders $\{\succ_i'\}_{i\in [n]}$. 
\end{proof}

\begin{remark}\label{rem:implementation}
Here we explain some implementation details of the maximum-utility common independent set algorithm shown in Theorem~\ref{thm:M-conc}. Note that we have membership oracles of $\cI_1$, $\cI_2$ and a value oracle of $f$, and we know $\emptyset\in \cI_1\cap \cI_2$.
Then, by Lemma~\ref{LEMsplit},
the vector $p$ used in the proof can be computed efficiently.
Since an M$^\natural$-concave function can be maximized efficiently, we can compute the value $\max(f[-p])$ and a maximizer in polynomial time. Then, a membership oracle of $\J=\arg\max(f[-p])\cap 2^{S^{\star}}$ is available and we can obtain some member $J\in \J$.
Then, the membership oracle of the base family $\B'_2=\{\,B\subseteq S': B\cap S^{\star}\in \J, |B|=n\,\}$ is also available and we can obtain some base $B\in \B'_2$.

Since it is known that a base oracle of a matroid together with a single known base is polynomially equivalent to an independence oracle \cite[p.37]{robinson1980computational}, \cite[p.175]{lovasz1982selected}, we can simulate a membership oracle of $\cI'_2$ using that of $\B'_2$.
\end{remark}

\section{Two-sided Preferences Model}
\label{sec:twosided}
%Our objective in this section is to prove Theorem \ref{thm:twosided} by designing a polynomial-time algorithm for finding a popular maximum-weight common independent set in the two-sided preferences model. 
In this section, we first provide a precise definition of popularity in the two-sided preferences model in Section~\ref{sec:prelimi2}. 
In Section~\ref{sec:reduction}, we show that our popular maximum-weight common independent set problem can be reduced to the popular critical common independent set problem, which is solved in Section~\ref{sec:critical}. This completes the proof of Theorem~\ref{thm:twosided}.

\subsection{Popularity in the Two-sided Preferences Model}
\label{sec:prelimi2}

%In this section, we describe the definition of popular common independent sets in the two-sided preferences model introduced by \cite{kamiyama2020popular}. 
%Preferences in this model are  total orders; recall that problems in two-sided preferences model become NP-hard when ties are introduced even in the simple bipartite matching setting \cite{biro2010popular}.

For clarity of presentation, we use the term `pairing' to mean a family of disjoint pairs of elements from two given disjoint subsets. That is, a \emph{pairing between $A$ and $B$} is a matching in the complete bipartite graph with vertex classes $A$ and $B$. 
%while a \emph{perfect pairing} is a perfect matching in the same graph. 

An ordered matroid $M$ is a tuple $(S, \cI, \succ)$, where $(S, \cI)$ is a matroid with $\cI$ being the independent set family and $\succ$ is a total order on $S$.
Let $M=(S, \cI, \succ)$ be an ordered matroid such that the matroid $(S,\cI)$ is given as a direct sum $M_1 \oplus M_2 \oplus \dots \oplus M_k$ for some positive integer $k$ and matroids $M_j=(S_j,\cI_j)$ ($j \in [k]$). 
Given an ordered pair of independent sets $(I,J)\in \cI\times \cI$, 
let $N$ be a pairing between $I \setminus J$ and $J \setminus I$. 
%and consider the following four conditions:
We say that $N$  is a {\em feasible pairing} for $(I,J)$ if the following conditions (FP1)--(FP4) hold.
\begin{enumerate}
\setlength{\leftskip}{3mm}
\item[(FP1)] $I-u+v \in \cI$ for every $uv \in N$, where  $u\in I\setminus J$ and $v\in J\setminus I$.
\item[(FP2)] %Any element $v\in J\setminus I$ with $I+v\not\in \cI$ is covered by $N$.
Any element $v\in J\setminus I$ that is uncovered by $N$ satisfies $I+v\in \cI$.
%\item[(3**)] Any element $u\in I\setminus J$ with $J+u\not\in \cI$ is covered by $N$.
\item[(FP3)] Every $uv\in N$ satisfies $u,v\in S_j$ for some $j \in [k]$.
\item[(FP4)] The number of pairs of $N$ induced by $S_j$ is $\min\{|S_j \cap (I \setminus J)|, |S_j \cap (J \setminus I)|\}$ for every $j\in [k]$. 
\end{enumerate}
We provide an explanation of these conditions taken from \cite{csaji2022solving}. As mentioned in Section~\ref{sec:contribution}, each summand $M_j$ of a matroid corresponds to an agent. Intuitively, conditions (1), (3) and (4) mean that the agent corresponding to $M_j$ compares $I$ and $J$ by pairing the elements of $S_j \cap (I\setminus J)$ to elements of $S_j\cap (J \setminus I)$ with which they can be exchanged, and comparing each pair. 
When $|S_j\cap (J \setminus I)|$ is larger than $|S_j \cap (I\setminus J)|$, some elements $v\in S_j \cap (J\setminus I)$ must be left unpaired. Such an element $v$ is regarded as being paired with $\emptyset$. Condition (2) requires that this kind of pair should also be exchangeable, i.e., $I-\emptyset+v=I+v\in \mathcal{I}$. A feasible pairing is known to exist.

\begin{lemma}[Kamiyama~\cite{kamiyama2020popular}]\label{lem:feasible}

For any $(I, J)\in \cI\times\cI$, there exists a feasible pairing for $(I,J)$. 
\end{lemma}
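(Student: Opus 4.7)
The plan is to exploit the direct-sum structure $M = M_1 \oplus \cdots \oplus M_k$ to reduce the problem to the single-matroid case. Setting $I_j := I \cap S_j$ and $J_j := J \cap S_j$, each of $I_j, J_j$ is independent in $M_j$. Conditions (FP3) and (FP4) together require that $N$ be the disjoint union $\bigcup_{j\in[k]} N_j$, where each $N_j$ is a pairing between $I_j \setminus J_j$ and $J_j \setminus I_j$ of the maximum possible size $\min\{|I_j \setminus J_j|, |J_j \setminus I_j|\}$. Since independence in a direct sum decomposes coordinatewise, conditions (FP1) and (FP2) for $M$ hold for $N$ if and only if the corresponding conditions hold for each $N_j$ inside $M_j$. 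Hence it suffices to prove the lemma for a single matroid $M = (S, \cI)$ with $I, J \in \cI$.

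Fix such $M, I, J$ and assume without loss of generality $|I| \le |J|$; the reverse inequality is symmetric, and moreover there (FP2) is vacuous since no element of $J \setminus I$ is uncovered. Using the augmentation axiom (I2), extend $I$ one element at a time by elements of $J \setminus I$ to obtain an independent set $I' = I \cup X$ of size $|J|$, where $X \subseteq J \setminus I$. Then $I' \setminus J = I \setminus J$ and $J \setminus I' = (J \setminus I) \setminus X$, and these two sets have the same cardinality $|I \setminus J|$.

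The next step is to apply the classical matroid bijective exchange theorem (Brualdi) to the equal-sized independent sets $I', J$: there exists a bijection $\phi \colon I' \setminus J \to J \setminus I'$ with $I' - u + \phi(u) \in \cI$ for every $u$ in the domain. Define $N := \{\, u\phi(u) : u \in I \setminus J \,\}$, which has size $|I \setminus J|$ as required by (FP4). For (FP1), the inclusion $I - u + \phi(u) \subseteq I' - u + \phi(u) \in \cI$ together with the hereditary axiom (I1) yields $I - u + \phi(u) \in \cI$. For (FP2), any $v \in J \setminus I$ uncovered by $N$ lies in $X \subseteq I'$, so $I + v \subseteq I'$ is independent. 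The conditions (FP3) and (FP4) follow immediately from the fact that everything was carried out inside a single summand.

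The main conceptual ingredient is Brualdi's bijective exchange theorem for equal-sized independent sets; this is a well-established tool, so the real content of the plan is engineering-level. The only nontrivial move is the reduction to the equal-sized case via augmentation, after which downward closure allows every required condition to be pulled back from $I'$ to $I$. The potential obstacle I would watch out for is the asymmetry in the definition: (FP2) only constrains the side $J \setminus I$, which is precisely why the augmentation is done to the $I$-side in the case $|I| \le |J|$ (ensuring that leftover elements live in $X \subseteq I'$ and are therefore addable to $I$), and mirrored in the opposite case.
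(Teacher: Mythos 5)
Your proof is correct. Note first that the paper itself does not prove this lemma: it is imported verbatim from Kamiyama's work and used as a black box, so there is no in-paper argument to compare against. Judged on its own, your argument is sound and is essentially the canonical way to establish the claim. The direct-sum reduction is valid because independence in $M_1\oplus\cdots\oplus M_k$ decomposes coordinatewise, so (FP1) and (FP2) for a pair (or an uncovered element) lying in $S_j$ are equivalent to the corresponding statements in $M_j$, while (FP3) and (FP4) exactly say that $N$ splits into maximum-size pairings $N_j$ inside the summands. In the single-matroid case, the augmentation to equal cardinality followed by the bijective exchange theorem delivers (FP1) via heredity and (FP2) via $v\in X\subseteq I'$, and the mirrored augmentation of $J$ in the case $|I|>|J|$ makes (FP2) vacuous while Brualdi applied to $I$ and $J'$ gives (FP1) directly. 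The only cosmetic caveat is that Brualdi's theorem is usually stated for bases; for equal-cardinality independent sets one first restricts to $I'\cup J$ and truncates at rank $|I'|$, under which $I'$ and $J$ become bases and independence pulls back to the original matroid. You use this extension implicitly; it is standard, but worth a sentence if you write this up formally.
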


\noindent For independent sets $I$, $J$ and a feasible pairing $N$ for $(I,J)$, we define 
$\vote(I,J,N)\in \Z$ by 
\begin{align*}
    \vote(I,J,N)    =
    {}&{}|\{\,uv \in N: \mbox{$u\succ v$,  $u\in I\setminus J$, $v\in J\setminus I$}\}|
    \notag\\
    {}&{}-|\{\,uv \in N: \mbox{$u\prec v$, $u\in I\setminus J$, $v\in J\setminus I$}\}|+|I|-|J|. 
%\label{EQvoteIJN}
\end{align*}
Considering the most adversarial feasible pairing for $I$, we define $\vote(I,J) \in \Z$ as 
\begin{align*}
%\label{EQvoteIJ}
	\vote(I,J)&= \min\{\,\vote(I,J,N): N \text{~is a feasible pairing for $(I,J)$}\,\}. 
\end{align*}
Note that $\vote(I,J)$ is well-defined by Lemma~\ref{lem:feasible}.
We are now ready to describe popularity on matroid intersection. 

Let $M_1=(S, \cI_1, \succ_1)$ and $M_2=(S,\cI_2, \succ_2)$ be ordered matroids on the same ground set $S$. 
These matroids are 
given as direct sums $(S,\cI_1)= M^1_1 \oplus M^1_2 \oplus \dots \oplus M^1_{k_1}$ and $(S,\cI_2)= M^2_1 \oplus M^2_2 \oplus \dots \oplus M^2_{k_2}$. Each 
%summand $M^i_j=(S^i_j, \cI^i_j)$ 
matroid 
in the direct sums corresponds to an agent (voter), and hence there are $k_1+k_2$ agents. 
For each $i\in \{1,2\}$ and each ordered pair $(I,J)$ of common independent sets, 
we define $\vote_i(I,J)$ as above with respect to the ordered matroid $M_i$. 
We call a common independent set $I\in \cI_1\cap \cI_2$ {\em popular} if $\vote_1(I,J)+\vote_2(I,J) \geq 0$ for every common independent set $J\in \cI_1\cap \cI_2$. 
This definition of popularity is the same as the one in \cite{kamiyama2020popular, csaji2022solving}. See Remark~\ref{rem:def} for some discussions on other possible definitions.
%By using $\weakvote_i$ instead of $\vote_i$, we can define a stronger version of popularity, which we call {\em super popularity}. Namely, a common independent set $I\in \cI_1\cap \cI_2$ is \emph{super popular} if $\weakvote_1(I,J)+\weakvote_2(I,J) \ge 0$ for every common independent set $J\in \cI_1\cap \cI_2$. Note that this do not depend on the given decompositions of $M_1$ and $M_2$ into direct sums. Clearly from the definition, super popularity implies popularity.
It was shown in \cite{kamiyama2020popular} that a matroid kernel (defined below) is a popular common independent set, and hence a popular common independent set can be found efficiently using Fleiner's matroid kernel algorithm \cite{fleiner2001matroid,fleiner2003fixed}.

Let us now introduce a weight function $w\colon S \to \R$ and restrict our attention to maximum-weight common independent sets. 
%That is, it is of primary importance to maximize the total weight, and we aim to find a popular solution among common independent sets attaining maximum weight.
Let $\opt(w) = \max\{\,w(I) \colon I\in \cI_1\cap \cI_2\,\}$.
\begin{definition}
A common independent set $I \in \cI_1\cap \cI_2$ is called a {\em popular maximum-weight common independent set} if $w(I) = \opt(w)$ and 
$\vote_1(I,J)+\vote_2(I,J)\ge 0$ for each common independent set $J\in \cI_1\cap \cI_2$ with $w(J)=\opt(w)$. 
\end{definition}
Through Sections~\ref{sec:reduction} and \ref{sec:critical}, we show that there exists a polynomial-time algorithm that outputs a popular maximum-weight common independent set for any instance. This also serves as a proof of the existence of a popular maximum-weight common independent set, which is not obvious from the definition.

\paragraph{{\em Matroid Kernels.}}
We now formally describe our key tool, {\em matroid kernels} \cite{fleiner2001matroid, fleiner2003fixed}. It can be seen as a natural generalization of bipartite stable matchings. 
Let $M_1=(S, \cI_1, \succ_1)$ and $M_2=(S,\cI_2, \succ_2)$ be two ordered matroids on the same ground set $S$. For a common independent set $I \in \cI_1 \cap \cI_2$, we say that an element $v\in S\setminus I$ is {\em dominated} by $I$ in $M_i$ if $I+v \notin \cI_i$ and $u \succ_i v$ for every $u \in I$ for which $I-u+v \in \cI_i$. We call a common independent set $I \in \cI_1 \cap \cI_2$ an \emph{$(M_1,M_2)$-kernel} if every $v \in S \setminus I$ is dominated by $I$ in $M_1$ or $M_2$. If an element $v \in S \setminus I$ is dominated in neither $M_1$ nor $M_2$, we say that $v$ \emph{blocks} $I$.
Fleiner \cite{fleiner2001matroid,fleiner2003fixed} showed that a matroid generalization of the Gale-Shapley algorithm efficiently finds a matroid kernel. 

\begin{remark}\label{rem:def}
We provide some discussions on the definition of popularity. In contrast to the popularity in the one-to-one bipartite matching model, the concept of popularity is not so straightforward in the many-to-many matching model (with matroid constraints). 
%In this setting, each agent can be assigned multiple partners while she only has a preference on elements, rather than that on subsets, and hence there are  several different ways in which she can compare two matchings based on the sets of partners. 

The definition of popularity we are adopting is proposed in \cite{kamiyama2020popular} and used also in \cite{csaji2022solving}. This definition is reasonable in the sense that various interesting properties of popular matching in bipartite graphs extend to the matroid constrained setting under this definition. 
For example, one important fact on popularity is that it is a relaxation of stability, and this fact extends to the matroid constrained setting under the current definition of popularity. That is, a matroid kernel (which is arguably a natural matroid generalization of a stable matching) is a popular common independent set \cite{kamiyama2020popular}. Also, the tractability of the maximum popular matching problem extends to the matroid constrained setting~\cite{kamiyama2020popular, csaji2022solving} by generalizing the algorithm in the bipartite matching case \cite{huang2011popular, kavitha2014size} quite naturally.

In \cite{csaji2022solving}, some variants of popularity are investigated. The authors defined a {\em weakly feasible pairing} as a pairing that satisfies (FP1) and (FP2) (but not necessarily (FP3) and (FP4)) and defined a {\em super popularity} in the same manner as popularity by using weakly feasible pairings instead of feasible pairings. Super popularity is stronger than popularity and is independent from the direct sum representations of the two input matroids. Actually, we can observe from our proofs that the output of our algorithm is super popular. Therefore, we can find a super popular maximum-weight common independent set. We state our result Theorem~\ref{thm:twosided} with popularity (i.e., in a weaker form) as popularity has a more intuitive interpretation.

The authors of \cite{csaji2022solving} also proposed {\em defendability}: a common independent set $I$ is  {\em defendable} if $\vote_1(J,I)+\vote_2(J,I)\leq 0$ for every common independent set $J$. While the definition of popularity compares $I$ to $J$ using feasible pairings for $(I,J)$ that are most adversarial for $I$, the definition of defendability uses feasible parings for $(J,I)$ that is best possible for $I$. It was shown in \cite{csaji2022solving} that popularity implies defendability, which is not trivial for general matroids because feasible parings for $(I, J)$ are not the same as feasible pairings for $(J,I)$. Because the output of our algorithm showing Theorem~\ref{thm:twosided} is popular (moreover, super popular), it also satisfies defendability.

In \cite{csaji2022solving}, the authors also investigated other popularity notion, called {\em lexicographic popularity}, in which each agent casts only one vote comparing the sets assigned in two matchings lexicographically. For this definition of popularity, both existence and verification problems become coNP-hard even in the $b$-matching case.
%\todo{This remark is added to respond to reviewers' comments. If there is anything else that justifies the definition of our popularity, please add it.}
\end{remark}

\subsection{Reducing to the Popular Critical Common Independent Set Problem}
\label{sec:reduction}
As mentioned in Section~\ref{sec:contribution}, the popular maximum-weight common independent set problem in the two-sided preferences model cannot be reduced to previously solved unweighted problems.
%such as the popular common independent set problem and the popular common base problem \cite{kamiyama2020popular, Kuffner23}.
One reason is that the definitions of feasible pairings depend on the matroids, and hence the definition of popularity may be modified by some basic operations on matroids such as truncation and contraction. Another reason is that adding dummy elements causes comparisons between dummy elements in the reduced instance, which yield votes not corresponding to those in the original instance. 

We then introduce a new problem, the {\em popular critical common independent set} problem, and show that our problem can be reduced to it. 
%This problem can be seen as a generalization of the \emph{popular critical matching problem}  \cite{kavitha2021matchings}, and is described as follows. 

In the popular critical common independent set problem, 
we are given two ordered matroids $M_1=(S, \cI_1,\succ_1)$ and $M_2=(S, \cI_2, \succ_2)$, which are represented as directed sums as in Section~\ref{sec:prelimi2},
%%we are given two matroids $M_1=(S, \cI_1)$ and $M_2=(S, \cI_2)$, 
and two chains $\C_1, \C_2\subseteq 2^S$ on the ground set.  
%%where a chain $\C$ is a family of subsets such that for any distinct $C, C'\in \C$ we have $C\subsetneq C'$ or $C'\subsetneq C$. 
For each $i\in \{1,2\}$, the rank function of $M_i$ is denoted by $r_i$.
A common independent set $I\in \cI_1\cap \cI_2$ is called {\em $(\C_1, \C_2)$-critical} (or simply {\em critical}) if it satisfies $|I\cap C|=r_i(C)$ for any $i\in \{1,2\}$ and $C\in \C_i$. 
%We assume that the matroids admit a $(\C_1, \C_2)$-critical common independent set. 
\begin{definition}
For two chains $\C_1, \C_2\subseteq 2^S$,
a common independent set $I\in \cI_1\cap \cI_2$ is called a {\em popular critical common independent set} if $I$ is $(\C_1, \C_2)$-critical and satisfies $\vote_1(I,J)+\vote_2(I,J) \geq 0$ for every $(\C_1, \C_2)$-critical common independent set $J\in \cI_1\cap \cI_2$. 
\end{definition}
%We provide an efficient algorithm to solve this problem in Section~\ref{sec:critical}.
The {\em popular critical common independent set problem} asks to find a popular critical common independent set for given ordered matroids $M_1,M_2$ and chains $\C_1, \C_2\subseteq 2^S$, where we assume that they admit a $(\C_1, \C_2)$-critical common independent set.
To this problem, we reduce the popular maximum-weight common independent set problem. 

Consider the linear program \ref{LP3} below, corresponding to finding a maximum-weight common independent set $I$ in
$M_1$ and $M_2$, and its dual \ref{LP4}.
\vspace{-2mm}
\begin{table}[h]
\hspace{-5mm}
\begin{minipage}[t]{0.4\linewidth}\centering
\begin{align}
\displaystyle \mbox{Max.}& \quad \sum_{u \in S} w(u)\cdot x_u & \tag{LP3} \label{LP3}\\ 
\notag \text{s.t.}& \quad \sum_{u\in X}x_u \leq r_1(X) &  (X \subseteq S), \\ \notag
&\quad \sum_{u\in X} x_u  \leq r_2(X) & (X \subseteq S), \\ 
\notag
&\quad \qquad x_u \geq 0 & (u \in S).
\end{align}
\end{minipage}
%\hspace{0.4cm}
\hspace{0.05\textwidth}
\begin{minipage}[t]{0.53\linewidth}\centering
\begin{align}
\omit\rlap{$ \displaystyle \mbox{Min.} ~~\sum_{X \subseteq S }(y_X\cdot r_1(X)+z_X\cdot r_2(X))$} & & \tag{LP4} \label{LP4} \\ 
\notag \text{s.t.} \phantom{m} \sum_{X: u\in X} (y_X+z_X) & \ge w(u)  
    & (u \in S),\\
       \notag y_X & \ge 0  & (X\subseteq S), \\
       \notag z_X & \ge 0  & (X \subseteq S).
\end{align}
\end{minipage}
\end{table}

It is known that there exists an optimal solution $(\vec{y},\vec{z})$ for \ref{LP4} such that each of
the supports of $\vec{y}$ and $\vec{z}$ forms a chain (see, e.g., \cite[Theorem~41.12]{Schrijver}). 
Let $(\vec{y}, \vec{z})$ be such a solution and let $\C_1$ be the support of $\vec{y}$ and $\C_2$ be the support of $\vec{z}$. 
%We can assume that the minimal members of those chains are nonempty, because we can put $y_{\emptyset}=0$ and $z_{\emptyset}=0$ without affecting the objective value and feasibility of \ref{LP4}.
By the integrality of the matroid intersection polytope, \ref{LP3} admits integral optimal solutions, and hence the optimal value of \ref{LP3} is $\opt(w)$. 
We then derive the following claim from the complementary slackness of \ref{LP3} and \ref{LP4}.
% %For any common independent set $I\in \cI_1\cap \cI_2$, we have $w(I)=\opt(w)$ if and only if its characteristic vector satisfies complementary slackness conditions with $(\vec{y}, \vec{\alpha})$, which is equivalent to the following claim.

\begin{claim}
\label{CLcs_two}
Let $(\vec{y}, \vec{z})$ be an 
optimal solution for \ref{LP4}
such that the supports $\C_1=\{ X \subseteq S : y_X>0\}$ of $\vec{y}$ and 
$\C_2=\{ X \subseteq S : z_X>0\}$ of $\vec{z}$ are chains. 
A common independent set $I\in \cI_1\cap \cI_2$ satisfies $w(I)=\opt(w)$
if and only if the following conditions are satisfied:
\begin{description}
\item[(2.1)] For any element $u\in I$, we have that $\sum_{X: u\in X} (y_X+z_X) = w(u)$.
\item[(2.2)] $I$ is $(\C_1, \C_2)$-critical.
\end{description}
\end{claim}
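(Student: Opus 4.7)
The plan is to derive Claim~\ref{CLcs_two} as a direct application of LP duality and complementary slackness to the primal-dual pair \ref{LP3}--\ref{LP4}. By Edmonds' matroid intersection theorem, the feasible region of \ref{LP3} is the matroid intersection polytope of $\cI_1\cap \cI_2$, which is integral. Hence the characteristic vector $\chi_I \in \{0,1\}^S$ of any common independent set $I\in \cI_1\cap \cI_2$ is feasible for \ref{LP3}, and the common optimal value of \ref{LP3} and \ref{LP4} equals $\opt(w)$. In particular, $w(I)=\opt(w)$ holds if and only if $\chi_I$ is an optimal primal solution; given the fixed feasible dual solution $(\vec{y},\vec{z})$, this optimality is in turn equivalent to $\chi_I$ and $(\vec{y},\vec{z})$ together satisfying the complementary slackness conditions of \ref{LP3}--\ref{LP4}.

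I would then translate each of the two complementary slackness conditions. The primal complementary slackness condition requires that for every $u\in S$ with $\chi_I(u)>0$, the corresponding dual inequality is tight, i.e., $\sum_{X:\,u\in X}(y_X+z_X)=w(u)$. Since $\chi_I(u)>0$ is equivalent to $u\in I$, this is precisely condition (2.1).

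The dual complementary slackness condition requires that for every $X\subseteq S$ with $y_X>0$ the primal inequality $\sum_{u\in X}x_u\le r_1(X)$ is tight at $\chi_I$, i.e., $|I\cap X|=r_1(X)$; and analogously, $|I\cap X|=r_2(X)$ for every $X$ with $z_X>0$. Because $\C_1$ and $\C_2$ are exactly the supports of $\vec{y}$ and $\vec{z}$, these equalities together constitute the definition of $(\C_1,\C_2)$-criticality, i.e., condition (2.2).

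Combining both directions, $\chi_I$ is optimal for \ref{LP3} if and only if (2.1) and (2.2) hold simultaneously, which by the preceding equivalence is exactly the statement of the claim. No substantive obstacle is anticipated: the only nontrivial ingredients are the integrality of the matroid intersection polytope (standard) and the existence of a dual optimum whose supports are chains (already invoked in the paragraph preceding the claim), both of which are used as black boxes.
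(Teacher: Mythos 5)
Your proof is correct and follows exactly the paper's argument: the paper likewise invokes the integrality of the matroid intersection polytope to identify the optimal value of \ref{LP3} with $\opt(w)$, and then reads off (2.1) and (2.2) as the complementary slackness conditions associated with the constraints of \ref{LP4} and \ref{LP3}, respectively. No gap or divergence to report.
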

\begin{proof}
Condition (2.2) is the complementary slackness conditions with respect to the feasibility constraints in \ref{LP3}, 
and (2.1) is those with respect to the feasibility constraints in \ref{LP4}.
\end{proof}

\begin{comment}
Using complementary slackness conditions, we get that an element $u\in S$ is not included in any maximum weight common independent set if $\sum_{X: u\in X} (y_X+z_X)>w(u)$. Set $T=$ and hence elements satisfying this inequality can be deleted from the ground set $S$. Therefore, from now on we suppose that any element $S$ is included in some \maxwind. Therefore, by complementary slackness, we get that a \cind\ $I$ is a \maxwind, if and only if 
\begin{enumerate}
    \item[(i)]\label{CS:matroid1} $|I\cap X|=r_1(X)$ for all $X\in \mathcal{C}_1$ and \label{slackness1}
    \item[(ii)]\label{CS:matroid2} $|I\cap X|=r_2(X)$ for all $X\in \mathcal{C}_2$. \label{slackness2}
\end{enumerate}
\end{comment}
Let $T\subseteq S$ be the set of the elements satisfying the equation in the condition (2.1), i.e.,
$T=\{\, u\in S: \textstyle\sum_{X: u\in X} (y_X+z_X) = w(u)\,\}$. 
Let $M^{\star}_1=(T, \cI_1^{\star},\succ_1^{\star})$ and $M^{\star}_2=(T,\cI_2^{\star},\succ_2^{\star})$ be the restrictions of $M_1$ and $M_2$ to $T$, i.e., 
for each $i\in\{1,2\}$, 
\begin{align*}
    \cI^{\star}_i=\{\, X: X\in \cI_i,\, X\subseteq T\,\}, 
    \qquad \mbox{$u \succ_i^{\star} v$ if and only if $u \succ_i v$} ~~ (u,v \in T). 
\end{align*}
For each $i\in \{1,2\}$, define a chain $\C^{\star}_i \subseteq 2^S$ by 
$\C^{\star}_i=\{\, C\cap T: C\in \C_i\,\}$. 
Note that a maximum-weight common independent set $I_{\rm opt}$ satisfies (2.1) and (2.2), i.e., $I_{\rm opt}\subseteq T$ and $|I_{\rm opt}\cap C|=r_i(C)$ for each $i\in \{1,2\}$ and $C\in \C_i$. These imply that $r_i(C\cap T)=r_i(C)$, and hence $(\C_1^{\star}, \C_2^{\star})$-criticality is equivalent to $(\C_1,\C_2)$-criticality for a subset of $T$.
It then follows from Claim \ref{CLcs_two} that a common independent set $I\in \cI_1\cap \cI_2$ satisfies $w(I)=\opt(w)$ if and only if $I$ is a $(\C^{\star}_1, \C^{\star}_2)$-critical common independent set in $M^{\star}_1$ and $M^{\star}_2$. 

Note that 
%%unlike truncation and contraction, 
restriction does not change the 
%%definition of popularity: 
popularity of a common independent set: 
for each $i\in \{1,2\}$ and each 
%%$I, J\in \cI'_i$, 
$I, J\in \cI_i^{\star}$, 
the set of feasible pairings for $(I,J)$ with respect to $M^{\star}_i$ coincides with that for $(I,J)$ with respect to $M_i$.
Therefore, 
we conclude that 
to find a popular maximum-weight common independent set for matroids $M_1$ and $M_2$, it is sufficient to find a popular $(\C^{\star}_1, \C^{\star}_2)$-critical common independent set for $M^{\star}_1$ and $M^{\star}_2$.

%%Thus, our problem is reduced to the popular critical common independent set problem.

\subsection{Popular Critical Common Independent Set Algorithm}\label{sec:critical}

Let $M_1=(S, \cI_1,\succ_1)$ and $M_2=(S, \cI_2, \succ_2)$ be ordered matroids, where the matroids are
given as direct sums as in Section~\ref{sec:prelimi2}.
%$(S,\cI_1)= M^1_1 \oplus M^1_2 \oplus \dots \oplus M^1_{k_1}$ and $(S,\cI_2)= M^2_1 \oplus M^2_2 \oplus \dots \oplus M^2_{k_2}$.
Let $\C_1,\C_2\subseteq 2^S$ be chains 
described as 
$\C_1=\{C^1_1,C^1_2\dots C^1_{d_1}\}$ and $\C_2=\{ C^2_1,C^2_2\dots , C^2_{d_2}\}$, 
where $C_1^1\subsetneq C_2^1\subsetneq \dots \subsetneq C_{d_1}^1$ and $C_1^2\subsetneq C^2_2\subsetneq \dots \subsetneq C_{d_2}^2$. 
%We can assume without loss of generality that $C^1_1$ and $C^2_1$ are nonempty because $|I\cap \emptyset|=r_i(\emptyset)$ holds for any $i\in\{1,2\}$ and $I\subseteq S$.
We assume the existence of a critical common independent set. Indeed, this is the case for the instance obtained through the reduction in Section~\ref{sec:reduction}.

Below we describe an algorithm to find a popular $(\C_1, \C_2)$-critical common independent set. In the algorithm, we perform two transformations of the matroids. We sometime abuse the notation $M_i$ to mean the matroid $(S, \cI_i)$ rather than the ordered matroid $(S, \cI_i, \succ_i)$.

\paragraph{First Transformation using the Chains.}
For each $i\in \{ 1,2\}$, from the original matroid $M_i$, we define a matroid $M'_i=(S, \cI'_i)$ as follows. 
Let $C^i_0=\emptyset$, $C_{d_i+1}^i=S$, and for each $j=1,2,\dots, d_i+1$, let $\hat{M}^i_j$ be a matroid on $C^i_j\setminus C_{j-1}^i$ obtained from $M_i$ by contracting $C_{j-1}^i$ and restricting to $(C^i_j\setminus C^i_{j-1})$.
Let $M'_i=(S,\cI_i')$ be the direct sum of these matroids, i.e., $M'_i=\hat{M}^i_1\oplus \hat{M}^i_2\oplus \dots\oplus \hat{M}^i_{d_i+1}$. Then, the following claim holds. 

\begin{lemma}\label{obs:maxweight-bases-conn} For a set $I\subseteq S$, the following two conditions are equivalent: 
\begin{itemize}
    \item[\rm (i)] $I$ is a $(\C_1, \C_2)$-critical common independent set of $M_1$ and $M_2$. 
    \item[\rm (ii)] $I$ is a common independent set of $M_1'$ and $M_2'$ satisfying $|I\cap C_{d_i}^i|=r_i(C_{d_i}^i)$ for $i\in \{1,2\}$.
\end{itemize}
\end{lemma}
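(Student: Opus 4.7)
The plan is to prove the equivalence independently for each $i \in \{1,2\}$, since both conditions decompose across $i$. Fix $i$ and for readability drop the superscript, writing $\C = \{C_1, \dots, C_d\}$, $M = (S,\cI)$ with rank function $r$, and $M' = \hat{M}_1 \oplus \hat{M}_2 \oplus \dots \oplus \hat{M}_{d+1}$ where (with $C_0 = \emptyset$, $C_{d+1} = S$) each $\hat{M}_j$ lives on $C_j \setminus C_{j-1}$. For a set $I \subseteq S$, let $I_j := I \cap (C_j \setminus C_{j-1})$, so $I = I_1 \sqcup I_2 \sqcup \cdots \sqcup I_{d+1}$. By the definition of the direct sum, $I \in \cI'$ is equivalent to $I_j \in \hat{M}_j$ for every $j$, which by the definition of contraction-then-restriction means that for any base $B_{j-1}$ of $C_{j-1}$ in $M$, the set $B_{j-1} \cup I_j$ is independent in $M$; equivalently $r(C_{j-1} \cup I_j) = r(C_{j-1}) + |I_j|$.

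For the direction (i) $\Rightarrow$ (ii), assume $I \in \cI$ and $|I \cap C_j| = r(C_j)$ for every $j \in [d]$. Then $I \cap C_j$ is a base of $C_j$ in $M$ for each $j \in [d]$, and $|I_j| = r(C_j) - r(C_{j-1})$. Since $I \cap C_{j-1}$ is a base of $C_{j-1}$ and $(I \cap C_{j-1}) \cup I_j \subseteq I \in \cI$, the characterization above gives $I_j \in \hat{M}_j$ for $j \in [d]$. For $j = d+1$, since $I \cap C_d$ is a base of $C_d$ and $I \in \cI$, we get $I_{d+1} \in \hat{M}_{d+1}$. The required condition $|I \cap C_d| = r(C_d)$ is immediate.

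For the direction (ii) $\Rightarrow$ (i), assume $I \in \cI'$ and $|I \cap C_d| = r(C_d)$. From $I_j \in \hat{M}_j$ we get $|I_j| \le r(C_j) - r(C_{j-1})$ for $j \in [d]$, and these upper bounds sum to $r(C_d)$, which equals $|I \cap C_d| = \sum_{j=1}^{d} |I_j|$. Hence equality holds layer-by-layer: $|I_j| = r(C_j) - r(C_{j-1})$ for all $j \in [d]$. Now I proceed by induction on $j$ to show that $I \cap C_j$ is a base of $C_j$ in $M$. The base case $j = 1$ follows since $\hat{M}_1$ is the restriction of $M$ to $C_1$ and $I_1$ has the right size. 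For the inductive step, $I \cap C_{j-1}$ is a base of $C_{j-1}$ by hypothesis, so $I_j \in \hat{M}_j$ gives $(I \cap C_{j-1}) \cup I_j = I \cap C_j \in \cI$, and the size is $r(C_j)$ by construction. This yields the criticality conditions for $\C$. Finally, applying $I_{d+1} \in \hat{M}_{d+1}$ with the base $I \cap C_d$ of $C_d$ shows $I = (I \cap C_d) \cup I_{d+1} \in \cI$.

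The argument is essentially a bookkeeping exercise; the only subtle step is converting the single global equality $|I \cap C_d| = r(C_d)$ in (ii) into the full family of equalities $|I \cap C_j| = r(C_j)$ required by criticality, which is exactly what the rank-decomposition telescoping accomplishes. The boundary layer $\hat{M}_{d+1}$, which uses $C_{d+1} = S$ and is therefore not indexed by any element of $\C$, must be treated separately in both directions, but it is handled uniformly by the same contraction argument.
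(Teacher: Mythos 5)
Your proof is correct and follows essentially the same route as the paper's: the forward direction is read off from the definitions of contraction and restriction, and the reverse direction converts the single equality $|I\cap C_{d_i}^i|=r_i(C_{d_i}^i)$ into layer-wise tightness by telescoping the bounds $|I\cap (C^i_j\setminus C^i_{j-1})|\le r_i(C^i_j)-r_i(C^i_{j-1})$. You are in fact slightly more careful than the paper at one point: the paper asserts $I\in\cI_i$ directly from $I\in\cI_i'$, whereas your induction showing that each $I\cap C^i_j$ is a base of $C^i_j$ makes this implication explicit.
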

\begin{proof}
It follows from the definitions of $M'_1$ and $M'_2$ that (i) implies (ii). To see the other direction, suppose that $I$ satisfies (ii). 
Let $i\in \{1,2\}$. 
Since $I$ is an independent set of $M'_i$, 
it holds that $I\in \cI_i$ and $|I\cap (C^i_j\setminus C^i_{j-1})| \leq r_i(C^i_j)-r_i(C^i_{j-1})$ for each $j\in [d_i]$. Then we obtain 
$$|I\cap C^i_{d_i}|=\sum_{j=1}^{d_i} |I\cap (C^i_j\setminus C^i_{j-1})|\leq \sum_{j=1}^{d_i} r_i(C^i_j)-r_i(C^i_{j-1})=r_i(C^i_{d_i})=|I\cap C^i_{d_i}|.$$ 
Hence, $|I\cap (C^i_j\setminus C^i_{j-1})| = r_i(C^i_j)-r_i(C^i_{j-1})$ for every $j\in [d_i]$. Therefore, $I$ satisfies (i).
\end{proof}

\newcommand{\cop}{\mathcal{K}}
\newcommand{\Cmax}{C_{\rm max}}
\newcommand{\rone}{\rho_1}
\newcommand{\rtwo}{\rho_2}
\newcommand{\ri}{\rho_i}

\paragraph{Second Transformation via Duplication.}
For each $i\in \{1,2\}$, from the matroid $M'_i$ obtained above and the original order $\succ_i$, 
we define an extended ordered matroid $M^{\star}_i=(S^{\star}, \cI^{\star}_i, \succ^{\star}_i)$ 
%%by replacing each element $u\in S$ with parallel copies 
in the following way. 
Let $\Cmax^i=C^i_{d_i}$ and $\ri =r_i(\Cmax^i)$ for each $i\in\{1,2\}$.
For each $u\in S$, 
first replace $u$ with a copy $u^0$.
If $u\in \Cmax^1$, then we add copies $u^1,\dots u^{\rone}$. If $u\in \Cmax^2$, then we add copies $u^{-1},\dots , u^{-\rtwo}$. (If $u\in \Cmax^1\cap \Cmax^2$, then $u$ gets $\rone+\rtwo+1$ copies). 
Denote the set of copies created for each $u\in S$ by $\cop (u)$, 
and let the extended ground set be $S^{\star}=\bigcup_{u\in S}\cop (u) $. 
A copy $u^k\in S^{\star}$ is called the {\em $k$-level copy} of $u$ and also called an {\em $k$-level element}.

For each $I^{\star} \subseteq S^{\star}$, 
define $\pi(I^{\star})\subseteq S$ by 
$\pi(I^{\star})=\{\,u\in S: I^{\star}\cap \cop (u) \neq \emptyset \,\}$. 
For each $i\in \{1,2\}$, the independent set family of $M^{\star}_i$ is defined by
\[\cI_i^{\star}=\{\,I^{\star} \subseteq S^{\star}: \pi(I^{\star})\in \cI_i',\  
\mbox{$|I^{\star}\cap \cop (u) |\leq 1$ for each $u \in S$}\,\}.\]

For each $i\in \{1,2\}$, 
we define 
a linear order $\succ^{\star}_i$ on $S^{\star}$ as follows. 
In $\succ^{\star}_1$, lower level elements are preferred; for any $k,l\in \{-\rho_2,\dots,-1,0,1,\dots,\rho_1\}$ with $k<l$, any $k$-level element is preferred over any $l$-level element, and the original preferences are preserved for the elements in the same level. Namely,  
$u^k\succ_1^{\star} v^l$ holds if and only if $k<l$, or $k=l$ and $u\succ_1 v$. 
In $\succ^{\star}_2$, the higher level elements are preferred; 
%when $i>j$, any $i$-level element is preferred over any $j$-level element, and the original preferences are preserved for the elements in the same level (i.e.,  $
$u^k\succ_2^{\star} v^l$ if and only if $k>l$, or $k=l$ and $u\succ_2 v$. We remark that the construction of $M^{\star}_i$ is a generalization of the ideas in the popular critical matching algorithm \cite{kavitha2021matchings}. 

The algorithm is now described as follows.

\begin{enumerate}
	\item Find an $(M^{\star}_1, M^{\star}_2)$-kernel $I^{\star}$.
            \label{ENUkernel}
	\item Output $I\coloneqq \pi(I^{\star})$.
\end{enumerate} 

Note that we can find a matroid kernel $I^{\star}$ in
Step \ref{ENUkernel} in $\mathcal{O}(|S^{\star}|^2)=\mathcal{O}(r^2|S|^2)$ time by Fleiner's algorithm \cite{fleiner2001matroid, fleiner2003fixed}. 

The output of the algorithm is a popular critical common independent set as stated in the following theorem, whose proof is given in the next subsection. 
By applying this algorithm to the instance obtained by the reduction in Section~\ref{sec:reduction}, we can find a popular maximum-weight common independent set. Thus, we complete the proof of Theorem~\ref{thm:twosided}.
\begin{theorem}\label{thm:twosided-alg}
The output $I$ of the algorithm is a popular $(\C_1, \C_2)$-critical common independent set of $M_1$ and $M_2$.
\end{theorem}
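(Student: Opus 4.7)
The goal is to show that the output $I = \pi(I^{\star})$ is (i) a $(\C_1, \C_2)$-critical common independent set of $M_1$ and $M_2$, and (ii) popular among all such sets. The plan is to exploit two features of $I^{\star}$ simultaneously: its independence in both $M^{\star}_1$ and $M^{\star}_2$, and the matroid kernel property that no element of $S^{\star} \setminus I^{\star}$ blocks $I^{\star}$. The level structure of $S^{\star}$ is engineered so that criticality of $I$ is enforced as a preference incentive in $\succ^{\star}_1$ and $\succ^{\star}_2$, while Kamiyama's theorem~\cite{kamiyama2020popular} that every matroid kernel is a popular common independent set, applied to the extended instance, will deliver popularity of $I$ among critical sets.

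For criticality, the containment $I \in \cI'_1 \cap \cI'_2$ is immediate from $I^{\star} \in \cI^{\star}_1 \cap \cI^{\star}_2$ and the definition of $\cI^{\star}_i$, so by Lemma~\ref{obs:maxweight-bases-conn} it suffices to prove $|I \cap \Cmax^i| = \rho_i$ for each $i \in \{1,2\}$. I would argue by contradiction for $i = 1$, the case $i=2$ being symmetric. If $|I \cap \Cmax^1| < \rho_1$, then combining matroid augmentation in $M'_1$ with the assumed existence of a critical common independent set, I would locate an element $u \in \Cmax^1 \setminus I$ for which the top-level copy $u^{\rho_1}$ is not dominated by $I^{\star}$ in either extended matroid: it is the least preferred in $\succ^{\star}_1$ while $I^{\star} + u^{\rho_1} \in \cI^{\star}_1$ (so not dominated in $M^{\star}_1$), and it is the most preferred in $\succ^{\star}_2$, so any feasible exchange with an element of $I^{\star}$ would leave it un-dominated in $M^{\star}_2$ as well. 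This contradicts the kernel property of $I^{\star}$. The symmetric argument for $\Cmax^2$ uses bottom-level copies $u^{-\rho_2}$ instead.

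For popularity, given any $(\C_1, \C_2)$-critical $J \in \cI_1 \cap \cI_2$, I would lift $J$ to an extended common independent set $J^{\star} \in \cI^{\star}_1 \cap \cI^{\star}_2$ by selecting, for each $v \in J$, a suitable copy $v^{\ell(v)} \in \cop(v)$: $v \in I \cap J$ inherits the level of the copy already present in $I^{\star}$; $v \in (J \setminus I) \cap \Cmax^1$ receives its top-level copy $v^{\rho_1}$; $v \in (J \setminus I) \cap (\Cmax^2 \setminus \Cmax^1)$ receives its bottom-level copy $v^{-\rho_2}$; and $v$ in neither chain receives $v^0$. Applying Kamiyama's theorem to $I^{\star}$ in $(M^{\star}_1, M^{\star}_2)$ then yields $\vote_1(I^{\star}, J^{\star}) + \vote_2(I^{\star}, J^{\star}) \geq 0$, where the votes are computed with respect to the extended matroids. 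The critical conditions $|I \cap \Cmax^i| = |J \cap \Cmax^i| = \rho_i$ ensure that the level bonuses contributed by elements of $(I \triangle J) \cap \Cmax^i$ cancel between $\succ^{\star}_1$ and $\succ^{\star}_2$, so the extended vote sum equals the original $\vote_1(I, J) + \vote_2(I, J)$, giving the desired inequality.

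The main obstacle will be the popularity step: designing the lift $v \mapsto v^{\ell(v)}$ so that feasible pairings for $(I^{\star}, J^{\star})$ in $(M^{\star}_1, M^{\star}_2)$ project to feasible pairings for $(I, J)$ in $(M_1, M_2)$ with matching vote contributions, and verifying the precise cancellation of level-based bonuses. Elements of $\Cmax^1 \cap \Cmax^2$ carry copies at many levels and thus admit several candidate lifts, so a careful case analysis will be required to maintain both the independence constraints in the extended matroids and the parity of votes across the two instances.
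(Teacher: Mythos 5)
Your high-level architecture matches the paper's (the level structure of $S^{\star}$ enforces criticality, and the kernel property of $I^{\star}$ delivers popularity), but both halves of your argument have concrete gaps at exactly the points where the real work lies. For criticality, your claim that an augmenting element $u\in \Cmax^1\setminus I$ has its top-level copy $u^{\rho_1}$ undominated in $M^{\star}_2$ is false in general: $u^{\rho_1}$ sits at the highest level, but within that level $\succ^{\star}_2$ falls back on the original order $\succ_2$, so $u^{\rho_1}$ can be dominated by an exchangeable element of $I^{\star}$ that also lives at level $\rho_1$ and is $\succ_2$-preferred. No single element yields a blocking copy; the paper instead follows the chain of dominations along an alternating path in $N_1\cup N_2$, showing $\lev(u_1)=\rho_1$, $\lev(u_2)\ge\rho_1-1$, and so on, until $\rho_1$ distinct elements of $I$ are forced into $\Cmax^1$, contradicting $|I\cap\Cmax^1|<\rho_1$. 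That cascade is the proof, and your proposal skips it.

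For popularity, your route through Kamiyama's black-box theorem on the extended instance requires (a) fixing a direct-sum decomposition of $M^{\star}_i$ and checking (FP3)--(FP4) for lifted pairings, neither of which is addressed, and (b) the identity between the extended and original vote sums, which you assert via a pairwise ``cancellation of level bonuses.'' That cancellation does not hold pair by pair: a pair $uv\in N_1$ with $\lev(u)<\ell(v)=\rho_1$ contributes $+1$ in $M^{\star}_1$ even when $v\succ_1 u$ originally, and the compensating $-1$ must come from the \emph{different} pair covering $v$ in $N_2$, whose levels are not under your control. Bounding these swings is exactly where the kernel property must be invoked again (the paper's Claims~\ref{claim:noncycle} and \ref{claim:nonsteep}, which show $\lev(u_q)\ge\lev(u_{q-1})-1$ along alternating paths and convert level drops into guaranteed favorable votes), together with the prior fact (Claim~\ref{claim:exchanges}) that feasible pairings with respect to $M_1,M_2$ remain weakly feasible with respect to $M'_1,M'_2$ because criticality forces each pair to stay within one layer of the chain --- without this, even (FP1)/(FP2) for your lifts can fail. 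Since you explicitly defer all of this as ``the main obstacle,'' the proposal is a plausible plan rather than a proof; the deferred steps are not routine and constitute essentially the entire argument of Section~\ref{sec:critical}.
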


As the proof of Theorem~\ref{thm:twosided-alg} in the next subsection is technical, here we explain a few key points.
The second transformation in our algorithm is a natural matroid generalization of the transformation of the input bipartite graph in the popular critical matching algorithm \cite{kavitha2021matchings}.
Though we cannot directly extend the proof argument in \cite{kavitha2021matchings}, 
by appropriately integrating proof techniques from prior work \cite{kavitha2014size, csaji2022solving}, we can prove that the output $I$ of our algorithm satisfies $\vote_1(I, J, N_1) + \vote_2(I, J, N_2) \geq 0$ for any critical matching $J$ and weakly feasible pairings $N_1$ and $N_2$ with respect to the intermediate matroids $M_1'$ and $M_2'$, where a pairing is {\em weakly feasible} if it satisfies (FP1) and (FP2) (cf. Remark~\ref{rem:def}).
However, this alone is not sufficient for our objective, 
because we need to establish the above inequality for all feasible pairings with respect to the original matroids $M_1$ and $M_2$, 
rather than $M_1'$ and $M_2'$. 
%Indeed, the first transformation of the matroids in our algorithm has no corresponding part in the existing algorithms, making our analysis nontrivial.

A key observation to overcoming this issue is that any feasible pairing w.r.t.\ $M_1$ and $M_2$ is a weakly feasible pairing w.r.t.\ $M_1'$ and $M_2'$, under the assumption that $I$ and $J$ are critical (Claim~\ref{claim:exchanges} in the next subsection). This comes from the fact that, under this assumption, each element $C$ of the chain $\C_i~(i\in \{1,2\})$ is spanned by both $I$ and $J$, which forces any pair of a feasible pairing in $M_i$ to connect elements of the same depth within $\C_i$, 
guaranteeing that the pair is exchangeable also in $M'_i$. With this observation, we can conclude that the output $I$ is a popular critical common independent set with respect to $M_1$ and $M_2$.

\subsection{Proof of Theorem~\ref{thm:twosided-alg}}
We show that the output $I$ of the algorithm in Section~\ref{sec:critical} is a popular $(\C_1, \C_2)$-critical common independent set of $M_1$ and $M_2$. Before showing its popularity, we first show that the output $I$ is indeed critical.
For each element $u\in I$, its 
{\em level}  is defined as 
the level of the copy of $u$ in $I^{\star}$ 
and is denoted by $\lev(u)$, 
i.e., 
$\lev(u)=k$
when $I^{\star}\cap \cop (u)=\{u^k\}$. 

\begin{lemma}
\label{LEMcritical}
The output $I$ is a $(\C_1, \C_2)$-critical common independent set of $M_1$ and $M_2$.
\end{lemma}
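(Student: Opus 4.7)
The plan is to verify the two characterizing conditions from Lemma~\ref{obs:maxweight-bases-conn}: that $I \in \I'_1 \cap \I'_2$, and that $|I \cap C^i_{\max}| = \ri$ for each $i \in \{1,2\}$. The first condition is immediate from the construction: by the definition of $\I^{\star}_i$, any $I^{\star} \in \I^{\star}_1 \cap \I^{\star}_2$ projects to $I = \pi(I^{\star}) \in \I'_1 \cap \I'_2$, and $I^{\star}$ is indeed a common independent set because it is an $(M^{\star}_1,M^{\star}_2)$-kernel.

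The second condition is the substantive part. By symmetry (swapping the role of the maximum level $\rone$ with the minimum level $-\rtwo$, and of $M^{\star}_1$ with $M^{\star}_2$), it suffices to prove $|I \cap C^1_{\max}| = \rone$. The upper bound $|I \cap C^1_{\max}| \leq \rone$ follows at once from $I \in \I'_1$ and $r'_1(C^1_{\max}) = \rone$. For the lower bound, I would argue by contradiction and suppose $|I \cap C^1_{\max}| = k < \rone$. Exploiting the direct-sum decomposition $M'_1 = \hat{M}^1_1 \oplus \cdots \oplus \hat{M}^1_{d_1+1}$ together with $r'_1(C^1_{\max}) = \rone$, some summand $\hat{M}^1_j$ with $j\le d_1$ is not spanned by $I$, so there exists $v \in C^1_{\max} \setminus I$ with $I + v \in \I'_1$.

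I would then analyze the top-level copy $v^{\rone} \in S^{\star} \setminus I^{\star}$, which by the kernel property must be dominated by $I^{\star}$ in $M^{\star}_1$ or $M^{\star}_2$. Because $v^{\rone}$ is at the $\succ^{\star}_1$-worst level, the comparison part of $M^{\star}_1$-dominance is automatic and dominance collapses to $I^{\star} + v^{\rone} \notin \I^{\star}_1$; this is equivalent to $I + v \notin \I'_1$, contradicting the choice of $v$. Hence $v^{\rone}$ must be $M^{\star}_2$-dominated, which forces (i) $I + v \notin \I'_2$ and (ii) every $x$ in the fundamental $M'_2$-circuit $C_v$ of $v$ with respect to $I$ has its copy $x^{\lev(x)} \in I^{\star}$ satisfying $x^{\lev(x)} \succ^{\star}_2 v^{\rone}$; since $\rone$ is maximum in $\succ^{\star}_2$, this forces $\lev(x) = \rone$ and hence $x \in I \cap C^1_{\max}$. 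Ranging $v$ over $C^1_{\max} \setminus \mathrm{cl}_{M'_1}(I)$ yields the structural inclusion
\[
C^1_{\max} \;\subseteq\; \mathrm{cl}_{M'_1}(I) \;\cup\; \mathrm{cl}_{M'_2}(I^{+}),
\]
where $I^{+} = \{\,x \in I : \lev(x) = \rone\,\} \subseteq I \cap C^1_{\max}$.

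The contradiction is extracted by pairing this inclusion with a $(\C_1, \C_2)$-critical common independent set $J$ (which exists by assumption and, via Lemma~\ref{obs:maxweight-bases-conn}, lies in $\I'_1 \cap \I'_2$ with $|J \cap C^1_{\max}| = \rone$): every element of the $M'_1 \cap M'_2$-independent set $J \cap C^1_{\max}$ falls in $\mathrm{cl}_{M'_1}(I)\cap C^1_{\max}$ (which has $M'_1$-rank $k$) or in $\mathrm{cl}_{M'_2}(I^{+})$ (which has $M'_2$-rank $|I^{+}| \leq k$), and comparing $M'_1$- and $M'_2$-ranks in $C^1_{\max}$ via submodularity and matroid partition arguments conflicts with $|J \cap C^1_{\max}| = \rone$. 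I expect the main obstacle to be exactly this concluding rank/counting step: the naive subadditive bound only gives $\rone \leq 2k$, which is vacuous once $k \geq \rone/2$. To close this gap for general $k < \rone$, the argument must exploit finer structure provided by the level-duplication construction --- for example, iterating exchanges between $I$ and $J$ to sharpen the bound on $|I^{+}|$, or simultaneously leveraging the symmetric constraint coming from level-$(-\rtwo)$ copies through $C^2_{\max}$. Once this step is completed for $i=1$, the symmetric argument yields $|I \cap C^2_{\max}| = \rtwo$, and Lemma~\ref{obs:maxweight-bases-conn} then delivers $(\C_1,\C_2)$-criticality of $I$ in $M_1$ and $M_2$.
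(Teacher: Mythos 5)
Your proposal is correct up to the point you yourself flag, and the gap you flag is real and fatal to the closure-based counting: the inclusion $\Cmax^1\subseteq \mathrm{cl}_{M'_1}(I)\cup \mathrm{cl}_{M'_2}(I^{+})$ applied to the $\rone$ elements of $J\cap\Cmax^1$ only yields $\rone\le 2k$, which says nothing once $k\ge\rone/2$. The single application of the kernel property to the top-level copy $v^{\rone}$ extracts only ``one level's worth'' of information, and no amount of rank submodularity on the two closures will recover the full bound $k\ge\rone$. (A minor side remark: your claim that the comparison part of $M^{\star}_1$-dominance is ``automatic'' for $v^{\rone}$ is not quite right --- an element $u^{\rone}\in I^{\star}$ at the same level need not satisfy $u^{\rone}\succ_1^{\star}v^{\rone}$ --- but this does not matter, since you only use that $I^{\star}+v^{\rone}\in\cI_1^{\star}$ already rules out $M_1^{\star}$-domination.)

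The iteration you suspect is needed is exactly how the paper closes the argument, and it is organized around feasible pairings rather than closures. One fixes a critical $J$ and feasible pairings $N_1,N_2$ for $(I,J)$ with respect to $M'_1,M'_2$ (viewed as the direct sums $\hat{M}^i_1\oplus\cdots\oplus\hat{M}^i_{d_i+1}$, so that by (FP3)--(FP4) every pair of $N_i$ stays inside or outside $\Cmax^i$ and all of $(I\setminus J)\cap\Cmax^i$ is covered). If $|I\cap\Cmax^1|<\rone=|J\cap\Cmax^1|$, some $v_1\in(J\setminus I)\cap\Cmax^1$ is uncovered by $N_1$ but covered by $N_2$, so it starts an alternating path $v_1,u_1,v_2,u_2,\dots$ in $N_1\cup N_2$. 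Applying the kernel property to the copy $v_q^{\rone-q+1}$ at step $q$ --- using (FP1)/(FP2) to certify $I^{\star}-u_{q-1}^{\star}+v_q^{\rone-q+1}\in\cI_1^{\star}$ and $I^{\star}-u_q^{\star}+v_q^{\rone-q+1}\in\cI_2^{\star}$, and the opposite level orientations of $\succ_1^{\star}$ and $\succ_2^{\star}$ --- forces $\lev(u_q)\ge\rone-q+1$, i.e., the level of the $I$-side element drops by at most one per step starting from $\rone$. Hence $u_1,\dots,u_{\rone}$ are $\rone$ distinct elements of $I\cap\Cmax^1$, contradicting $|I\cap\Cmax^1|<\rone$ directly, with no rank counting needed. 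If you want to complete your write-up, replacing the closure inclusion and the final submodularity step by this path-following, level-decrement argument is the missing idea.
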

\begin{proof}
On the basis of Lemma~\ref{obs:maxweight-bases-conn}, it suffices to show that $I\in \cI'_1\cap \cI'_2$ and $|I\cap \Cmax^i|=r_i(\Cmax^i)$ for each $i\in \{1,2\}$.
As $I^{\star}$ is a common independent set of $M^{\star}_1$ and $M^{\star}_2$, clearly $I\in \cI'_1\cap \cI'_2$. 
%, hence independent in $M_1$ and $M_2$ by $r_{M_i'}\le r_i$.
We then complete the proof by showing that $|I\cap \Cmax^i|=r_i(\Cmax^i)$ for $i\in\{1,2\}$. 
We only prove $|I\cap \Cmax^1|=r_1(\Cmax^1)$. Since the constructions of $M^{\star}_1$ and $M^{\star}_2$ are symmetric, the other equality is shown similarly.

Suppose to the contrary that $|I\cap \Cmax^1|<r_1(\Cmax^1)$. 
Take any $(\C_1, \C_2)$-critical common independent set $J$ of $M_1$ and $M_2$ (which exists by assumption).
%By Lemma~\ref{obs:maxweight-bases-conn}, $J$ is a common independent set of $M'_1$ and $M'_2$ satisfying $|I\cap \Cmax^i|=r_1(\Cmax^i)$ for each $i\in \{1,2\}$.
For the pair $(I,J)$, take feasible pairings $N_1$ and $N_2$ with respect to $M_1'$ and $M_2'$, respectively, where we see each matroid $M'_i$ as the direct sum $M'_i=\hat{M}^i_1\oplus \hat{M}^i_2\oplus \dots\oplus \hat{M}^i_{d_i+1}$ (rather than the given representation $M_i=M^i_1 \oplus M^i_2 \oplus \dots \oplus M^i_{k_i}$) in the conditions (FP3) and (FP4) of feasible pairings. The existence of a feasible pairing is guaranteed by Lemma~\ref{lem:feasible}. 
For each $i\in \{1,2\}$, by condition (FP3) of a feasible pairing, any pair $uv$ in $N_i$ satisfies either $u,v\in \Cmax^i$ or $u,v\in S\setminus \Cmax^i$. Also, by (FP4) and the criticality of $J$, all elements in $(I\setminus J)\cap \Cmax^i$ are paired (i.e., covered) by $N_i$.
As we have $|I\cap \Cmax^1|<r_1(\Cmax^1)=|J\cap \Cmax^1|$, there exists an element $v\in (J\setminus I)\cap \Cmax^1$ that is unpaired in $N_1$. Note that $v$ is paired with some element in $N_2$, since otherwise $I+v\in \cI'_1\cap \cI'_2$ by condition (FP2) and hence $I^{\star}+v^0\in \cI_1^{\star}\cap \cI^{\star}_2$, which contradicts the fact that $I^{\star}$ is a matroid kernel.

Consider a bipartite graph $G=(I\setminus J, J\setminus I; N_1\cup N_2)$, 
which is decomposed into alternating paths, cycles, and isolated vertices.
Since $v$ is covered only by $N_2$, there exists an alternating path $P$ that starts at $v$. Let $v_1=v$ and $v_1, u_1, v_2, u_2, \dots ,v_p,u_p$ be the elements on $P$ appearing in this order, where $v_q\in J\setminus I$ for each $q\in [p]$, $u_q\in I\setminus J$ for each $q\in [p-1]$, and $u_p$ is either $\emptyset$ or an element in $I\setminus J$. Then  $u_{q} v_{q}\in N_2$ and $u_{q} v_{q+1}\in N_1$ for each $q$.

As we have $v_1\in \Cmax^1$, there are copies $v_1^1,\dots, v_1^{\rone}$ of $v_1$ in the matroids $M^{\star}_1$ and $M^{\star}_2$. Since $v_1^{\rone}$ does not block $I^{\star}$ while $I^{\star}+v_1^{\rone}\in \cI^{\star}_1$, it must be dominated in $M^{\star}_2$, i.e., we must have $u_1^{\star}\succ^{\star}_2 v_1^{\rone}$ where $u_1^{\star}$ is the copy of $u_1$ in $I^{\star}$. Since larger indices are preferred in $\succ_2^{\star}$, we must have $\lev (u_1)=\rone$. This also implies $u_1\in \Cmax^1$ and hence $u_1$ is paired in $N_1$ with an element in $\Cmax^1$ (by the property of $N_1$ mentioned above). 
It then follows that $v_2\neq \emptyset$ and $v_2\in \Cmax^1$. 
Now $v_2$ has copies $v_2^1,\dots, v_2^{\rone}$. Since smaller indices are preferred in $\succ_1^{\star}$, we have $u^{\star}_1=u^{\rone}_1\not\succ^{\star}_1 v_2^{\rone-1}$. 
Then we must have $u_2^{\star}\succ^{\star}_2 v_2^{\rone-1}$, where $u_2^{\star}$ is the copy of $u_2$ in $I^{\star}$, and hence $\lev (u_2)\geq\rone-1$. If $\rone>1$, this implies $\lev(u_2)>0$, and hence $u_2\in \Cmax^1$. 
It is then derived from the same argument that $v_3\neq \emptyset$, $v_3\in \Cmax^1$, and 
$\lev (u_3)\geq\rone-2$.
Continuing this argument, for each $q\in [\rone-1]$, we obtain $v_{q} \neq \emptyset$, $v_{q}\in \Cmax^1$, and $\lev (u_{q+1})\ge \rone-q>0$, which implies $u_{q+1}\in \Cmax^1$. Hence, $\{u_1, u_2, \dots, u_{\rone}\}\subseteq \Cmax^1$, and this implies $|I\cap \Cmax^1|\ge \rone= r_1(\Cmax^1)$, a contradiction. 
\end{proof}

We are now ready to complete the proof of Theorem~\ref{thm:twosided-alg}.
Let $I$ be the output of the algorithm and $I^{\star}$ be the matroid kernel of $M^{\star}_1$ and $M^{\star}_2$ such that $I=\pi (I^{\star})$. 
Let $J$ be an arbitrary $(\C_1, \C_2)$-critical common independent set of $M_1$ and $M_2$.  
We show that $\vote_1(I,J)+\vote_2(I,J)\ge 0$. 
Let $N_1$ and $N_2$ be any feasible pairings for $(I,J)$ with respect to $M_1$ and $M_2$, respectively. It is sufficient to show $\vote_1(I,J, N_1)+\vote_2(I,J, N_2)\ge 0$.

\begin{claim}\label{claim:exchanges}
For each $i\in \{1,2\}$, $N_i$ satisfies (FP1) and (FP2) for $(I,J)$ with respect to $M'_i$, and includes a perfect matching between $(I\setminus J)\cap \Cmax^i$ and $(J\setminus I)\cap \Cmax^i$.
\end{claim}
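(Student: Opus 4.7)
My plan is to exploit the fact that both $I$ and $J$ are $(\C_1,\C_2)$-critical (by assumption on $J$ and by Lemma~\ref{LEMcritical} for $I$), which gives $|I\cap C|=|J\cap C|=r_i(C)$ and hence $|(I\setminus J)\cap C|=|(J\setminus I)\cap C|$ for every $C\in \C_i$. As a warm-up, I would first observe that every uncovered $v\in J\setminus I$ must lie outside $\Cmax^i$: by (FP2) with respect to $M_i$ we have $I+v\in \cI_i$, and if $v$ belonged to some $C\in \C_i$ then $|(I+v)\cap C|=r_i(C)+1$ would violate the rank bound. Consequently every $v\in (J\setminus I)\cap \Cmax^i$ is covered by $N_i$.

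The core of the argument would be showing that every pair $uv\in N_i$ respects the chain, meaning $u\in C$ if and only if $v\in C$ for every $C\in \C_i$. One direction is immediate from (FP1) for $M_i$: if $v\in C$ and $u\notin C$, then $|(I-u+v)\cap C|=r_i(C)+1$ would contradict $I-u+v\in \cI_i$. The converse then follows by counting: inside each $C\in \C_i$, the $|(J\setminus I)\cap C|$ covered targets are all matched to partners in $(I\setminus J)\cap C$, and the equality $|(I\setminus J)\cap C|=|(J\setminus I)\cap C|$ shows that this exhausts $(I\setminus J)\cap C$. Specialising $C=\Cmax^i$ yields directly the perfect matching between $(I\setminus J)\cap \Cmax^i$ and $(J\setminus I)\cap \Cmax^i$ required by the claim, and applying the same principle to consecutive chain elements forces every pair of $N_i$ to lie within a single layer $C^i_j\setminus C^i_{j-1}$.

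With this layer-preserving structure in hand, I would verify (FP1) and (FP2) for $M'_i=\hat{M}^i_1\oplus \cdots \oplus \hat{M}^i_{d_i+1}$. The key identity I intend to use is that, by criticality of $I$, the set $I\cap C^i_{k-1}$ is a base of $C^i_{k-1}$ in $M_i$; consequently, for any $X\subseteq C^i_k\setminus C^i_{k-1}$, the set $X$ is independent in $\hat{M}^i_k$ if and only if $(I\cap C^i_{k-1})\cup X\in \cI_i$. For a pair $uv\in N_i$ contained in layer $j$, only the coordinate $k=j$ of $I-u+v$ differs from $I$, and there the subset $(I-u+v)\cap C^i_j\subseteq I-u+v\in \cI_i$ witnesses the required independence via the identity above; the remaining coordinates coincide with those of $I\in \cI'_i$. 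The verification of (FP2) is analogous, applying the identity at the top layer $k=d_i+1$ using $I+v\in \cI_i$ and the fact (established in the warm-up) that the uncovered $v$ lies in $S\setminus \Cmax^i$.

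The main obstacle I anticipate is the chain-respecting step in the second paragraph: (FP1) for $M_i$ only supplies the one-sided implication $v\in C\Rightarrow u\in C$, and one must leverage the equal-size counts $|(I\setminus J)\cap C|=|(J\setminus I)\cap C|$ (which in turn hinges on the criticality of both $I$ and $J$) to obtain the reverse implication and, simultaneously, the desired perfect matching inside $\Cmax^i$.
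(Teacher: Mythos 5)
Your proposal is correct and follows essentially the same route as the paper: use criticality of $I$ and $J$ to show every pair of $N_i$ stays within each $C\in\C_i$ (the paper phrases the forward implication via fundamental circuits, you via a rank count — equivalent one-line arguments), deduce the layer-wise perfect matchings by the equal-cardinality count, and then verify (FP1)/(FP2) for the direct sum $M'_i$ using that $I\cap C^i_{j-1}$ is a base of $C^i_{j-1}$. No gaps.
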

\begin{proof}
Let $i\in \{1,2\}$. 
By conditions (FP1) and (FP2) of a feasible pairing, any $uv\in N_i$ satisfies $I-u+v\in \cI_i$ and an element $v\in J\setminus I$ is uncovered only if $I+v\in \cI_i$.
It follows from Lemma~\ref{LEMcritical} that $I$ is critical, and hence $|I\cap C^i_j|=r_i(C^i_j)$ for every $j\in [d_i]$. That is, $I\cap C^i_j$ spans $C^i_j$. This implies that, for every $v\in (J\setminus I)\cap C^i_j$, we have $I+v\not\in \cI_i$ and the fundamental circuit\footnote{For an independent set $I$ of a matroid and an element $v$ such that $I+v$ is dependent, it is known that $I+v$ contains a unique minimal dependent set, called the {\em fundamental circuit} of $v$ for $I$.} of $v$ for $I$ is included in $(I\cap C^i_j)\cup \{v\}$. Thus, for every $j\in [d_i]$, any element in $(J\setminus I)\cap C^i_j$ must be paired with an element in $(I\setminus J)\cap C^i_j$ in $N_i$. 

Since $I$ and $J$ are both critical, we have $|(I\setminus J)\cap C^i_j|=|(J\setminus I)\cap C^i_j|=r_i(C^i_j)$ for each $j\in [d_i]$. Then, the above property implies that $N_i$ includes a perfect matching between $(I\setminus J)\cap (C^i_j\setminus C^i_{j-1})$ and $(J\setminus I)\cap (C^i_j\setminus C^i_{j-1})$ for each $j\in [d_i]$, where $C_0=\emptyset$. Therefore, for any $uv\in N_i$, we have $I'\coloneqq I+u-v\in \cI_i$ and $|I'\cap C^i_j|=|I\cap C^i_j|=r_i(C^i_j)$ for every $j\in [d_i]$. This implies $I'\in \cI'_i$. 
Any element $v\in J\setminus I$ not paired in $N_i$ satisfies $I+v\in \cI_i$. As we have $|I\cap \Cmax^i|=r_i(\Cmax^i)$, this implies $v\not\in \Cmax^i$ and $I+v\in \cI'_i$. Thus, $N_i$ satisfies conditions (FP1) and (FP2) with respect to the matroid $M'_i$.
\end{proof}

It follows from Claim~\ref{claim:exchanges} that, for each $i\in \{1,2\}$, a feasible pairing $N_i$ for $(I,J)$ with respect to $M_i$ satisfies conditions (FP1) and (FP2) also with respect to the matroid $M_i'$ when $I$ is the algorithm's output and $J$ is any critical common independent set.
We proceed to show the required inequality $\vote_1 (I,J,N_1)+\vote_2 (I,J,N_2)\ge 0$.

Consider a bipartite graph $G=(I\setminus J, J\setminus I; N_1\cup N_2)$, which is decomposed into alternating paths, cycles, and isolated vertices. Note that any $v\in J\setminus I$ cannot be an isolated vertex since otherwise $v^0$ blocks $I^{\star}$. Hence, all isolated vertices are in $I\setminus J$. For each path and cycle $P$, define its \emph{score} as
\begin{align*}
\score(P)=&+|\{\,uv\in P :\,  uv\in N_i,\, u\succ_i v \text{ for some } i\in \{1,2\}\,\}| \\
&- |\{\,uv\in P :\,  uv\in N_i,\, u\prec_i v \text{ for some } i\in \{1,2\}\,\}|\\
&+2(|P\cap (I\setminus J)|-|P\cap (J\setminus I)|),
\end{align*}
where we assume $u\in I\setminus J$ and $v\in I\setminus J$ and identify $P$ with its edge set (resp., its vertex set) in the first and second terms (resp., in the third term).
Note that $\vote_1(I, J, N_1)+\vote_2(I,J,N_2)$ equals the sum of the scores of all cycles and paths in $G$ plus $2\cdot\#\text{\{isolated vertices of $I\setminus J$ in $G$\}}$.
Therefore, showing $\score(P)\geq 0$ for any path and cycle $P$ completes the proof of the inequality $\vote_1(I, J, N_1)+\vote_2(I,J,N_2)\geq 0$. 
%Also, showing $|P\cap (I\setminus J)|-|P\cap (J\setminus I)|\geq 0$ for any path/cycle $P$ completes the proof of the $|I|\geq |J|$.

Let $P$ be an alternating path or cycle and let $u_0, v_1, u_1, v_2, u_2, \dots ,v_p, u_p$ be the elements on $P$ appearing in this order, where $u_q\in I\setminus J$ and $v_q\in J\setminus I$ for each $q$, 
and we set $u_0=\emptyset$ if $P$ starts at $J\setminus I$, $u_p=\emptyset$ if $P$ ends at $J\setminus I$, and $u_0=u_p$ if $P$ is a cycle.
Without loss of generality, we assume $u_{q-1} v_q\in N_1$ and $u_{q} v_{q}\in N_2$ for each $q\in[p]$.
%For each $u_q$ with $u_q\neq \emptyset$, we let $\lev(u_q)$ be an integer $i$ such that $I^{\star}\cap \cop (u_q)=u_{q}^i$. 

\begin{claim}\label{claim:noncycle} 
%If $P$ is not a cycle, 
If $P$ is a path, 
then we have the following.
\begin{description}
  \item[\rm (i)]  If $u_0\ne \emptyset$, then $\lev (u_0)\ge 0$.
    \item[\rm (ii)] If $u_0=\emptyset$, then $\lev (u_1)\ge 0$. Also, if $\lev (u_1)=0$, then $u_1\succ_2 v_{1}$.
    \item[\rm (iii)] If $u_p\ne \emptyset$, then $\lev (u_{p})\le 0$.
    \item[\rm (iv)] If $u_p=\emptyset$, then $\lev (u_{p-1})\le 0$. Also, if $\lev (u_{p-1})=0$, then $u_{p-1}\succ_1v_p$.
  
\end{description}
\end{claim}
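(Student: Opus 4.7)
The plan is to handle the four parts in pairs, using two ingredients throughout: the $M_i'$-versions of the pairing conditions given by Claim~\ref{claim:exchanges}, and the fact that $I^{\star}$ is an $(M^{\star}_1, M^{\star}_2)$-kernel. Parts (i) and (iii) should come out of a purely structural argument about which chain the path endpoint can lie in, while parts (ii) and (iv) will require a kernel-blocking argument applied to the level-$0$ copy of the first (respectively last) $v_q$ on $P$.

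For (i), I start from the observation that, under the orientation $u_{q-1}v_q \in N_1$ and $u_q v_q \in N_2$, the endpoint $u_0 \in I \setminus J$ is not covered by $N_2$. By the perfect-matching assertion of Claim~\ref{claim:exchanges} applied to $N_2$, every element of $(I \setminus J) \cap \Cmax^2$ is matched inside $\Cmax^2$, and hence $u_0 \notin \Cmax^2$. By the second transformation, an element outside $\Cmax^2$ only has copies at non-negative levels, so $\lev(u_0) \geq 0$. Part (iii) then follows by the mirror argument: $u_p$ is uncovered by $N_1$, so by Claim~\ref{claim:exchanges} $u_p \notin \Cmax^1$, and thus $\lev(u_p) \leq 0$.

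For (ii), I use the matroid kernel property applied to the copy $v_1^0$. Since $u_0 = \emptyset$, $v_1$ is unpaired in $N_1$, so by (FP2) with respect to $M_1'$ (via Claim~\ref{claim:exchanges}) we have $I + v_1 \in \cI_1'$ and $v_1 \notin \Cmax^1$. It follows that $I^{\star} + v_1^0 \in \cI^{\star}_1$, so $v_1^0$ is not dominated in $M^{\star}_1$, and the kernel property of $I^{\star}$ forces $v_1^0$ to be dominated in $M^{\star}_2$. On the other hand, $v_1$ is paired with $u_1$ in $N_2$, so $I - u_1 + v_1 \in \cI_2'$ (again via Claim~\ref{claim:exchanges}), which yields $I^{\star} - u_1^{\lev(u_1)} + v_1^0 \in \cI^{\star}_2$. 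Domination in $M^{\star}_2$ then gives $u_1^{\lev(u_1)} \succ^{\star}_2 v_1^0$, and by the definition of $\succ^{\star}_2$ (higher levels preferred, ties broken by $\succ_2$) this is exactly $\lev(u_1) > 0$, or $\lev(u_1) = 0$ and $u_1 \succ_2 v_1$. Part (iv) is then the symmetric statement, proved by applying the same argument to $v_p^0$, using that $v_p$ is unpaired in $N_2$ and paired with $u_{p-1}$ in $N_1$, with the roles of $M^{\star}_1$ and $M^{\star}_2$ swapped and the fact that $\succ^{\star}_1$ prefers lower levels.

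The main potential obstacle is ensuring that the required independence assertions hold in the transformed matroids $M_i'$, not merely in the original $M_i$: the raw conditions (FP1)/(FP2) only give $I + v_1 \in \cI_1$ or $I - u_1 + v_1 \in \cI_2$, which a priori need not lift to $\cI^{\star}_1$ and $\cI^{\star}_2$. This is exactly where Claim~\ref{claim:exchanges} does the heavy lifting: for critical $I$ and $J$, (FP1) and (FP2) automatically upgrade to their $M_i'$-versions, and these $M_i'$-memberships then translate into the $M^{\star}_i$-memberships needed for the kernel-blocking step.
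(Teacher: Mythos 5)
Your proof is correct and follows essentially the same route as the paper: parts (i) and (iii) via the perfect-matching assertion of Claim~\ref{claim:exchanges} showing the unpaired endpoint lies outside $\Cmax^2$ (resp.\ $\Cmax^1$), and parts (ii) and (iv) via the kernel property applied to the level-$0$ copy $v_1^0$ (resp.\ $v_p^0$), upgraded from $M_i$ to $M_i'$ by Claim~\ref{claim:exchanges} exactly as you describe. No gaps.
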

\begin{proof}
If $u_0\ne \emptyset$, then $u_0$ is not paired in $N_2$. From Claim \ref{claim:exchanges}, we obtain that $u_0\notin \Cmax^2$, and hence $\lev(u_0)\ge 0$. Thus, (i) is shown.
If $u_0=\emptyset$, then $v_1$ is not paired in $N_1$, which implies $I+v_1\in \cI'_1$ by (FP2) for $M_1'$, and hence $I^{\star}+v_1^0\in \cI_1^{\star}$. Since $I^{\star}$ is a matroid kernel (i.e., stable), then $v_1^0$ must be dominated in the matroid $M^{\star}_2$. 
Note that $u_1v_1\in N_2$ implies $I+v_1-u_1\in \cI'_2$ by (FP1) for $M'_2$, and hence $I^{\star}+v_1^0-u_1^{\star}\in \cI_2^{\star}$, where $u_1^{\star}\in I^{\star}$ is the $\lev(u_1)$-level copy of $u_1$. Then, we must have $u_1^{\star}\succ_2^{\star} v^0_1$ because $v^0_1$ is  dominated in $M^{\star}_2$.
As elements of higher levels are preferred to those of lower levels in  $\succ^{\star}_2$,
 we have $\lev (u_1)\ge 0$ and, if $\lev (u_1)=0$, then $u_1\succ_2v_1$ should hold. Thus, (ii) is shown.
 
We can show (iii) and (iv) analogously.
\end{proof}

\begin{claim}\label{claim:nonsteep}
For each $q\in [p]$ with $u_{q-1},u_q\neq \emptyset$, we have $\lev(u_{q})\geq \lev(u_{q-1})-1$ and the following.
\begin{description}
\item[\rm (a)] If $\lev(u_{q})=\lev(u_{q-1})$, then $u_{q-1} \succ_1 v_{q}$ or $u_{q} \succ_2 v_{q}$.
\item[\rm (b)] If $\lev(u_{q})=\lev(u_{q-1})-1$, then $u_{q-1} \succ_1 v_{q}$ and $u_{q} \succ_2 v_{q}$.
\end{description}
\end{claim}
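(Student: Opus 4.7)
The plan is to exploit the matroid kernel property of $I^\star$: every copy of $v_q$ outside $I^\star$ must be dominated by $I^\star$ in $M_1^\star$ or $M_2^\star$, and the preference forced by the corresponding exchange pins down $\lev(u_q)$. The basic tool is this: since $u_{q-1}v_q\in N_1$ and $u_qv_q\in N_2$ satisfy (FP1) with respect to $M_1'$ and $M_2'$ by Claim~\ref{claim:exchanges}, for any level $k$ with $v_q^k\in S^\star$ we have $I^\star-u_{q-1}^{\lev(u_{q-1})}+v_q^k\in \cI_1^\star$ and $I^\star-u_q^{\lev(u_q)}+v_q^k\in \cI_2^\star$ (no element gets duplicated since $v_q\notin I$). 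Recalling that lower levels are preferred in $\succ_1^\star$ and higher levels in $\succ_2^\star$, dominance of $v_q^k$ in $M_1^\star$ (resp.\ $M_2^\star$) against $u_{q-1}^{\lev(u_{q-1})}$ (resp.\ $u_q^{\lev(u_q)}$) requires either a strictly larger level for $u_{q-1}$ (resp.\ strictly smaller for $u_q$), or equality of levels together with the preference $u_{q-1}\succ_1 v_q$ (resp.\ $u_q\succ_2 v_q$).

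With this in mind, the inequality $\lev(u_q)\ge \lev(u_{q-1})-1$ will be proved by contradiction: assuming $\lev(u_q)\le \lev(u_{q-1})-2$, set $k=\lev(u_{q-1})-1$. First one verifies that $v_q^k$ exists in $S^\star$. If $\lev(u_{q-1})\ge 1$, then $u_{q-1}\in \Cmax^1$, and Claim~\ref{claim:exchanges} (which says every pair of $N_1$ lies either entirely in $\Cmax^1$ or entirely outside it) yields $v_q\in \Cmax^1$, so $v_q^1,\dots,v_q^{\rho_1}$ all exist and in particular $v_q^k$. If $\lev(u_{q-1})\le 0$, our hypothesis yields $\lev(u_q)\le -2$ so $u_q\in \Cmax^2$, and the analogous statement for $N_2$ gives $v_q\in \Cmax^2$ and hence $v_q^k$. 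The copy $v_q^k$ sits at strictly lower level than $u_{q-1}^{\lev(u_{q-1})}$, so the exchange in $M_1^\star$ witnesses non-dominance there; hence $v_q^k$ must be dominated in $M_2^\star$, forcing $u_q^{\lev(u_q)}\succ_2^\star v_q^k$. But $\lev(u_q)<k$ puts $v_q^k$ strictly above $u_q^{\lev(u_q)}$ in $\succ_2^\star$, a contradiction.

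Statements (a) and (b) follow by matching the level of the chosen copy to $\lev(u_{q-1})$ or $\lev(u_q)$. For (a), with $\lev(u_q)=\lev(u_{q-1})$, the common copy $v_q^{\lev(u_{q-1})}$ exists (by the same case analysis) and must be dominated in some $M_i^\star$; at equal levels the tie-breaker $\succ_i$ is invoked, so dominance in $M_1^\star$ gives $u_{q-1}\succ_1 v_q$ and dominance in $M_2^\star$ gives $u_q\succ_2 v_q$. For (b), with $\lev(u_q)=\lev(u_{q-1})-1$, the copy $v_q^{\lev(u_{q-1})}$ lies strictly above $u_q^{\lev(u_q)}$ in $\succ_2^\star$ and so cannot be dominated in $M_2^\star$, forcing $M_1^\star$-dominance and therefore $u_{q-1}\succ_1 v_q$; symmetrically, $v_q^{\lev(u_q)}$ sits strictly below $u_{q-1}^{\lev(u_{q-1})}$ in $\succ_1^\star$, so it must be dominated in $M_2^\star$, yielding $u_q\succ_2 v_q$. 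The main subtlety throughout is verifying existence of the auxiliary copies $v_q^k$ in $S^\star$; this is where Claim~\ref{claim:exchanges} does the heavy lifting, translating $\Cmax^i$-membership of $u_{q-1}$ or $u_q$ (forced by the sign of the corresponding level) into the required $\Cmax^i$-membership of $v_q$ via the structure of $N_1$ and $N_2$.
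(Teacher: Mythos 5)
Your proposal is correct and follows essentially the same route as the paper's proof: invoke the kernel property of $I^{\star}$ on suitably chosen copies of $v_q$, using Claim~\ref{claim:exchanges} both to get the exchanges $I^{\star}-u_{q-1}^{\star}+v_q^k\in\cI_1^{\star}$, $I^{\star}-u_q^{\star}+v_q^k\in\cI_2^{\star}$ and to certify that the needed copies of $v_q$ exist via $\Cmax^1$/$\Cmax^2$ membership. The only (immaterial) difference is that you always take the witness level $k=\lev(u_{q-1})-1$ with a case split on the sign of $\lev(u_{q-1})$, whereas the paper switches to $\lev(u_q)+1$ in the second case.
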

\begin{proof}
Since $I^{\star}$ is stable, for every integer $i\in \{ -\rtwo,\dots,\rone \}$ such that $v_{q}^i\in \cop (v_{q})$, the copy $v_q^i$ should be dominated by $I^{\star}$ in $M_1^{\star}$ or $M_2^{\star}$. Note that, as we have $u_{q-1} v_q\in N_1$ and $u_{q} v_{q}\in N_2$, the condition (FP1) (with respect to $M'_1$ and $M'_2$) implies that  we have $I^{\star}+v_q^i-u_{q-1}^{\star}\in \cI^{\star}_1$ and $I^{\star}+v_q^i-u_q^{\star}\in \cI^{\star}_2$, where $u^{\star}\in I^{\star}$ is the $\lev(u)$-level copy of $u$. Therefore,  we must have at least one of $u^{\star}_{q-1}\succ_1^{\star} v_q^i$ and $u^{\star}_{q}\succ_2^{\star} v_q^i$. 

Suppose to the contrary that $\lev(u_{q})< \lev(u_{q-1})-1$. We now show the existence of an integer $i'$ that satisfies $\lev(u_{q})<i'<\lev(u_{q-1})$ and $v_{q}^{i'}\in \cop (v_{q})$, which implies $u^{\star}_{q-1}\not\succ_1^{\star} v_q^{i'}$ and $u^{\star}_{q}\not\succ_2^{\star} v_q^{i'}$ contradicting the stability of $I^{\star}$. 
Observe that $\lev(u_{q})< \lev(u_{q-1})-1$ implies that $\lev(u_{q-1})>0$ or $\lev(u_{q})<0$ holds.
If $\lev (u_{q -1})>0$, then $u_{q -1}\in \Cmax^1$, and hence $v_{q} \in \Cmax^1$ follows from Claim~\ref{claim:exchanges}. Thus, $i':= \lev (u_{q-1})-1\ge 0$ satisfies the required conditions.
If $\lev (u_{q})<0$, then $u_{q}\in \Cmax^2$, and hence $v_{q }\in \Cmax^2$ follows from Claim~\ref{claim:exchanges}. 
Thus, $i':=\lev (u_{q })+1\le 0$ satisfies the required conditions. 
Therefore, we have $\lev(u_{q})\geq \lev(u_{q-1})-1$.

From Claim~\ref{claim:exchanges}, we obtain that the $\lev(u_q)$- and $\lev(u_{q-1})$-level copies of $v_q$ belong to $\cop (v_{q})$.
Then, (a) and (b) follow from the fact that $u^{\star}_{q-1}\succ_1^{\star} v_q^i$ or $u^{\star}_{q}\succ_2^{\star} v_q^i$ must hold for every $i\in \{ -\rtwo,\dots, \rone\}$ with $v_{q}^i\in \cop (v_{q})$.
\end{proof}
For each $q\in [p]$, define $\score (u_{q -1}v_{q}u_{q})$ by
\begin{equation*}
\score (u_{q -1}v_{q}u_{q})=
\begin{cases}
2& \mbox{if $u_{q -1}\succ_1 v_{q}$ and $u_{q}\succ_2 v_{q}$},\\
0& \mbox{if either $u_{q -1}\succ_1 v_{q}$  or $u_{q}\succ_2 v_{q}$},\\
-2&\mbox{if neither $u_{q -1}\succ_1 v_{q}$ nor $u_{q}\succ_2 v_{q}$}.
\end{cases}
\end{equation*}
where $v_{q}\succ_i \emptyset$ always holds for any $i\in \{1,2\}$. 

Let $A=\{\, q\in [p]: \lev(u_{q})= \lev(u_{q-1})\}$, $B=\{\, q\in [p]: \lev(u_{q})= \lev(u_{q-1})-1\}$, and $C=\{\, q\in [p]: \lev(u_{q})>\lev(u_{q-1})\}$, where we let $\lev(u_0)=-\infty$ if $u_0=\emptyset$ and let $\lev(u_p)=+\infty$ if $u_p=\emptyset$.
It follows from Claim~\ref{claim:nonsteep} that $\{A, B, C\}$ is a partition of $[p]$. This claim also implies that $\score (u_{q -1}v_{q}u_{q})$ is at least $0$ if $q\in A$, is exactly $2$ if $q\in B$,  and is at least $-2$ if $q\in C$. 
If $P$ is a cycle, then $\score(P)=\sum_{q =1}^{p}\score (u_{q -1}v_{q}u_{q})$ and $|B|\geq |C|$. Thus, $\score(P)\geq 0$ immediately follows. We then assume that $P$ is a path. 
We consider the following four cases depending on whether $u_0$ and $u_p$ are $\emptyset$ or not.

%Recall that $\score(P)$ has the third term $2(|P\cap (I\setminus J)|-|P\cap (J\setminus I)|)$, which takes the value of $2$ or $0$ or $-2$ according to two or one or none of $u_0$, $u_k$ is $\emptyset$.
%as well as the first two terms that represent votes along edges on $P$. Then, we can obtain that 

If $u_0\ne \emptyset$ and $u_p\ne \emptyset$, then Claim~\ref{claim:noncycle} implies $\lev(u_0)\geq 0$ and $\lev(u_p)\leq 0$. Then $|B|\geq |C|$ and hence $\sum_{q =1}^{p}\score (u_{q -1}v_{q}u_{q}) \ge 0$. Since this coincides with the sum of the first two terms of $\score(P)$, i.e., 
\begin{align*}
+|\{\,uv\in P :\,  uv\in N_i,\, u\succ_i v \text{ for some } i\in \{1,2\}\,\}| \\- |\{\,uv\in P :\,  uv\in N_i,\, u\prec_i v \text{ for some } i\in \{1,2\}\,\}|
\end{align*}
and the third term of $\score(P)$ is $2(|P\cap (I\setminus J)|-|P\cap (J\setminus I)|)$, which is $2$, we obtain $\score(P)\geq 2>0$.

If $u_0=\emptyset$ and $u_p=\emptyset$, then Claim~\ref{claim:noncycle} implies $\lev(u_1)\geq 0$ and $\lev(u_{p-1})\leq 0$. 
We thus have $|B\cap \{2,3,\dots,p-1\}|\geq |C\cap\{2,3,\dots,p-1\}|$,
and hence $\sum_{q =2}^{p-1}\score (u_{q -1}v_{q}u_{q}) \ge 0$. Claim~\ref{claim:noncycle} also implies $u_1\succ_2 v_1$ and $u_{p-1}\succ_1 v_p$. These imply that the sum of the first two terms of $\score(P)$ is at least $2$, while the third term of $\score(P)$ is $-2$. Thus, $\score(P)\geq 0$.

If $u_0\neq\emptyset$ and $u_p=\emptyset$, then Claim~\ref{claim:noncycle} implies $\lev(u_0)\geq 0$ and $\lev(u_{p-1})\leq 0$. We thus have $\sum_{q =1}^{p-1}\score (u_{q -1}v_{q}u_{q}) \ge 0$. Claim~\ref{claim:noncycle} also implies $u_{p-1}\succ_1 v_p$. Then, the sum of the first two terms of $\score(P)$ is at least $1$, while the third term is $0$. Thus, $\score(P)\geq 1>0$.

Similarly, if $u_0=\emptyset$ and $u_k\neq\emptyset$, we obtain $\score(P)\geq 1>0$.

Therefore, in any case, $\score(P)\geq 0$ holds.
This completes the proof of $\vote_1 (I,J,N_1)+\vote_2 (I,J,N_2)\ge 0$.

Since this holds for arbitrary critical common independent set $J$ and arbitrary feasible pairings $N_1$ and $N_2$ for $(I,J)$, we conclude that $I$ is popular.

\section{Hardness Results on Popular Near-Maximum-Weight Matching Problems}\label{sec:hardness}
In this section, we show some hardness results on the problems of finding popular ``near-optimal'' solutions. We consider only problems on bipartite graphs, which clearly imply hardness results on the general matroid intersection settings. 

\subsection{One-sided Preferences Model (Proof of Theorem~\ref{thm:hardness-1-sided})}\label{sec:hardness1}
In this section, we present a result on the the following problem.
%\begin{problem}[\nearopt]
% Given a bipartite graph $G=(A,B;E)$ with weak preferences (i.e., preference lists with ties) $\succ_a$ for each $a\in A$, a weight function $w :E\to \mathbb{R}$, and a number $k$, determine the existence of a matching $M$ such that $w (M)\ge k$ and $\Delta (M,N)\ge 0$ for any matching $N$ with $w (N)\ge k$. In addition, return such $M$ if it exists.
%\end{problem}

\begin{problem}[\nearoptone]
Given a bipartite graph $G=(A,B;E)$ with weak preferences (i.e., preference lists with ties) $\{\succ_a\}_{a\in A}$, a weight function $w :E\to \{0,1\}$, and a number $k$, determine the existence of a matching $M$ such that $w (M)\ge k$ and $\Delta (M,N)\ge 0$ for any matching $N$ with $w (N)\ge k$. In addition, return such $M$ if it exists.
\end{problem}

While we do not know whether this problem is NP-hard, we show that it is at least as hard as the notoriously difficult \exm\ problem, for which no deterministic polynomial-time algorithm has been found since it was introduced by Papadimitriou and Yannakakis in 1982 \cite{papadimitriou1982complexity}.

\begin{problem}[\exm]
Given a bipartite graph $G=(A,B;E)$ with each edge colored red or blue and an integer $k$, determine the existence of a perfect matching $M$ with exactly $k$ red edges.
\end{problem}

Below is a restatement of Theorem~\ref{thm:hardness-1-sided}. We show this theorem in the rest of this section.
\begin{theorem}\label{thm:hardone}
A deterministic polynomial-time algorithm for \nearoptone\ implies a deterministic polynomial-time algorithm for \exm.
\end{theorem}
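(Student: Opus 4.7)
The plan is to exhibit a polynomial-time reduction from \exm\ to \nearoptone. Given an \exm\ instance $(G=(A,B;E), R, k)$ with $|A|=|B|=n$ and red edge set $R \subseteq E$, I would construct a \nearoptone\ instance $(G', w, k', \succ)$ such that $(G,R,k)$ is a yes-instance of \exm\ if and only if $(G', w, k', \succ)$ is a yes-instance of \nearoptone. I would set the weight $w(e)=1$ on edges corresponding to red edges of $G$ and $w(e)=0$ otherwise, so that the threshold condition $w(M) \ge k'$ (with $k' = k$) translates to ``$M$ contains at least $k$ red edges''. The graph $G'$ is obtained from $G$ by attaching an auxiliary gadget of ``bypass'' or ``dummy'' vertices that permits local swaps of red edges for weight-$0$ alternatives; preferences are weak orders designed so that each $a \in A$ strictly prefers original blue edges over bypass edges over original red edges, above $\emptyset$.

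To prove the equivalence, I would argue both directions. For the forward direction, given a perfect matching $M^*$ in $G$ with exactly $k$ red edges, I would show that a suitable lift of $M^*$ to $G'$ is popular among weight-$\ge k'$ matchings: any alternative matching $N$ satisfying $w(N) \ge k'$ either has strictly more red edges than $M^*$ (yielding a net deficit for $N$ because agents with blue or bypass in $M^*$ and red in $N$ vote against $N$) or leaves some $A$-vertex unmatched or attached to a bypass where $M^*$ offers a blue edge (yielding a deficit since $\emptyset$ and bypass edges are less preferred than their blue counterparts). For the reverse direction, given a popular near-maximum-weight matching $M'$ in $G'$, I would project $M'$ to the original edges of $G$ and argue that the projection is a perfect matching with exactly $k$ red edges, via a swap argument: if the red count exceeds $k$, exchanging a red edge for a weight-$0$ bypass maintains $w \ge k'$ while increasing votes in favor, contradicting popularity; and if the projection is not perfect, an augmenting alternation in $G'$ would produce a matching of weight $\ge k'$ strictly preferred by more agents.

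The hard part will be designing the gadget and preference tiers so that the swap and augmenting arguments work without loopholes, and so that no spurious popular matching arises. In the naive construction (namely $G'=G$ with preferences blue $\succ$ red), the popular near-maximum-weight matching tends to be a \emph{maximum} matching among those with $\ge k$ red edges rather than one with exactly $k$ red edges, because $\emptyset$ being least preferred forces popularity to favor matched over unmatched vertices, even at the cost of extra red edges. The auxiliary gadget must simultaneously absorb the penalty of ``too many'' red edges (via the swap to a bypass) and fail to absorb the penalty of ``too few'' matched vertices (via an augmenting path to an original blue edge), which requires a careful balance between the placement of bypass edges in the preference order and their incidences in $G'$. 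Finalizing this balance is the main technical ingredient; once it is in place, the reduction runs in polynomial time, and any deterministic polynomial-time algorithm for \nearoptone\ applied to $(G', w, k', \succ)$ decides \exm\ on $(G,R,k)$, yielding Theorem~\ref{thm:hardone}.
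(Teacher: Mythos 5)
Your high-level plan coincides with the paper's strategy: reduce from \exm, encode the red-edge count in the weight threshold, and use a gadget plus preferences so that popularity forces a solution to be a perfect matching with exactly $k$ red edges. However, the proposal has a genuine gap: the gadget, which is the entire technical content of the reduction, is never specified, and you acknowledge as much (``finalizing this balance is the main technical ingredient''). Without it there is no proof. The specific difficulty you would have to resolve is that in the one-sided model each agent casts a single vote, so an agent $a_i$ of degree $d_i$ that keeps all of its original edges cannot be made to ``vote'' separately on each incident edge; the swap and augmenting arguments you invoke do not translate into a controlled vote count. The paper resolves this by splitting $a_i$ into $d_i$ edge-agents $a_i^{\ell}$, each adjacent only to a private object $o_i^{\ell}$ and to $b'_{i(\ell)}$ (preferring $o_i^{\ell}$ over $b'_{i(\ell)}$ exactly when the edge is red), together with a counting sub-gadget of $d_i$ agents $c_i^{\ell}$ competing for $d_i-1$ tied objects $x_i^{p}$; this forces exactly one $o_i^{\ell}$ to be consumed by a $c$-agent and hence exactly one original edge per $a_i$ to be selected, while keeping the $c$-agents' net vote nonnegative.

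A second concrete problem is your weight scheme. Taking $w=1$ on red edges, $w=0$ elsewhere, and $k'=k$ only enforces a lower bound of $k$ on the number of red edges; it gives no handle on coverage, so popularity alone must rule out matchings that leave vertices unmatched or use too many red edges, and your sketch does not establish either. The paper instead assigns weight $1$ to every edge except the blue originals (which get weight $0$) and sets $k'=2|E|+k-n$; a single threshold then simultaneously yields $|N'\cap \Ered|\ge k$ and $|N'\cap \Eblue|\le n-k-\gamma$ (where $\gamma$ counts unmatched agents), and these two inequalities are exactly what allow every agent preferring $N'$ to be paired injectively with an agent preferring $M'$, giving $\Delta(M',N')\ge 0$ in the forward direction and $\Delta(M',N')>0$ against any non-perfect or over-weight competitor in the reverse direction. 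Until you commit to a gadget and a weight/threshold pair for which this bookkeeping closes, the reduction is not established.
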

\begin{proof}

Let $I=(G,k)$ be an instance of \exm\ with $G=(A,B;E)$ being an edge colored graph. Let $|A|=|B|=n$ and suppose that the vertices in $A$ and $B$ are represented as $A=\{a_1, a_2,\dots, a_{n}\}$ and $B=\{b_1, b_2,\dots, b_{n}\}$. 

From $I$, we construct an instance $I'$ of \nearoptone, that consists of a bipartite graph $G'=(A',B';E')$, weak preference $\succ_a$ for each $a\in A'$, and a weight function $w:E'\to \{0,1\}$, and a number $k'$.

We first define the sets $A'$ and $B'$ in $G'$ as follows. For convenience, we call an element in $A'$ an agent and that in $B'$ an object. We denote by $d_i$ the degree of $a_i\in A$ in $G$.
%\begin{align*}
%A'=\{\, b'_i: b_i\in B\,\}\cup \{o_i^p: a_i\in A, p\in [d_i]\}\cup \{\,x_i^p: p\in [d_i-1]\,\}
%\end{align*}

\begin{itemize}
\item For each $b_i\in B$, we have an object $b'_i$ in $B'$.
\item For each $a_i\in A$, we have $2d_i$ agents $a_i^1,\dots, a_i^{d_i}, c_i^1,\dots, c_i^{d_i}$ in $A'$ and $2d_i-1$ objects $o_i^1,\dots, o_i^{d_i},x_i^1,\dots,x_i^{d_i-1}$ in $B'$.
\end{itemize}

Observe that $|A'|=\sum_{i=1}^n (2d_i)=2|E|$ and $|B'|=|B|+\sum_{i=1}^n (2d_i-1)=n+2|E|-n=2|E|$, i.e., we have $2|E|$ agents and $2|E|$ objects.
%Observe that the number of $a_i^\ell$ type agents is $\sum_{i=1}^n d_i=|E|$ and that of $c_i^\ell$ type agents is also $|E|$. So, there are $2|E|$ agents in $A'$.
We next define the edge set $E'$ and preferences. As $G'$ is constructed to be a simple graph, we describe preferences of agents as orders on adjacent objects, which are equivalent to orders on incident edges. For each agent $a_i\in A$, let $(a_i,b_{i(1)}),\dots , (a_i,b_{i(d_i)})$ denote the edges adjacent to $a_i$ in $G$.

\begin{itemize}
\item  
For each $a_i\in A$ and $\ell\in [d_i]$, the agent $a_i^\ell$ is adjacent to $o^\ell_i$ and $b'_{i(\ell)}$. Her preference is $o_i^{\ell}\succ b'_{i(\ell)}$ if $(a_i,b_{i(\ell)})$ is red, and $b'_{i(\ell)}\succ o_i^{\ell}$ if $(a_i,b_{i(\ell)})$ is blue.

\item For each $a_i\in A$ and $\ell\in [d_i]$, the agent $c_i^{\ell}\in A'$ is adjacent to $o_i^{\ell}$ and $x_i^p~(p\in [d_i-1])$. Her preference is $ (x_i^1\ x_i^2\cdots x_i^{d_i-1}) \succ o_i^{\ell}$. That is, $x_i^p~(p\in [d_i-1])$ are all tied and $o_i^{\ell}$ is worse than them.
\end{itemize}

Next, we define weights $w:E'\to \{0,1\}$. All edges of type $(c_i^{\ell},o_i^{\ell})$, $(c_i^{\ell},x_i^p)$ and $(a_i^{\ell},o_i^{\ell})$ have weight $1$. The weight of an edge of type $(a_i^{\ell},b'_{i(\ell)})$ is $1$ if it is red and $0$ if it is blue. Let the weight bound $k'$ of $I'$ be $2|E|+k-n$. This completes the construction of $I'$. We introduce the following notations for subsequent arguments.
\begin{align*}
E'_{\sf red}&=\{\, (a_i^\ell, b'_{i(\ell)})\in E': ~a_i\in A, \ell\in [d_i], (a_i, b_{i(\ell)}) \text{ is red in $G$}\,\},\\
E'_{\sf blue}&=\{\, (a_i^\ell, b'_{i(\ell)})\in E': ~a_i\in A, \ell\in [d_i], (a_i, b_{i(\ell)}) \text{ is blue in $G$}\,\},\\
\Mk&=\{\, N'\subseteq E':\text{ $N'$ is a matching in $G'$ satisfying $w(N')\geq k'$}\, \}.
\end{align*}
%We see that $w(e)=0$ for any $e\in E'_{\sf blue}$ and $w(e)=1$ for any $e\in E'\setminus E'_{\sf blue}$.
\begin{claim}\label{claim:wha1}
    Suppose that $I$ is a yes-instance of \exm, i.e. $G$ has a perfect matching with exactly $k$ red edges. Then, $I'$ is a yes-instance of \nearoptone, i.e., $G'$ has a matching $M\in \Mk$ such that $\Delta(M,N)\geq 0$ for any $N\in \Mk$. 
\end{claim}
\begin{proof}
Let $M$ be a perfect matching with exactly $k$ red edges. Create a matching $M'$ in $I'$ as follows. Initialize $M'$ with an empty set and do the following for each $a_i\in A$. 
%\begin{align*} M=&\{\,(a_i^{\ell}, b_i(\ell)), (c_i^{\ell}, o_i(\ell)): (a_i, b_{i(\ell)})\in M\,\}
%&\{\,(a_i^{\ell}, o_i^{\ell}), (c_i^{\ell}, x_i^{\ell})): (a_i, b_{i(\ell)})\in E\setminus M\,\}
%\end{align*}

\begin{itemize}
\item Let $\ell^*\in[d_i]$ be the number such that $(a_i,b_{i(\ell^*)})\in M$ (i.e., agent $a_i$ is assigned to the $\ell^*$th neighbor in $M$).
\item Add $(a_i^{\ell^*},b_{i(\ell^*)})$ and $(c_i^{\ell^*}, o_i^{\ell^*})$ to $M'$.
\item For each $\ell\in [d_i]\setminus\{\ell^*\}$, add $(a_i^{\ell},o_i^{\ell})$ to $M'$. 
\item Add disjoint $d_i-1$ pairs between $c_i^{\ell}~(\ell\in [d_i]\setminus \{\ell^*\})$ and $x_i^p~(p\in [d_i-1])$ to $M'$.
\end{itemize}

As $M$ is a perfect matching, the resultant $M'$ is a perfect matching in $G'$. In addition, as $M$ has $k$ red edges (i.e., $(n-k)$ blue edges), the weight of $M'$ is $\sum_{a_i\in A}2d_i -(n-k) =2|E|+k-n=k'$. Hence, $M'\in \Mk$.

Take any matching $N'\in \Mk$. We show $\Delta (M',N')\geq 0$. 
Note that the sum of the votes of $c_i^{\ell}$ type agents is always nonnegative (i.e., the number of $c_i^\ell$ agents who prefer $M'$ to $N'$ is no less than the number of $c_i^\ell$ agents who prefer $N'$ to $M'$), because their preferences are $(x_i^1\ x_i^2\cdots x_i^{d_i-1})\succ o_i^{\ell}$ and in $M'$ all $x_i^p$ objects are matched with $c_i^\ell$ agents. 
In the following, we show that the sum of the votes of $a_i^{\ell}$ type agents is also nonnegative, which completes the proof.

Recall that the edges in $\Eblue$ have weight $0$ and all other edges in $G'$ have weight $1$. 
Let $\gamma$ be the number of unmatched agents in $N'$. As $w(N')\ge k'$, at least $k'$ agents should be matched by weight $1$ edges in $N'$. Hence, at most $|A'|-k'-\gamma=2|E|-k'-\gamma=n-k-\gamma$ agents are matched by weight $0$ edges in $N'$.
That is, $|N'\cap \Eblue|\leq n-k-\gamma$.

Note also that the total number of $x_i^p$ and $o^\ell_i$ type objects is $\sum_{i=1}^n(2d_i-1)=2|E|-n$. Then, $w(N')\ge k'=2|E|+k-n$ implies that at least $k'-(2|E|-n)=k$ agents must be assigned to $b'_j$ type objects via weight $1$ edges. Therefore, $|N'\cap \Ered|\geq k$.

By the construction of $M'$, we have $|M'\cap \Ered|=k$ and $|M'\cap \Eblue|=n-k$. Therefore, we obtain $|N'\cap \Ered|\geq |M'\cap \Ered|$ and $|N'\cap \Eblue|\leq |M'\cap \Eblue|-\gamma$.

Recall that the preference of an agent $a_i^{\ell}$ is $o_i^{\ell}\succ b'_{i(\ell)}$ or $b'_{i(\ell)}\succ o_i^{\ell}$ depending on whether $(a^\ell_i,b'_{i(\ell)})$ belongs to $\Ered$ or $\Eblue$ and that $M'$ is perfect.
Therefore, she prefers $N'$ to $M'$ only if either of the following two holds: (i) $(a^\ell_i, b'_{i(\ell)})\in \Ered$, $N'(a^\ell_i)=o^\ell_i$, and $M'(a^\ell_i)=b'_{i(\ell)}$  or (ii) $(a^{\ell}_i, b'_{i(\ell)})\in \Eblue$, $N'(a^\ell_i)=b'_{i(\ell)}$, and $M'(a^\ell_i)=o^\ell_i$.

If there are $\eta_1$ agents to whom (i) applies, then $|N'\cap \Ered|\geq |M'\cap \Ered|$ implies that there are at least $\eta_1$ agents who are matched by $\Ered$ edges in $N'$ but not in $M'$, and hence they prefer $M'$ to $N'$.   
If there are $\eta_2$ agents $a_i^{\ell}$ to whom (ii) applies, then, as we have $|N'\cap \Eblue|\leq |M'\cap \Eblue|-\gamma$, there are at least $\eta_2 + \gamma$ agents, who are matched by $\Eblue$ edges in $M'$ but not in $N'$, and hence they prefer $M'$ to $N'$. 
Hence, the number of $a^\ell_i$ agents who prefer $M'$ to $N'$ is at least $\eta_1+\eta_2+\gamma$ while the number of agents who prefer $N'$ to $M'$ is $\eta_1+\eta_2$. Thus, the sum of the votes of $a_i^{\ell}$ type agents is nonnegative as required.
\end{proof}

\begin{claim}\label{claim:wha2}
Suppose that $I$ is a yes-instance of \exm. Then, any solution of \nearoptone, i.e., any popular solution in the set $\Mk$, has weight exactly $k'$ and is perfect. 
\end{claim}
\begin{proof}
Take any $N'\in \Mk$ that satisfies either of the following two: (a) $w(N')> k'$ or (b) $w(N')=k'$ and $N'$ is not perfect.
We show that $N'$ cannot be popular, which completes the proof. Since $I$ is a yes-instance, as shown in Claim \ref{claim:wha1}, we can construct a matching $M'$ that is popular in $\Mk$ and satisfies $w(M')=k'$.  It is sufficient to show $\Delta(M', N')>0$.

Let $\gamma$ be the number of unmatched agents in $N'$.  By a similar argument as in the proof of Claim~\ref{claim:wha1}, 
we have $|N'\cap \Ered|\geq k$ and $|N'\cap \Eblue|\leq n-k-\gamma$, where the strict inequalities hold if $w(N')>k'$. 
Then, in each of case (a) and (b), we have $|N'\cap \Eblue|< n-k$.  
Thus, we obtain $|N'\cap \Ered|\geq |M'\cap \Ered|$ and $|N'\cap \Eblue|\leq |M'\cap \Eblue|-1$. 

Consider $\Delta(M', N')$. By the same argument as in the proof of Claim~\ref{claim:wha1}, the sum of the votes of $c_i^{\ell}$ agents is nonnegative. 
%Note also that for each $i\in [n]$, $d_i-1$ agents in $c_i^{\ell}~(\ell\in [d_i])$ are assigned to their top choices by the construction of $M'$, and hence only the remaining one agent can prefer $N'$ to $M'$. 
Also, similarly to the proof of Claim~\ref{claim:wha1}, the above two inequalities imply that the number of $a^\ell_i$ agents who prefer $M'$ to $N'$ is strictly larger than  the number of agents who prefer $N'$ to $M'$. Therefore, $\Delta(M',N')>0$ is obtained. 
\end{proof}

Suppose that a matching $M'$ in $G'$ is perfect and satisfies $w(M')=k'$. 
Then, $w(M')=k'$ implies that $M'$ has $n-k$ edges in $\Eblue$, and then the perfectness implies that $M'$ has $k$ edges in $\Ered$. Hence the induced matching $M\coloneqq \{\, (a_i, b_{i(\ell)})\in E : i\in [n], \ell\in [d_i], (a^\ell_i, b'_{i(\ell)})\in M' \}$ is a perfect matching in $G$ with exactly $k$ red edges. 

This fact and Claims~\ref{claim:wha1} and \ref{claim:wha2} imply that if we have a polynomial-time algorithm for \nearoptone, then we can solve \exm\  in polynomial-time as follows. 
%If an instance $I$ of \exm\ admits a perfect matching with exactly $k$ red edges, then by a deterministic polynomial-time algorithm that solves \nearoptone\ can find a popular solution in $\Mk$ that must be perfect and has weight exactly $k'$, and yields a solution to $I$. 
%Therefore, we can decide whether $I$ is a yes-instance or not as follows. 
Given an instance $I$ of \exm, we first construct the corresponding instance $I'$ and run an algorithm for \nearoptone\ on it. If the algorithm concludes that $I'$ has no solution, then we conclude that $I$ is a no-instance of \exm. If the algorithm returns a solution $M'$ of $I'$, then we check whether the matching in $G$ induced from $M'$ is a solution of $I$. If so, then $I$ is a yes-instance and otherwise we conclude that $I$ is a no-instance. 
By Claims~\ref{claim:wha1} and \ref{claim:wha2}, if $I$ is a yes-instance, then this algorithm correctly finds a solution of $I$. 
If $I$ is a no-instance, then it has no solution, and hence the algorithm must conclude that it is a no-instance.
\end{proof}

\begin{remark}
We can show analogously that ``\exm-hardness'' remains to hold even for the case with strict preferences, if the objects can have nonnegative integer capacities. To see this, observe that the reduction in the proof of Theorem~\ref{thm:hardone} constructs an instance of \nearoptone\ in which all ties are of the form $(x_i^1\ x_i^2\cdots x_i^{d_i-1})$. Hence, by replacing $d_i-1$ objects $x_i^p~(p\in [d_i-1])$ with an object $x_i$ with capacity $d_i-1$, we can transform the instance to the one without ties and with capacities.  
\end{remark}

From a proof analogous to that of Theorem~\ref{thm:hardone}, it also follows that the popular near-optimal-matching problem  with a general weight function $w:E\to\R$, denoted as \nearopt, is at least as hard as \exmw\cite{el2023exact} described below, for which not even a randomized polynomial-time algorithm is known. %and is believed to be NP-hard.   

\begin{problem}[\exmw]
Given a bipartite graph $G=(A,B;E)$ with each edge colored red or blue, a weight function $w :E\to \mathbb{R}$, an integer $k$ and a number $W\in \mathbb{R}$, determine the existence of a perfect matching $M$ with exactly $k$ red edges and with weight at least $W$.
\end{problem}

\begin{theorem}
A deterministic polynomial-time algorithm for \nearopt\ implies a deterministic polynomial-time algorithm for \exmw.
\end{theorem}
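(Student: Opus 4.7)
The plan is to adapt the reduction in the proof of Theorem~\ref{thm:hardone}, keeping the combinatorial structure of the auxiliary graph $G'=(A',B';E')$ and the preferences exactly as there, and modifying only the weight function and the threshold. Given an instance $(G,w,k,W)$ of \exmw, let $L$ be a sufficiently large positive number, e.g., $L > |W|+n\cdot \max_{e\in E}|w(e)|+1$. Define weights $w'\colon E'\to \R$ by setting $w'(a^\ell_i,b'_{i(\ell)})=L+w(a_i,b_{i(\ell)})$ on red-type edges in $\Ered$, $w'(a^\ell_i,b'_{i(\ell)})=w(a_i,b_{i(\ell)})$ on blue-type edges in $\Eblue$, and $w'(e)=L$ on all remaining edges of type $(c^\ell_i,o^\ell_i)$, $(c^\ell_i,x^p_i)$, or $(a^\ell_i,o^\ell_i)$. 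Set the threshold $k'=(2|E|-n+k)L+W$.

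The key computation is that for any matching $N'$ in $G'$ with $\gamma$ unmatched agents, and with $r'\coloneqq |N'\cap \Ered|$ and $b'\coloneqq |N'\cap \Eblue|$, letting $N_G$ denote the matching in $G$ induced by $N'$, one obtains $w'(N')=(2|E|-\gamma-b')L+w_G(N_G)$. Hence $w'(N')\ge k'$ forces $b'\le n-k-\gamma$, provided $L$ dominates $|W-w_G(N_G)|$. Symmetrically, since there are only $2|E|-n$ objects of $o$- or $x$-type, one also derives $r'\ge k$. When $N'$ is perfect with $r'=k$, the inequality $w'(N')\ge k'$ collapses to $w_G(N_G)\ge W$, which is precisely the \exmw\ condition.

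With these inequalities in hand, the popularity arguments of Claims~\ref{claim:wha1} and~\ref{claim:wha2} transfer essentially verbatim. When $(G,w,k,W)$ is a yes-instance of \exmw, the matching $M'$ obtained from a solution as in the proof of Theorem~\ref{thm:hardone} lies in $\Mk$ and satisfies $\Delta(M',N')\ge 0$ for every $N'\in\Mk$; conversely, any popular $N'\in\Mk$ must be perfect with $r'=k$, because otherwise the vote comparison with $M'$ becomes strictly positive exactly as in the proof of Claim~\ref{claim:wha2}. The projection of such an $N'$ then yields a perfect matching of $G$ with exactly $k$ red edges and weight at least $W$. Consequently, given a polynomial-time algorithm for \nearopt, we decide \exmw\ by constructing $I'$, invoking the algorithm, and checking whether the returned matching projects to a solution of $I$.

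The main subtlety lies in the choice of $L$: it must be large enough that the $L$-terms in $w'(N')$ dominate any variation in the original weights, thereby decoupling the count constraint $r'\ge k$ from the weight constraint $w_G\ge W$. Once this separation is secured, the popularity accounting is essentially identical to that in the proof of Theorem~\ref{thm:hardone}, since the graph, the preferences, and the feasible-pairing computations remain unchanged.
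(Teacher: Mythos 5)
Your proposal is correct and follows essentially the same route as the paper: both reuse the gadget construction and vote-counting of Theorem~\ref{thm:hardone} unchanged, and both add a sufficiently large constant ($L$ in your version, $nR$ in the paper's) to the red edges so that a single weight threshold simultaneously enforces the red-edge count $r'\ge k$ (and, at equality, the original weight bound $w_G\ge W$). Your write-up is in fact more explicit than the paper's about where the large constant is placed on the edges of $G'$ and how the threshold $(2|E|-n+k)L+W$ decouples the two constraints, but the underlying argument is the same.
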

\begin{proof}
    Take the edge-colored, weighted bipartite graph, where we want to find a perfect matching with exactly $k$ red edges of weight at least $W$. 
    We create a new weight function as follows. Let $R$ be larger than any weight. Then, we add $nR$ to the weight of each red edge and set the minimum required weight to be $W + knR$. Here, any perfect matching of weight at least $W+nkR$ has to contain at least $k$ red edges. 

    By using the same reduction as Theorem~\ref{thm:hardone}, we can show that if there is a perfect matching with exactly $k$ red edges and weight at least $W$, then it must be popular among the matchings with weight at least $W+nkR$ (only a matching corresponding to one with less red edges could dominate it) and reversely, in this case, any popular matching among the ones with weight at least $W+nkR$ must contain exactly $k$ red edges and give a matching of weight at least $W$. 
\end{proof}

\subsection{Two-sided Preferences Model (Proof of Theorem~\ref{thm:hardness-2-sided-allone})}

In this section, we show the NP-hardness of the following problem.

\begin{problem}[\kpm]
Given a bipartite graph $G=(U,W;E)$ with strict preferences $\{\succ_v\}_{v\in U\cup W}$ and a number $k$, determine the existence of a matching $M$ such that $|M|\ge k$ and $\Delta (M,N)\ge 0$ for any matching $N$ with $|N| \ge k$.
\end{problem}

We show the hardness of \kpm\ by a reduction from the following NP-hard problem.
\begin{problem}[\expm]
Given a bipartite graph $G=(U,W;E)$ with strict preferences $\{\succ_v\}_{
v\in U\cup W}$ that admits a complete popular matching and a popular matching of size at most $|U|-2$, determine the existence of a popular matching of size $|U|-1$.
\end{problem}

\begin{theorem}[Faenza--Kavitha--Powers--Zhang \cite{faenza2019popular}]
    \expm\ is NP-complete.
\end{theorem}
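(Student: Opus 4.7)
My plan is to prove the theorem in two parts: membership in NP, and NP-hardness. For membership, given a candidate matching $M$ with $|M| = |U|-1$, I would verify popularity in polynomial time via the standard dual-witness characterization of popularity in bipartite graphs with strict preferences: $M$ is popular if and only if there exists a labeling $\ell : U \cup W \to \{-1, 0, +1\}$ with unmatched vertices receiving $-1$, every matched edge $uw$ satisfying $\ell(u) + \ell(w) = 0$, and every non-matching edge $(u,w)$ satisfying a sign inequality determined by whether $u$ and $w$ prefer their $M$-partners to each other. This feasibility condition reduces to a polynomial-size linear program and can be checked in polynomial time. Since $|M|=|U|-1$ is checked directly, \expm\ is in NP.

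For NP-hardness, I would reduce from a standard NP-complete problem such as 3-SAT. Given a formula $\varphi$, the plan is to construct a bipartite instance $G=(U,W;E)$ with strict preferences on both sides such that (i) the spectrum of popular matching sizes unconditionally contains $|U|$ and at least one value $\leq |U|-2$, placing $G$ in the promise class of \expm, and (ii) a popular matching of size exactly $|U|-1$ exists if and only if $\varphi$ is satisfiable. The graph would combine three types of gadgets. A variable gadget for each $x_i$ is a small bipartite cycle whose popular configurations come in two flavors, encoding the truth value of $x_i$ by which specific auxiliary vertex of the gadget is left unmatched. A clause gadget for each $C_j$ couples the three literal copies in $C_j$ so that, under any popular matching of the target size, at least one literal must settle into its \emph{true} flavor. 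Finally, a padding component supplies a canonical complete popular matching and a canonical popular matching of size $|U|-2$ via fixed ``fallback'' edges whose strong two-sided preference alignment makes them undisturbed by the variable/clause subgraphs; this guarantees the promise conditions independently of $\varphi$.

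Correctness of both directions relies on the dual-witness characterization. From a satisfying assignment I would exhibit an explicit level function $\ell$ witnessing popularity of the intended size-$(|U|-1)$ matching. Conversely, from a hypothetical popular matching of size $|U|-1$ I would read off, from the unmatched vertex pattern inside each variable gadget, a truth assignment, and then argue that the local level constraints imposed on each clause gadget cannot be simultaneously satisfied unless at least one literal of every clause is true. The main obstacle I expect is the global control of popularity: popularity is not a local property, so the gadgets must be designed so that no cross-gadget exchange produces a blocking configuration defeating the intended witness, while at the same time preserving the unconditional existence of both a complete popular matching and a popular matching of size $\leq|U|-2$. Balancing these two promise conditions with the hardness encoding, and verifying everything via the level-function LP duality, is where the bulk of the technical effort lies.
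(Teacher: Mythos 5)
The statement you are trying to prove is not proved in this paper at all: it is imported as a black-box citation to Faenza, Kavitha, Powers, and Zhang~\cite{faenza2019popular}, and the present paper simply uses it (together with Corollary~\ref{cor:extragadget}) as the source of hardness in the reduction for Theorem~\ref{thm:hardness-2-sided-allone}. So there is no ``paper's own proof'' to measure your attempt against; the only fair comparison is to the actual content of your write-up.

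Read on its own terms, your proposal is not a proof but a proof \emph{plan}. The NP-membership argument via the level-function (dual witness) LP characterization of popularity is sound and is indeed the standard tool here, so that half is fine. The hardness half, however, is entirely schematic: you describe ``variable gadgets,'' ``clause gadgets,'' and a ``padding component'' only by the properties you want them to have, never by an explicit graph and preference lists, and you concede in your last paragraph that the central difficulty --- ensuring that no cross-gadget exchange creates a more popular matching while simultaneously certifying the two promise conditions (existence of a size-$|U|$ popular matching \emph{and} of one of size at most $|U|-2$ for \emph{every} produced instance, not just satisfiable ones) --- is ``where the bulk of the technical effort lies'' and is left undone. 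For a problem like this the gadget design and the global popularity analysis \emph{are} the theorem; without them there is nothing to check. In particular, nothing in your sketch rules out that the padding component interacts with the variable/clause subgraphs through second-choice edges in a way that either destroys the complete popular matching or manufactures an unwanted size-$(|U|-1)$ one. I would recommend either carrying out the construction in full detail and verifying both directions with explicit level functions, or, as the paper does, citing~\cite{faenza2019popular} and stating the result as a black box.
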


\begin{corollary}\label{cor:extragadget}
    There is a bipartite graph $H=(U,W;E)$ and strict preferences $\{\succ_v\}_{v\in U\cup W}$, such that $|U|=|W|=n/2$, there is a popular matching of size $|U|$, there is a popular matching of size at most $|U|-2$, but there is no popular matching of size $|U|-1$.
\end{corollary}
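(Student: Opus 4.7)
The plan is to derive the corollary as an immediate consequence of the NP-completeness of \expm. Observe that \expm\ is a decision problem whose input is promised to admit both a complete popular matching and a popular matching of size at most $|U|-2$, and which asks whether a popular matching of size $|U|-1$ exists. If every input satisfying this promise also admitted a popular matching of size $|U|-1$, then \expm\ would be decidable in constant time by outputting ``yes,'' and a reduction from any non-trivial NP-hard problem (e.g.\ SAT) would then force that problem to also have only yes-instances---a contradiction. Hence a no-instance $H$ of \expm\ must exist, and by the very definition of \expm\ this $H$ is a bipartite graph with strict preferences admitting a popular matching of size $|U|$, a popular matching of size at most $|U|-2$, and no popular matching of size $|U|-1$; this is essentially the combinatorial content demanded by the corollary.

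To realize the balance $|U|=|W|=n/2$, I would either inspect the NP-hardness reduction of Faenza--Kavitha--Powers--Zhang (which typically produces balanced bipartite instances, so that the condition is automatic) or pad the instance $H$ directly. Since a complete popular matching forces $|U|\le|W|$, if $|U|<|W|$ I would fix any complete popular matching $M^*$ of $H$ and, for each vertex $w\in W$ uncovered by $M^*$, attach a fresh pendant $u_w$ whose only edge is to $w$, with $u_w$ and $w$ placing each other at the top of their preference lists. These pendant pairs are forced in every popular matching of the padded graph (a matching omitting such a pair is strictly dominated by the matching that inserts it), so popular matchings of the padded graph correspond bijectively, via removing the pendants, to popular matchings of $H$ that leave the earmarked vertices unmatched, with sizes shifted by the constant $|W|-|U|$.

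The main obstacle is to ensure the witness popular matching of size at most $|U|-2$ in $H$ can itself be chosen to avoid the earmarked vertices, so that it lifts through the padding. I expect to handle this either by a judicious choice of $M^*$ (selecting the earmarked vertices as those simultaneously uncovered by compatible extensions of the two witness matchings) or, more cleanly, by invoking the balanced form of the NP-hardness reduction and bypassing the padding altogether. With either approach in place, the corollary follows.
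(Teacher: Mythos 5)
Your proposal is correct and takes essentially the same route as the paper: the paper states this corollary with no explicit proof, relying precisely on your core observation that an NP-hard (promise) problem must possess a no-instance — since a Karp reduction from SAT would otherwise map SAT's no-instances nowhere — and such a no-instance is exactly the graph $H$ demanded. Your additional padding construction for enforcing $|U|=|W|$ is not needed (the paper, like your primary route, takes the balance for granted from the Faenza--Kavitha--Powers--Zhang construction), and you were right to flag rather than gloss over the unresolved lifting issue in that fallback.
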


\newcommand{\specedge}{special edge}

First we introduce a notion of an $\ell$-\emph{\specedge}, for $\ell \in \mathbb{N}$. An $\ell$-\specedge\ $e(v_i,v_j)$ consists of a path between $v_i$ and $v_j$ with $2\ell$ inner vertices $v(e)_1^1,v(e)_1^2,\dots, v(e)_{\ell}^1,v(e)_{\ell}^2$ (in that order) such that for each $i\in [\ell]$, $v(e)_i^1$ and $v(e)_i^2$ consider each other best and their other neighbor second (see Figure~\ref{fig:noinstance} for example).
In an $\ell$-\specedge, let us call the vertices within the special edge the \emph{connector vertices} and the endpoints of the special edge the \emph{corner vertices.}

We say that $M$ is \kpop\, if $|M|\ge k$ and $\Delta (M,N)\ge 0$ for any matching $N$ with $|N|\ge k$.

 First of all, we can assume that each connector vertex is matched in any \kpop\ $M$, because they are the first choice of some other vertex, hence if they remain unmatched, then there is a matching of the same size that dominates $M$ if we let those agent switch. Hence, for each $\ell$-special edge $e(v_i,v_j)$, we have two possible configurations, either $(v(e)_h^1,v(e)_h^2)\in M$ for all $h\in [\ell]$, in which case we say that $e(v_i,v_j)$ is \emph{not included in $M$ (or just $e(v_i,v_j)\notin M$)} or $(v_i,v(e)_1^1),(v(e)_1^2,v(e)_2^1),\dots,(v(e)_{\ell}^2,v_j)\in M $, in which case we say that $e(v_i,v_j)$ is \emph{included in $M$ (or just $e(v_i,v_j)\in M$)}. When we refer to the \emph{addition or deletion of a special edge $e(v_i,v_j)$}, we mean a switch between these two possible configurations of $e(v_i,v_j)$.

We provide an instance of \expm\ that admits no solutions (see Figure~\ref{fig:noinstance}).
\begin{figure}[b!]
\begin{center}
\includegraphics[scale=0.45]{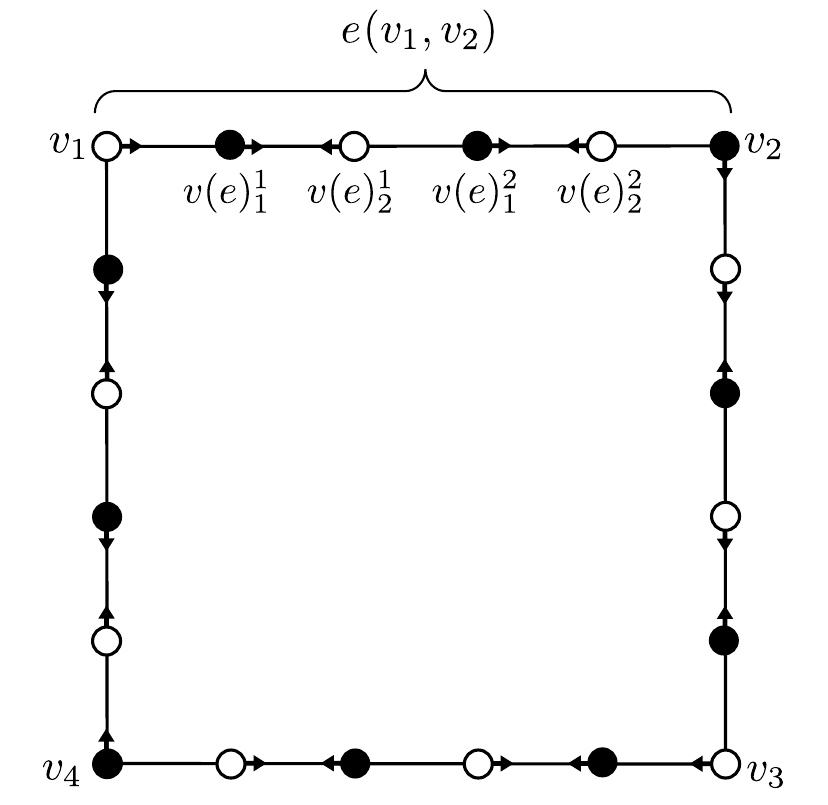}
\caption{\small An instance in Lemma~\ref{lem:cyclelemma} for the case $\ell=2$ and $K=2$. 
%It consists of four $\ell$-special edges. 
White and black colors represent the bipartition of the vertex set. A small arrow leaving from each vertex represents the edge that vertex most prefers. Four vertices $v_1, v_2, v_3, v_4$ are corner vertices and all other vertices are connector vertices.}
\label{fig:noinstance}
\end{center}
\end{figure}
\begin{lemma}\label{lem:cyclelemma}
    Let $C=\{ v_1,v_2, \dots, v_{2K}\}$ be a cycle of length $2K$, where each edge $(v_i,v_{i+1})$ is an $\ell$-\specedge\ (so altogether, $C$ has $(2\ell +1)2K$ vertices) and each $v_i$ prefers $v_{i+1}$ to $v_{i-1}$ (i.e. $v_i$ prefers the adjacent neighbor in the special edge $e(v_i,v_{i+1})$ to that in $e(v_{i-1}, v_i)$) for $i\in [2K]$. Let $2\ell K<k<(2\ell +1)K$ be an integer and suppose $\ell \ge 2$. Then, there is no \kpop\ in this instance.
\end{lemma}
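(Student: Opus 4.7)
The plan is to exhibit, for every matching $M$ with $|M| \geq k$, a matching $N$ with $|N| \geq k$ such that $\Delta(M, N) < 0$, thereby showing no \kpop\ can exist. First I recall the preliminary observation stated just before the lemma that in any \kpop\ $M$ every connector vertex must be matched; hence each $\ell$-special edge is either fully included or fully not included in $M$. Consequently $M$ is determined by its set $S_M$ of included special edges, and since each corner vertex lies in at most one included special edge, $S_M$ is a matching on the corner cycle of length $2K$. Writing $s = |S_M|$ and $k' = k - 2\ell K$ gives $|M| = 2\ell K + s$, so $|M| \geq k$ is equivalent to $s \geq k'$; the hypothesis $2\ell K < k < (2\ell+1)K$ translates into $0 < k' < K$.

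Next I would rule out the case $s > k'$. Removing any one included special edge produces $N$ with $|N| = |M| - 1 \geq k$; only the two corresponding corner vertices prefer $M$, while the $2\ell$ interior connectors of the removed edge prefer $N$, giving $\Delta(M, N) = 2 - 2\ell \leq -2$ since $\ell \geq 2$. Hence a \kpop\ $M$ must satisfy $s = k'$ exactly.

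The main step is the \emph{forward shift}: given an included edge $e(v_i, v_{i+1}) \in S_M$ with $e(v_{i+2}, v_{i+3}) \notin S_M$, the matching $N$ obtained from $M$ by replacing $e(v_i, v_{i+1})$ with $e(v_{i+1}, v_{i+2})$ satisfies $|N| = |M| \geq k$. The delicate part is the vote accounting: the $2\ell$ interior connectors of the newly-included special edge together with $v_i$ (now unmatched in $N$) prefer $M$, while the $2\ell$ interior connectors of the newly-excluded special edge together with $v_{i+1}$ and $v_{i+2}$ prefer $N$. Here the directional hypothesis that $v_{i+1}$ prefers its adjacent neighbor in $e(v_{i+1}, v_{i+2})$ over its adjacent neighbor in $e(v_i, v_{i+1})$ is essential; the symmetric ``backward'' shift would instead yield $\Delta(M, N) = +1$ and fail to dominate. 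Thus $\Delta(M, N) = (2\ell + 1) - (2\ell + 2) = -1$, so every available forward shift dominates $M$.

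Finally I would argue that when $s = k' < K$ at least one forward shift is always available. If none were, then for every $e(v_i, v_{i+1}) \in S_M$ one would have $e(v_{i+2}, v_{i+3}) \in S_M$; iterating this rule around the cycle of $2K$ corners forces $|S_M| = K$, contradicting $s = k' < K$. Combining all the steps, no \kpop\ exists in this instance. The principal technical obstacle is the vote accounting for the forward shift: it requires tracking six distinct groups of affected agents and leans essentially on the directional preferences of the corner vertices, so that shifting in the ``forward'' direction strictly dominates whereas shifting backward does not.
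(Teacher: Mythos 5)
Your proposal is correct and follows essentially the same route as the paper's proof: reduce to the two configurations of each special edge, rule out having too many included special edges by a deletion argument costing $2-2\ell\le -2$, and then apply the forward shift $e(v_i,v_{i+1})\to e(v_{i+1},v_{i+2})$ with the identical vote count $(2\ell+1)-(2\ell+2)=-1$. Your write-up is slightly more explicit than the paper's (pinning down $s=k'$ exactly and spelling out why a forward-shift position must exist when $0<s<K$), but the underlying argument is the same.
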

\begin{proof}
Let $2\ell K<k<(2\ell +1)K$ and suppose that there is a \kpop\ $M$. 
Then, we know that $M$ includes more than $0$, but less than $K$ special edges. The latter follows from the fact that if $M$ contains $K$ special edges, then deleting one of them still gives a large enough matching, which then dominates $M$ ($2\ell\ge 4$ connector vertices improve and 2 corner vertices get worse). Let $i\in [K]$ be such that $e(v_i,v_{i+1})\in M$ and $e(v_{i+1},v_{i+2}), e(v_{i+2},v_{i+3})\notin M$. Then, create $N$ from $M$ by deleting the special edge $e(v_i,v_{i+1})$ and adding $e(v_{i+1},v_{i+2})$. Clearly, $|N|\ge k$. Consider $\Delta (M,N)$. Then, $4\ell$ connector vertices change partners, $2\ell$ of them votes with $-1$ and $2\ell$ of them with $+1$. Furthermore, among the corner vertices, $v_{i+1},v_{i+2}$ vote with $-1$ and only $v_i$ votes with $+1$. Hence $\Delta (M,N)<0$, a contradiction. 
\end{proof}

Below is a restatement of Theorem~\ref{thm:hardness-2-sided-allone}. We show this theorem in the rest of this section.
\begin{theorem}
    \kpm\ is NP-complete.
\end{theorem}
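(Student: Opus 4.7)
The plan is to reduce from \expm. Membership in NP can be argued by observing that verifying whether a candidate matching $M$ with $|M|\ge k$ is a \kpop\ reduces to checking $\max\{\Delta(N,M):\,|N|\ge k\}\le 0$, and this is a maximum-weight matching problem with a cardinality lower bound, solvable in polynomial time by standard techniques (e.g.\ Lagrangian or parametric methods applied to weighted bipartite matching).

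Given an \expm\ instance $G=(U,W;E)$ with $|U|=n$, I will construct the \kpm\ instance $G'$ as the disjoint union of $G$, the cycle gadget $C$ from Lemma~\ref{lem:cyclelemma} with parameters $K,\ell$ chosen so that the ``no \kpop'' size range of $C$ matches up with the size profile we want to forbid, and the bipartite graph $H$ from Corollary~\ref{cor:extragadget} (or additional cycle copies) used as a rigidifier. The threshold $k$ will be chosen so that every matching of size at least $k$ in $G'$ falls into one of only a few size-profile combinations across the three components, and the intended correspondence is that a \kpop\ in $G'$ exists if and only if $G$ admits a popular matching of size $n-1$.

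For the YES direction, given a popular matching $M_G^\star$ of size $n-1$ in $G$, I will combine it with the complete popular matching in $H$ and a specific configuration in $C$ so that the total size equals $k$; verifying $\Delta(M,N)\ge 0$ against every $N$ with $|N|\ge k$ then decomposes, in each size-profile case, into popularity assertions for the component matchings. For the NO direction, I will show that any candidate \kpop\ $M$ forces $|M\cap G|=n-1$ with $M\cap G$ popular in $G$, contradicting the NO-hypothesis: the case $|M\cap G|=n$ is excluded because the size constraint pushes $M\cap C$ into the ``no \kpop'' range of Lemma~\ref{lem:cyclelemma}, producing a dominating alternative inside $C$ that lifts to $G'$; the case $|M\cap G|\le n-2$ is handled analogously using the missing popular matching of size $|U_H|-1$ in $H$ from Corollary~\ref{cor:extragadget}.

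The main obstacle will be the case $|M\cap G|=n$. Since the \expm\ promise guarantees a complete popular matching in $G$ and any two popular matchings in $G$ tie head-to-head, the $G$-component alone cannot compensate for dominance arising elsewhere. This forces the cycle gadget to carry the entire dominance argument: the ``downward pull'' identified in the proof of Lemma~\ref{lem:cyclelemma} (roughly $2\ell-2$ votes per extra included special edge, plus the $1$-vote shift between adjacent $j$-configurations) must survive the interaction with the other components. Making the parameters $K$, $\ell$, and $k$ fit together so that this dominance survives, and possibly chaining several cycle gadgets or attaching linking edges between the gadgets to prevent the complete $G$-matching from being freely combined with the best cycle configuration, is the technical heart of the construction.
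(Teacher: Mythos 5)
Your high-level plan coincides with the paper's: reduce from \expm, pad the instance with the no-instance of Corollary~\ref{cor:extragadget}, and use a long cycle of special edges as in Lemma~\ref{lem:cyclelemma}. (The NP-membership argument is fine in spirit; the paper realizes ``minimum-cost perfect matching with a cardinality lower bound on original edges'' via an LP with self-loops for unmatched vertices whose constraint matrix is totally unimodular, and you would need some such device to account for the votes of vertices matched in $M$ but unmatched in $N$.) However, the hardness reduction as you describe it has a genuine gap, and you have in fact located it yourself without resolving it: the case where $M$ induces a complete matching in $G$. If $G'$ is a disjoint union of $G$, $C$, and $H$ and the only global constraint is the lower bound $|M|\ge k$, then $\Delta(M,N)$ is additive over components and the constraint can only become \emph{slacker} when a component matching is enlarged. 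Concretely, take $M$ to be the union of a complete popular matching of $G$ (which exists by the \expm\ promise), a complete popular matching of $H$, and the configuration of $C$ containing no special edges. Each piece is popular in its own component against \emph{all} matchings of that component --- the no-special-edge configuration of $C$ beats any alternative by $2\ell-2>0$ votes per included special edge --- so $\Delta(M,N)\ge 0$ for every matching $N$ of $G'$ whatsoever, and $|M|$ exceeds whatever threshold the intended solution meets. Hence this $M$ is a \kpop\ regardless of whether $G$ admits a popular matching of the target size, and the reduction fails. A lower bound on total size cannot ``push $M\cap C$ into the no-\kpop\ range'': enlarging $M\cap G$ only relaxes the pressure on $C$.

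What is missing is precisely the linking mechanism you allude to but do not supply. The paper attaches to \emph{every} vertex $v$ of $G$ a private gadget $H_v$ (a $4$-cycle of $2$-special edges joined to $v$ by an edge $(v,v_1)$ that is $v$'s worst and $v_1$'s best choice). These gadgets convert ``$v$ unmatched in $G$'' into ``one additional uncovered vertex in $H_v$,'' so the global covering budget imposed by $k$ (at most $K+2$ uncovered vertices, $K$ of which are consumed by the corner vertices of $C$) forces the induced matching in $G$ to have size exactly $\frac{n}{2}-1$: if it were complete, some included $2$-special edge could be deleted to yield a still-feasible matching that dominates $M$ by $2$ votes (Claim~\ref{claim:properties}), and if it were too small, Claim~\ref{claim:keyclaim} supplies a dominating witness. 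The parameters are then tuned so that the vote deficit of $-4$ per gadget that loses a corner vertex is exactly compensated by the $+8$ gained per $5$-special edge inserted into $C$ to restore cardinality, which is what makes both directions of the equivalence go through. Without this per-vertex machinery, or an equivalent coupling between $|M\cap G|$ and the feasibility of the rest, the construction you outline does not establish hardness.
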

\begin{proof}
First we show that the problem is in NP. Let $M$ be an arbitrary matching. It is easy to check whether $|M|\ge k$. Deciding if there is a matching $N$ with $|N|\ge k$ and $\Delta (M,N)<0$ can be done as follows. First, let $E'$ be the original edge set $E$ extended by self-loops $(v,v)$ for each vertex $v$, i.e., $E'=E\cup \{\,(v,v): v\in U\cup W\}$. Define a cost function $c$ over $E'$ such that $c(e)=\vote_u(M,e)+\vote_w(M,e)$ for $e=(u,w)\in E$ and $c(v,v)=\vote_v(M,\emptyset)$ for a self-loop at $v$. Here, $\vote_u(M,e)$ is defined to be $+1$ if $M(u)\succ_u e$, $0$ if $M(u)=e$ and $-1$ otherwise. In this graph, the cost of any perfect matching $N'$ (i.e. it covers every vertex, but it can use the self-loops) is exactly $\Delta (M,N)$, where $N$ is obtained by deleting the self-loops from $N'$. Hence, verifying if $M$ is $k$-popular is equivalent to deciding if there is a perfect matching with negative cost that uses at least $k$ original edges. 

Consider the incidence matrix $A$ of the original bipartite graph. If we add an all $1$ row to the bottom of this matrix, then it remains a Totally Unimodular (TU) matrix (i.e. each subdeterminant is $0,\pm 1$). This fact follows from the characterization of Ghouila and Houri \cite{ghoulia1962characterisation}, which states that an integer matrix is TU, if and only if for any subset of the rows, there exists an equitable 2-coloring, meaning that we can partition these rows such that the sum of the elements in the two color classes differs by at most 1 in each column. For this matrix, if the last row is not included, then we can color the rows according to the two classes of the bipartite graph and if the last row is included, then we can color the last row blue and every other row red, the sum of the red rows will always be $0,1$ or $2$, as there are at most two $+1$-s in any column, and so this differs by at most $1$ from $1$. 

Consider the linear program
\begin{align*}
\mbox{Min.} \quad& \sum_{e\in E'} c(e)\cdot x(e) \quad \\ 
\notag \text{s.t.}
\quad &\sum_{e\in \delta_{E'}(v)}x(e)=1 \quad (v\in U\cup W), \\ 
&~~\sum_{e\in E}~x(e)\ge k,
\end{align*}
%$\min cx $ such that $\sum_{e\in d(v)}x(e)=1$ for all $e\in E'$ and $\sum_{e\in E}x(e)\ge k$. 
where $\delta_{E'}(v)$ is the set of edges in $E'$ incident to $v$ for each $v\in U\cup W$.
The constraint matrix of this is obtained from the above mentioned TU matrix by adding new columns for the self-loops with only one nonzero element, which is $+1$, so it remains TU. Therefore, as the bounding vector is integral, there exists an integer optimal solution. Finally, it is easy to see that an optimal integer solution is exactly a perfect matching of minimum cost containing at least $k$ original edges, so we can decide the existence of such a matching in polynomial time by solving the linear program. 

To show NP-hardness, we reduce from \expm. Let $I=(G_1=(U_1,W_1;E_1),\succ_1=\{\succ_v\}_{v\in U_1\cup W_1})$ be an instance of \expm. Let $(G_2,\succ_2)$ be a no-instance of \expm\ as in Corollary \ref{cor:extragadget}. Let $(G,\succ) = (G_1\cup G_2, \succ_1\cup \succ_2)$, where $G_1\cup G_2$ is just the disjoint union of $G_1$ and $G_2$. 

\begin{claim}\label{claim:keyclaim}
    $(G,\succ)$ is a yes-instance of \expm\ if and only if $(G_1,\succ_1)$ is a yes-instance of \expm. Furthermore, if $(G,\succ)$ is a no-instance of \expm, then for any matching $M$ of size $\frac{n}{2}-1 = \frac{|V(G)|}{2}-1$ there is a matching $N$ of size at most $\frac{n}{2}-2$ with $\Delta (M,N)<0$.
\end{claim}
\begin{proof}
    Let $M_1$ be a popular matching of size $\frac{n_1}{2}-1=\frac{|V(G_1)|}{2}-1$ in $G_1$. Extend $M_1$ with a complete popular matching of $G_2$. It is clear that this gives a popular matching in $G$ of size $\frac{n}{2}-1$. In the other direction, suppose that $M$ is a popular matching of size $\frac{n}{2}-1$ in $G$. If $M$ gives a complete matching in $G_1$, then it gives one with size $\frac{n_2}{2}-1$ in $G_2$, so $M$ is not popular. Otherwise, it gives a matching of size $\frac{n_1}{2}-1$ in $G_1$, which must be popular.

    Suppose that there is no popular matching of size $\frac{n}{2}-1$ in $G$ and let $M$ be a matching of size $\frac{n}{2}-1$. Then, $M_i=M\cap E[G_i]$ is not popular for some $i\in \{ 1,2\}$. Take a matching $N_i$ that dominates $M_i$ in $G_i$ and a minimum size popular matching $N_{3-i}$ in $G_{3-i}$. By the properties of $G_1,G_2$ this matching has size at most $\frac{n}{2}-2$ and dominates $M$ as desired.
\end{proof}

We proceed with the construction. We keep a copy of $(G\succ )$. Then, for each vertex $v\in V(G)$, we add a gadget $H_v$, that is a cycle of four $\ell$-\specedge s $e(v_1,v_2),e(v_2,v_3),e(v_3,v_4),e(v_4,v_1)$ with $\ell = 2$, such that each $v_i$ prefers the neighbor in the special edge $e(v_i,v_{i+1})$. Furthermore, we add an edge $(v,v_1)$ between $v$ and $H_v$ for each $v\in V(G)$ that is considered worst for $v$ and best for $v_1$.

 Let $K=100n$.
Finally, we add a disjoint cycle $C$ of $K$ $\ell$-\specedge s for $\ell =5$ with corner vertices $\{ u_1,\dots, u_K\}$ such that $u_i$ prefers the neighbor in the special edge $e(e_i,e_{i+1})$. Denote the instance obtained by $(G',\succ')$. Since the numbers of vertices in $G$, $\cup_{v\in V(G)}H_v$, and $C$ are $n$, $20n$, and $11K$, respectively, $G'$ has $11K+21n$ vertices.

Finally, let the threshold of the minimum size constraint for matchings be $k\coloneqq 5K+\frac{21n}{2}-1=510.5n-1$ (note that $n=V(G)$ is even as $|U|=|W|=n/2)$. In other words, at least $10K+21n-2$ vertices must be covered and at most $K+2$ vertices can remain uncovered.
This completes the construction of an \kpm\ instance. We define $\vote_v'(\cdot,\cdot)$ from $(G',\succ')$ in the same manner as before.

\medskip
In the rest, we show that $(G,\succ)$ is a yes-instance of \expm\ if $((G', \succ'),k)$ is a yes-instance of \kpm\ (Claim~\ref{clm:kpop_to_exact})  and that the other direction also holds (Claim~\ref{clm:exact_to_kpop}). Together with Claim~\ref{claim:keyclaim}, they complete the proof.
We start with some important observations. 
\begin{claim}\label{claim:properties}
Suppose that $M'$ is a \kpop\ in $(G',\succ ')$. Then, the following statements hold.
   \begin{enumerate}
       \item $M'$ contains no \specedge s of $C$.
       \item $M'$ induces a matching of size $\frac{n}{2}-1$ in $G$.
   \end{enumerate}
\end{claim}
\begin{proof}
1. Let us suppose that $M'$ contains a \specedge\ of $C$. Then, by Lemma \ref{lem:cyclelemma}, we have that $M'$ must contain exactly $K/2=50n$ special edges. Hence, the size of $M'$ is at least $5\times \frac{K}{2}+6\times \frac{K}{2}=550n$. Therefore, if we remove a special edge from $M'$ in $C$, the new matching $N'$ is still large enough, but $\Delta (M',N')=-8$, contradiction (10 connector vertices vote with $-1$ and only 2 corner vertices with $+1$).

2. By the first statement, we have that $M'$ contains no special edges from $C$, hence in $G'\setminus C$, at most 2 vertices can remain uncovered. As each $v\in V(G)$ is a best choice of some vertex, we can assume that all of them are covered. Suppose there is no uncovered vertex in $G'\setminus C$. Then, there must be a 2-\specedge\ included in $M'$ which we can delete and get a matching $N'$ still large enough, but with $\Delta (M',N')=-2$. Hence, there are exactly 2 uncovered vertices. 

Suppose that the two uncovered vertices are in the same gadget $H_v$. Then, $(v,v_1)\notin M$ and in the cycle $H_v$, only one 2-\specedge\ is included, contradiction to Lemma \ref{lem:cyclelemma}. 

Therefore, we get that the two uncovered vertices are in different $H_v$ gadgets. For each $v\in V(G)$ with $(v,v_1)\in M'$, there must be at least one uncovered vertex in $H_v$. Also, $H_v$ has exactly one uncovered vertex only if $(v,v_1)\in M'$. Then, we obtain that $M'$ gives a matching of size $\frac{n}{2}-1$ in $G$ as desired. 
\end{proof}

\begin{claim}\label{clm:kpop_to_exact}
    If there is a \kpop\ $M'$ in $(G',\succ')$, then there is a popular matching $M$ of size $\frac{n}{2}-1$ in $(G,\succ)$.
\end{claim}
\begin{proof}

Suppose that there is a \kpop\ $M'$. By Claim \ref{claim:properties}, this gives a matching of size $\frac{n}{2}-1$ in $G$. Suppose for the contrary that $M$ is not popular. Then, by Claim \ref{claim:keyclaim}, there is a matching $N$ that dominates $M$ with size at most $\frac{n}{2}-2$.

We create a matching $N'$ in $G'$ as follows. We first set $N'=(M'\setminus M)\cup N$ (which is not necessarily a matching at this moment). We have the following three cases for each $v\in V(G)$.

(i) If $v\in V(G)$ is uncovered in $N$, but was covered in $M$, we add $(v,v_1)$ to $N'$ and delete the \specedge\ in $M'$ adjacent to $v_1$. In this case, %the sum of votes (for $M'$ compared to $N'$) in $H_v$ is $-4$ 
$\sum_{v\in V(H_v)}\vote_v'(M',N')=-4$ and the number of uncovered vertices increases by $1$ (in $N'$ compared to $M'$).

(ii) If $v$ is covered in $N$, but it was uncovered in $M$, then we delete $(v,v_1)$ and include the \specedge\ in $H_v$ that was included in $M'$ and also another \specedge\ (to $v_2$ or $v_4$) which is now possible. In this case $\sum_{v\in V(H_v)}\vote_v'(M',N')=+4$ and the number of uncovered vertices decreases by $1$.

(iii) Otherwise, if $v$ is covered or uncovered in both $M$ and $N$, then we keep the edges of $M'$ in $H_v\cup \{ v\}$. In this case, the sum of votes in $H_v$ is $0$ and the number of uncovered vertices stays the same. 

Finally, we add some 5-\specedge s in $C$, such that $N'$ has size at least $k$.

As $N$ had size at most $\frac{n}{2}-2$, there are at least as many occurrences of case (i) as of case (ii). Let $f$ be the difference between them ($f$ is even). Then, we added $f/2$ \specedge s in $C$ to $N'$. Hence, the sum of votes in $V(G')\setminus V(G)$ is $-4f + 8f/2=0$.

Finally, it is easy to see that for $v\in V(G)$, $\vote_v(M,N)=\vote_v'(M',N')$, so $N'$ has size at least $k$ and dominates $M'$, contradiction.    
\end{proof}

\begin{claim}\label{clm:exact_to_kpop}
    If there is a popular matching $M$ of size $\frac{n}{2}-1$ in $(G,\succ )$, then there is a \kpop\ $M'$ in $(G',\succ')$.
\end{claim}
\begin{proof}
Let $M$ be a popular matching in $G$ of size $\frac{n}{2}-1$. We create a matching $M'$ as follows. We add no \specedge s in $C$. We add every edge of $M$. Then, if $v\in V(G)$ is covered in $M$, then we add \specedge s $e(v_1,v_2),e(v_3,v_4)$ and otherwise we add $(v,v_1)$ and \specedge\ $e(v_3,v_4)$. Then, $M'$ covers all but $K+2$ vertices, so it is large enough. 
We claim that $M'$ is a \kpop. 

Suppose for the contrary that some matching $N'$ with $|N'|\ge k$ dominates $M'$. 

First we show that we can assume that $N'$ covers all connector vertices. Let $N'$ be a matching with $\Delta (M',N')<0$ that covers the most connector vertices. Suppose for the contrary that some  connector vertex $v(e)_{i}^j$ is left unmatched in $N'$. Assume $j=1$ because the other case can be shown similarly. Consider a sequence $v(e)_i^1,v(e)_i^2,v(e)_{i+1}^1,v(e)_{i+1}^2,\dots$ of vertices
on this special edge and let $u$ be the first $v(e)_{h}^2$ vertex that is unmatched in $N'$. If there is no such a vertex, let $u$ be the terminal corner vertex. Let $P$ be the subpath of this special edge from $v(e)_i^1$ to $u$. Then, $P$ alternately uses edges not in $N'$ and those in $N'$. Denote by $V_P$ and $E_P$ the sets of vertices and edges on $P$, respectively.
In case $u$ is an unmatched connector vertex $v(e)_{h}^2$, $P$ is an augmenting path for $N'$ and the matching obtained as the symmetric difference $N''=N'\triangle E_P$ matches more connector vertices and satisfies $\vote_v'(M',N'')\le \vote_v'(M',N')$ for all vertices $v$, a contradiction. 
In case $u$ is the corner vertex, observe that $\sum_{v\in V_P}\vote_v'(M',N')\ge +1$, because either (i) $v(e)^1_i$ votes with $+1$ and all vertices in $V_P\setminus \{v(e)^1_i\}$ vote with $0$ or (ii) all vertices in $V_P\setminus \{u\}$ vote with $+1$. (Recall the two possible configurations of a special edge.) Let  $N''=N'\triangle E_P$. Then, 
$\vote_u'(M',N'')+\sum_{v\in V_P\setminus\{u\} }\vote_v'(M',N'')\le +1 + 0 \le \sum_{v\in V_P }\vote_v'(M',N')$. Hence, $\Delta (M',N'')\le \Delta (M',N')<0$, so $N''$ still dominates $M'$, but matches strictly more connector vertices, a contradiction.

As $M'$ covers every vertex that is not a corner vertex and has size exactly $k$, it must hold that $N'$ covers at least as many corner vertices as $M'$. 
We now claim that $\sum_{v\notin V(G)}\vote_v(M',N')\ge 0$ holds based on the following observations. 
\begin{enumerate}
\item 
In a gadget of an $M$-covered vertex $v$, if $N'$ has $l$ more uncovered corner vertices than $M'$ for some $l\in \{ 0,1,2,3,4\}$, then the sum of votes in $H_v$ is at least $-4l$.% (equality can only hold for $l=0,1$).

\item In a gadget of a non-$M$-covered vertex $v$, if $N'$ has $l$ more uncovered corner vertices than $M'$ for some $l\in \{ -1,0,1,2,3\}$, then the sum of votes in $H_v$ is at least $-4l$.

\item In the cycle $C$ (consisting of $K$ $5$-special edges), if $N'$ has $l$ more uncovered corner vertices than $M'$ for some $l\in \{ 0,-2,-4,\dots, -K\}$ (i.e., $N'$ contains $-l/2$ more $5$-special edges), then the sum of votes in $C$ is at least $-4l$.
\end{enumerate}

Combining these three observations with the fact that $N'$ covers at least as many corner vertices as $M'$, we get $\sum_{v\notin V(G)}\vote_v'(M',N')\ge 0$.

Let $N$ be the matching in $G$ induced from $N'$. For each vertex $v\in V(G)$ we have that $\vote_v(M,N)\le \vote_v'(M',N')$ (where the strict inequality holds only if neither $M$ nor $N$ covers $v\in V(G)$).

Hence, $0>\sum_{v\in V(G')}\vote_v'(M',N')\ge \sum_{v\in V(G)}\vote_v'(M',N')\ge \sum_{v\in V(G)}\vote_v(M,N)$, which contradicts the fact that $M$ is popular.  
\end{proof}

The theorem follows from Claims~\ref{claim:keyclaim},\ref{clm:kpop_to_exact}, and \ref{clm:exact_to_kpop}. 
\end{proof}

\section*{Acknowledgement}
We are grateful to  Telikepalli Kavitha for her helpful comments and providing us information on the problem solved in Theorem \ref{thm:hardness-2-sided-allone}. 
We thank anonymous reviewers for their helpful comments and suggestions.
The work was supported by the Lend\"ulet Programme of the Hungarian Academy of Sciences -- grant number LP2021-1/2021, by the Hungarian National Research, Development and Innovation Office -- NKFIH, grant number TKP2021-NKTA-62 and K143858, and by JST PRESTO Grant Number JPMJPR212B, JST ERATO Grant Number JPMJER2301, 
and  
JSPS KAKENHI Grant Numbers JP20K11699, JP24K02901, and JP24K14828. 
The first author was supported by the Ministry of Culture and Innovation of Hungary from the National Research, Development and Innovation fund, financed under the KDP-2023 funding scheme (grant number C2258525).

\bibliographystyle{plain}
\bibliography{sample}
\appendix
%\clearpage
\end{document}